\documentclass[final,nonblindrev]{informs_modified} 

\OneAndAHalfSpacedXI 

\usepackage{graphicx}
\usepackage{subfigure, epsfig}
\usepackage{natbib}
\usepackage{booktabs}
 \bibpunct[, ]{(}{)}{,}{a}{}{,}%
 \usepackage[dvipsnames]{xcolor}
\usepackage{algorithm}
\usepackage[noend]{algorithmic}
\usepackage{setspace}
\usepackage{bm}
\usepackage{tabu}
\usepackage{array}
\usepackage{multirow}
\usepackage{float}
\usepackage{comment}
\usepackage[toc,page]{appendix}
\usepackage{appendix}
\usepackage{setspace}
\usepackage{hyperref}
\usepackage{enumitem}
\usepackage{mathtools}

\TheoremsNumberedThrough     
\ECRepeatTheorems

\EquationsNumberedThrough    

\newcommand{\xhdr}[1]{\vspace{1mm} \noindent{\bf #1}}

\begin{document}


\RUNAUTHOR{Yao and Zhou}

\RUNTITLE{Fair LLM Serving with Balanced Batch Constraints}


\TITLE{Efficiency and Cost Alignment in Batched LLM Serving via Resource-Fair Scheduling}

\ARTICLEAUTHORS{%
\AUTHOR{Dayi Yao}
\AFF{University of Washington, 
\EMAIL{ydy@uw.edu}} 
\AUTHOR{Zijie Zhou}
\AFF{IEDA, HKUST, 
\EMAIL{jerryzhou@ust.hk}}
} 

\ABSTRACT{%
This paper studies a resource-allocation inefficiency in batched large language model (LLM) serving: heterogeneous requests that share a decode batch impose max-driven computational costs on one another. Because the wall-clock cost of a batch step is largely governed by the largest active KV-cache footprint, a short request co-batched with a long request can experience latency and GPU-resource consumption disproportionate to its own token workload. We formalize this phenomenon as a resource-fair scheduling problem.

We develop a mathematical scheduling model that connects within-batch resource fairness to system throughput. The proposed fairness constraint bounds the disparity in decode progress, equivalently KV-cache footprint, among co-batched requests. Based on this model, we design the Insert-Short-Jobs-with-Limit (ISJL) algorithm, a parameterized hybrid batching policy. We prove that ISJL achieves a global competitive-ratio lower bound of $3/4$ and further characterize how the worst-case guarantee varies with the fairness budget. Empirical evaluations show that ISJL improves throughput and end-to-end latency relative to standard scheduling baselines.

We further examine the profit implications of resource-fair scheduling under the token-metered pricing convention used by commercial LLM APIs. In this environment, for a fixed accepted workload, revenue is additive in token counts and is independent of the scheduling policy; scheduling affects profit through inference cost. We prove a cost decomposition showing that inference cost consists of an intrinsic workload term, a time/overhead term, and a schedule-induced batching externality term. The fairness constraint bounds the batching externality, while the ISJL throughput guarantee controls the time/overhead component. Numerical experiments show that ISJL occupies a favorable middle ground between FCFS, which has large batching externalities, and LJF, which is cost-aligned but sacrifices batching flexibility. Thus, ISJL provides a bi-criterion scheduling policy: it maintains high throughput while aligning max-driven batch cost with token-metered revenue.}
%

\KEYWORDS{scheduling, computational fairness, LLM serving}

\maketitle

%


\section{Introduction} \label{sec:intro}

Large language models (LLMs) \citep{brown2020language,chowdhery2023palm,openai2023gpt} now underpin a wide range of information \citep{anthropic2023claude,chatgpt2023,openai2023gpt}, assistants \citep{codewhisperer2023,githubcopilot2023}, and decision-support applications \citep{huang2025orlm,cascella2023evaluating,sallam2023utility}. LLM inference is the dominant, recurring cost for operators: requests arrive online and are served on GPU workers that exploit batching for throughput. The mechanics are distinctive. Each request is processed in two phases \citep{kwon2023efficient}: a \emph{prefill} pass over the input tokens, followed by an \emph{autoregressive decode} phase that generates output one token at a time. During decode, each step reuses the key–value (KV) cache from previously generated tokens; the compute and memory required per step therefore increase with a request’s accumulated context length. Modern systems \citep{kwon2023efficient,agrawal2023sarathi} explicitly manage KV memory and construct batches across heterogeneous requests to raise utilization, making batching a first-order operational lever at inference time. 

This batching lever creates a subtle externality. When a very long request and a very short request decode in the same batch, the short request “pays” latency to the long one: the per-step FLOPs and memory footprint are driven by the longest context in the batch. In effect, short jobs subsidize long jobs through shared compute and KV memory. The phenomenon is well documented in \cite{sheng2024fairness,khan2024ensuring,wei2025equinox} that measures lower GPU utilization in decode, sensitivity to sequence length, and the benefits of uniform micro-batches, but it has not been formalized as a \emph{resource fairness} problem in service operations. We take that step: rather than defining fairness through outcomes or service guarantees, we define \textbf{fairness} directly in terms of \emph{consumed computational resources} within a batch—proxied by each request’s current decode progress (i.e., number of generated tokens and associated KV cache).

\xhdr{Resource fairness as an operational batching discipline.}
The fairness notion we study is distinct from the ethical or 
equity-based fairness concepts prevalent in operations management 
and algorithm design (e.g., group or individual parity in predictive 
systems, egalitarian welfare). We study \emph{resource-consumption fairness}: 
a batching discipline requiring co-batched requests to have comparable decode 
progress, or equivalently, comparable KV-cache footprints. The purpose of this 
discipline is to limit the extent to which a request with a small resource 
footprint is served in a batch whose resource profile is determined by much 
larger requests.

This perspective leads to a cost--efficiency frontier. Tighter resource 
homogeneity improves cost alignment by reducing extreme within-batch 
heterogeneity, but it can also restrict batching flexibility. Our model 
therefore treats $\alpha$ as a resource-homogeneity budget and asks how much 
throughput can be retained while respecting that budget. This differs from the 
classical fairness--efficiency tradeoff in service operations: fairness here is 
not an outcome-parity requirement, but an operational control that limits the 
batching externality described above.

\xhdr{Resource-fair scheduling improves cost alignment under token-metered pricing.}
A second managerial implication concerns the profit consequences of scheduling under prevailing API pricing practice. Commercial LLM providers typically use token-metered contracts: model-specific rates are quoted for input, output, and sometimes cached tokens, usually per one million tokens \citep{openai_api_pricing,anthropic_api_pricing,deepseek_api_pricing,google_gemini_api_pricing}. These contracts are not decisions made by the scheduler; they form an external pricing environment. Consistent with the token-level abstraction in Section~\ref{sec:model}, we represent this environment by a single effective linear charge $p_i=p_0 o_i$ for request $i$. Once a set of requests has been accepted, total revenue $\sum_i p_i$ is fixed and additive in token counts. Scheduling affects profit through inference cost.

This creates an operational tension. Token-metered revenue is additive across requests, whereas the per-step cost of a batch is driven by the largest active resource footprint in the batch. Heterogeneous co-batching can therefore create a schedule-induced excess cost: short requests may be processed in steps whose cost is determined by longer co-batched requests. Resource-fair scheduling addresses this cost-alignment problem. By requiring co-batched requests to have comparable decode progress, or equivalently comparable KV-cache footprints, the scheduler bounds the gap between max-driven batch cost and the additive token workload measured by pricing. Thus, our economic analysis is not a pricing-design model. Rather, it evaluates the profit and cost-alignment implications of resource-fair scheduling under exogenous token-metered prices.

\subsection{Main Contributions}

\xhdr{A formal model linking fairness and throughput in LLM inference. } In Section~\ref{sec:model}, we develop a single-worker scheduling model that explicitly captures the dynamics of LLM inference. The model assumes that a decision-maker receives $n$ heterogeneous requests, each requiring sequential token generation, and that the worker can process at most $B$ tokens concurrently. To maximize throughput, the worker naturally seeks to form full batches of size $B$ at each step. However, such batching can create substantial unfairness in resource consumption. To address this inefficiency, we introduce a parametric fairness constraint indexed by $\alpha$, which bounds the disparity in resource consumption (measured by decoding progress or KV-cache footprint) among co-batched requests. This model enables a quantitative characterization of the tradeoff between throughput and fairness, and provides a tractable foundation for algorithmic design.

\xhdr{Fair and efficient scheduling algorithms with provable guarantees. } In Section~\ref{sec:alg}, we design and analyze algorithms that achieve both fairness and high throughput under the proposed constraint. We begin with a baseline policy—\emph{Longest-Job-First (LJF)}—which is universally fair for any $\alpha$. We prove in Theorem~\ref{thm:ljf-competitive} that LJF attains a competitive ratio of at most $\frac{B}{2B-1}$. We then introduce a refined parameterized algorithm, \emph{Insert-Short-Jobs-with-Limit (ISJL)}, which explicitly incorporates the fairness parameter $\alpha$. Theorems~\ref{thm:cr-isj-limited} and~\ref{thm:cr-isj-lim-gen} establish that ISJL achieves a competitive ratio of at least $0.77$ when $B=2$, and a uniform lower bound of $3/4$ for all $B \ge 2$. These results provide the first approximation guarantees for resource-fair scheduling in LLM inference settings. {In Section~\ref{sec:robust}, 
we further extend ISJL to the practically important 
setting where output token lengths are unknown at 
arrival time. The proposed \emph{Robust-ISJL} 
algorithm enforces the fairness constraint using 
only observed runtime progress, requiring no 
knowledge of true output lengths, and retains a 
competitive guarantee that degrades gracefully with 
prediction error, recovering the $3/4$ bound exactly 
when predictions are perfect.}

\xhdr{Competitive Ratio Lower Bound as a Function of $\alpha$}
In Section~\ref{sec:extension}, we sharpen the guarantees by characterizing how the worst-case competitive ratio varies with the fairness budget.
Let $\gamma=\alpha/o_1\in(0,1)$ denote the normalized fairness parameter (relative to the longest job).
For all $B\ge2$, the overall worst-case lower bound takes the single piecewise form
\[
\mathrm{CR}(\gamma)=
\begin{cases}
\dfrac{1+\gamma}{1+2\gamma}, & 0<\gamma\le\tfrac12,\\[4pt]
\dfrac{2-\gamma}{3-2\gamma}, & \tfrac12\le\gamma<1,
\end{cases}
\]
which is strictly decreasing on $(0,\tfrac12]$, strictly increasing on $[\tfrac12,1)$, and attains a unique global minimum of $\tfrac34$ at $\gamma=\tfrac12$; moreover, $\mathrm{CR}(\gamma)\to1$ as $\gamma\to0^+$ or $\gamma\to1^-$. 
This parameterized bound, obtained by analyzing over- and under-inserting adversaries, both explains when the $\tfrac34$ limit is tight and clarifies the fairness–efficiency trade-off induced by $\alpha$.

\xhdr{Profit and cost alignment under exogenous token-metered pricing.}
Section~\ref{sec:profit-cost-alignment} connects the scheduling analysis to the token-metered pricing convention used in commercial LLM APIs. We treat pricing as exogenous and use profit as an evaluation dimension of scheduling, rather than as a pricing decision. For a fixed accepted workload, linear token revenue is independent of the schedule, so profit differences across schedulers arise from inference cost. We prove a cost decomposition showing that the total inference cost of a schedule can be written as
\[
C_{\mathcal A}(\mathcal I)
=
\tau Q(\mathcal I)+cT_{\mathcal A}(\mathcal I)+\tau E_{\mathcal A}(\mathcal I),
\]
where $Q(\mathcal I)$ is an intrinsic workload term independent of scheduling, $T_{\mathcal A}(\mathcal I)$ is the makespan, and $E_{\mathcal A}(\mathcal I)$ is the schedule-induced batching externality generated by max-driven batch cost. We show that any $\alpha$-fair schedule satisfies $E_{\mathcal A}(\mathcal I)\le \alpha O(\mathcal I)$, where $O(\mathcal I)$ is the total token workload. Combining this externality bound with the throughput guarantee of ISJL yields an additive profit guarantee relative to the optimal $\alpha$-fair profit schedule. The numerical experiments then decompose profit into intrinsic workload cost, time/overhead cost, and batching-externality cost. They show that ISJL substantially reduces FCFS's externality cost while preserving much of the batching flexibility that LJF sacrifices. We also explicitly discuss the scope of the demand specification: customer acceptance is modeled as exogenous to realized waiting times, so the results quantify cost alignment under a fixed pricing environment rather than endogenous pricing or demand response.

\xhdr{System efficiency on real workloads. }
Section~\ref{sec:num} benchmarks schedulers on a fixed arrival sequence drawn from LMSYS-Chat-1M. Using throughput and Average End-to-end Latency (AEL) as metrics under $B\in\{16,32\}$, ISJL dominates these baselines, simultaneously increasing throughput and reducing AEL across tested $\alpha\in\{50,100,150\}$, with $\alpha$ offering a tunable efficiency–latency trade-off. These controlled results complement the revenue findings, showing ISJL delivers balanced performance with near top profit and markedly better efficiency on realistic, heterogeneous workloads. 

\subsection{Other Related Works}

\xhdr{LLM Inference and Fair Serving. } Enhancing the efficiency of LLM inference is essential, as it involves considerable computational and financial overhead. Most prior studies emphasize system-level engineering approaches to speed up inference in deployed environments, typically lacking theoretical performance assurances. For example, \citet{patel2023splitwise, zhong2024distserve} design architectures that decouple the processing of prompts and tokens, whereas \citet{yu2022orca, agrawal2023sarathi, agrawal2024taming} develop integrated designs that jointly process both—an architecture consistent with the setting analyzed in this work. Moreover, there are some recent works focusing on the service level fairness in LLM inference \citet{sheng2024fairness,khan2024ensuring}.

More recently, several studies have begun to establish theoretical underpinnings for LLM inference. The mathematical formulation adopted in this paper extends the framework of \citet{jaillet2025online}. \cite{wang2025llm,chen2025adaptively} improve the efficiency and robustness of this theoretical model. \citet{ao2025optimizing} investigate inference scheduling problems with multiple objectives and design algorithms that provide formal performance guarantees.

\xhdr{Scheduling Optimization. } The scheduling problem—widely explored in prior research \citep{allahverdi2008survey, chen1998review, albers2009online, mak2015appointment, kong2013scheduling}—concerns a decision-maker responsible for processing numerous incoming tasks, either sequentially or in groups. The key difficulty is to determine an efficient order and timing for execution, typically formalized as an integer optimization problem. In our setting, inspired by LLM inference, multiple jobs can be executed simultaneously in batches, connecting our formulation to batch scheduling frameworks studied in \citep{chen2008logistics, liu2015online, li2020online}. Moreover, LLM inference introduces an inherent sequential dependency—subsequent tokens cannot be processed until the current ones are completed—thereby creating precedence constraints. Related studies on scheduling under such constraints include \citep{shahout2024don, precSchedAnupam, schedPrecedence, precSchedSchabanel}. Nonetheless, these works differ fundamentally from ours, as our formulation uniquely incorporates a fairness constraint on resource usage arising specifically from the LLM inference context.

\section{Model} \label{sec:model}

In this paper, we study a scheduling problem for large language model (LLM) inference on a single computational worker. A total of $n$ prompt requests await processing. Each request $i \in [n]$ is characterized by a parameter $o_i$, where $o_i$ denotes the total number of tokens in the prompt input and corresponding model-generated output.

For each request, the chunked input technique from (\cite{agrawal2023sarathi}) processes tokens sequentially, one at a time. Once the first output token is generated, the remaining tokens follow in the same manner. Thus, each request completes within $o_i$ units of time, processing $o_i$ tokens sequentially. { Throughout the main analysis we assume $o_i$ is known to the scheduler at the 
time of admission, which is a standard assumption in the offline scheduling 
literature~\citep{jaillet2025online, wang2025llm}. In Section~\ref{sec:robust}, we remove 
this assumption and extend the model to the non-clairvoyant setting, in which 
only a prediction interval $[\ell_i, u_i]$ with $\ell_i\le o_i\le u_i$ is 
available. }

The computational worker processes requests in batches, with a maximum batch size of $B$ requests. Each batch may consist of tokens from different requests, but not multiple decoding tokens from the same request, as these are generated sequentially. After processing a batch, each request included in it advances by generating its next output token. The model uses a logical token-step abstraction: one service step advances each 
active request by one token. Thus, the makespan $T_{\mathcal A}(\mathcal I)$ 
counts the number of scheduling steps rather than calibrated GPU milliseconds. 
We adopt this abstraction deliberately. In deployed LLM serving systems, the 
wall-clock time of a batch step depends on hardware and system choices, and there is no universal 
closed-form latency function that is valid across serving architectures. The batching externality motivates the fairness 
constraint below, while the profit and cost-alignment implications of 
max-driven batch cost are developed separately in 
Section~\ref{sec:profit-cost-alignment}.

\xhdr{Fair Serving Metric. } To ensure equitable resource allocation, we impose a fairness constraint based on the number of tokens held in the key-value (KV) cache: At any time $t$, let $S^{(t)}$ denote the set of requests being processed, and let $a_i^{(t)} \in [0, o_i]$ denote the number of decoding tokens generated for request $i \in S^{(t)}$. The memory footprint of request $i$ at time $t$ is thus $a_i^{(t)}$. We require that:
\begin{equation} \label{eq:fairness}
\max_{i \in S^{(t)}} \{a_i^{(t)} \} - \min_{i \in S^{(t)}} \{a_i^{(t)} \} \leq \alpha
\end{equation}
for some fixed constant $\alpha$. 
{To motivate constraint~\eqref{eq:fairness}, it is important to understand why 
the computational cost of a batch step is driven by the \emph{longest} KV 
cache in the batch rather than by the sum or average.
At each decode step, two GPU operations dominate cost. First, the 
\emph{attention} operation for request $i$ scales with its current KV cache 
size $a_i^{(t)}$; however, since GPU cores can process attention heads in 
parallel, the wall-clock time for this operation is approximately proportional 
to $\max_{i\in S^{(t)}} a_i^{(t)}$, not the sum. Second, the \emph{feed-forward} 
layers compute a joint matrix product over all tokens in the batch, and 
the projection matrices are loaded from GPU high-bandwidth memory 
\emph{once} for the entire batch regardless of its size. Due to GPU
parallelism, the wall-clock time for this shared computation is approximately 
constant over a wide range of batch sizes \citep{le2023dissecting}. 
Both mechanisms imply that a short request co-batched with a long one pays 
latency proportional to the long request's KV footprint, not its own. This is 
the \emph{batching externality}: long jobs impose a negative external cost on 
short jobs by inflating the per-step compute time they experience. 

Constraint~\eqref{eq:fairness} directly bounds this resource heterogeneity. 
When the max--min progress gap is at most $\alpha$, all co-batched requests have 
KV caches within $\alpha$ tokens of each other, so the largest active footprint 
in the batch remains close to each request's own footprint. In this sense, 
$\alpha$ controls the schedule-induced gap between individual token progress 
and the resource profile of the batch. Section~\ref{sec:profit-cost-alignment} 
uses this observation to formalize the resulting cost-alignment implication 
under max-driven batch cost.}

{
\xhdr{Scheduling Policy.}
A \emph{scheduling policy} is an event-driven rule that, at each decision epoch $t$,
determines the active batch $S^{(t)}\subseteq[n]$ subject to: (i) the capacity
constraint $|S^{(t)}|\le B$, and (ii) the fairness constraint~\eqref{eq:fairness}.
Crucially, a policy is \emph{not} restricted to \emph{static batching}, in which
a fixed set of $B$ requests is locked together until every member finishes.
A policy may instead implement \emph{continuous batching}: whenever a request
completes and frees a slot, the policy may immediately admit a new request into the
active batch---provided the fairness constraint remains satisfied---without waiting
for any other running request to complete.
Continuous batching is the default operation mode of modern LLM serving
systems such as the First-Come-First-Serve (FCFS) scheduling framework in  Orca~\citep{yu2022orca} and vLLM~\citep{kwon2023efficient}, which will be introduced in Section \ref{subsec:example} in detail.}

{
\xhdr{Objective and Performance Metric.}
Our objective is to design a scheduling algorithm that 
maximizes \emph{throughput}, defined as the total number 
of tokens processed per unit time:
\[
\text{Throughput}(\mathcal{A};\mathcal{I}) 
= \frac{\sum_{i=1}^n o_i}{T_{\mathcal{A}}(\mathcal{I})},
\]
where $T_{\mathcal{A}}(\mathcal{I})$ is the makespan of algorithm $\mathcal{A}$ on 
instance $\mathcal{I}$, and $\sum_{i=1}^n o_i$ is 
fixed for a given instance.

Note that not every batch can process $B$ requests: 
when the fairness constraint~\eqref{eq:fairness} is 
active, it may not be possible to find $B$ requests 
whose current decode progress falls within a window 
of size $\alpha$, and the scheduler must form a 
smaller batch or admit only requests that satisfy 
the constraint. The throughput objective therefore 
captures the tension between batch fullness and 
fairness enforcement.

\begin{proposition}[NP-hardness]
\label{prop:np}
Assume $n > B$. Finding a schedule that maximizes 
throughput subject to the fairness 
constraint~\eqref{eq:fairness} is NP-hard.
\end{proposition}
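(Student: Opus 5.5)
We prove NP-hardness by a polynomial-time reduction from a strongly NP-hard partition-type problem. The choice we commit to is \textsc{3-Partition}, with \textsc{Partition} as a fallback. The reduction goes to the decision version of the problem: given $(o_i)_{i\in[n]}$, $B$, $\alpha$ and a target $T$, is there an $\alpha$-fair schedule with makespan at most $T$? Since $\sum_i o_i$ is fixed, maximizing throughput is the same as minimizing makespan.

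\textbf{Reduction.} The key observation is that with $\alpha=0$ the fairness constraint~\eqref{eq:fairness} forces all co-batched requests to have identical decode progress at every step. Hence every batch is a group of requests that start together. Such a group runs in lockstep, and a request that finishes cannot be replaced by a fresh request (progress $0$) while the others are still in progress. So a $0$-fair schedule with at most $B$ requests per batch is essentially a sequence of cohorts of size at most $B$.

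There is one subtlety: several cohorts with different progress cannot run in the same step. At each step the batch is a single cohort. However, a cohort may pause, and its members may later rejoin other requests at equal progress. Two requests at the same progress can therefore merge, and merging must be accounted for.

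\textbf{Accounting identity.} The makespan equals the number of steps. A step in which the active requests all have progress $a$ serves at most $B$ requests, each moving from $a$ to $a+1$. Therefore
\[
T \;\ge\; \sum_{a\ge 0}\left\lceil \frac{|\{i:o_i>a\}|}{B}\right\rceil ,
\]
with equality if and only if at each level $a$ the requests needing level $a$ are packed into full batches of size $B$ (except one). This is a pure lower bound, and it involves no combinatorial difficulty. So $\alpha=0$ alone does not yield hardness, and we need a positive $\alpha$.

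\textbf{Gadget with $\alpha>0$ (main obstacle).} The plan is to take $\alpha$ positive and force packing structure. We use $B$ long ``frame'' requests of length $L$ together with the \textsc{3-Partition} items. For each item $x_j$ (with $\sum_j x_j = mK$ and $K/4<x_j<K/2$) we create a request of length $x_j$. We want the short requests to occupy free slots alongside the long ones within the window $\alpha$.

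Alternatively, and this is what we would try first, use $B=2$. We design the instance so that the makespan bound $\sum_i o_i/B$ is achieved only if the machine is always full. Fullness under the window constraint then requires the short jobs to be sequenced into ``lanes'' whose lengths sum exactly to target values, which is a bin-packing condition. Concretely, each long request of length $K+\alpha$ together with a lane of short jobs chained back to back (each restarted at progress $0$) must satisfy the window constraint. Because of the window, a short job can start only while its partner's progress is at most $\alpha$, so this forces a restart structure.

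Getting this window bookkeeping right is exactly where we expect the difficulty. If it becomes unwieldy, the fallback is a direct reduction from \textsc{Partition} with $B=2$. We would then verify both directions. First, a yes-instance yields a full schedule with $T=\sum_i o_i/B$. Second, any schedule with $T=\sum_i o_i/B$ has no idle slot, and we would read off a partition from which short requests are co-batched with which long request. Finally, we would check that the instance size is polynomial in the input; strong NP-hardness of \textsc{3-Partition} allows unary lengths, so this holds.
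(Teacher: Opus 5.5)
\paragraph{Gap: no reduction is actually constructed.}
The $\alpha>0$ gadget is left unspecified; you say yourself that the window bookkeeping is the unresolved difficulty. The \textsc{Partition} fallback never fixes $\alpha$ or the request lengths, so neither direction can be checked.

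The missing idea is that $\alpha$ is part of the input and should be taken \emph{large}, not small. Since $0\le a_i^{(t)}\le o_i$, choosing $\alpha=\max_i o_i$, as the paper does, makes constraint~\eqref{eq:fairness} vacuous. Every instance of the unconstrained problem is then an instance of the fair one, with the same optimal makespan. By Lemma~\ref{lem:1}, maximizing throughput is minimizing makespan. With contiguous processing and at most $B$ active requests per step, this is exactly $P_B\|C_{\max}$: the intervals $[s_i,s_i+o_i)$ have at most $B$ members covering any point, so they can be assigned greedily by start time to $B$ machines, and conversely. NP-hardness then follows from \textsc{Partition} for $B=2$, or strongly from \textsc{3-Partition} when $B$ is part of the input. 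The paper simply invokes prior work for the unconstrained problem. You correctly saw that $\alpha=0$ yields nothing, but you never tried the opposite endpoint, which makes the gadget construction unnecessary.

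\paragraph{Your service model differs from the paper's, and this breaks the fallback as stated.}
You allow a cohort to pause and later rejoin. Under that reading, the problem with large $\alpha$ is polynomial. McNaughton's wrap-around rule attains $\max\{\max_i o_i,\lceil\sum_i o_i/B\rceil\}$ and never serves a request twice in one step. So for $B=2$, a schedule with $T=\sum_i o_i/2$ and no idle slot exists whether or not a partition exists, and your step \emph{read off a partition from a full schedule} fails. Under pausing, any hardness would have to come from $\alpha$ being small enough to block such wrap-arounds, which is precisely the part you left open.

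The paper's model is nonpreemptive. Constraint (2) of the integer program in Appendix~\ref{append:model} forces contiguous processing, and Section~\ref{sec:online-discussion} refers to the nonpreemptive service convention of the main algorithms. The hardness genuinely rests on this.

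Under contiguity, your $\alpha=0$ discussion also changes. Requests with different start times can never be co-active, so a $0$-fair schedule is a sequence of time-disjoint cohorts of size at most $B$, and grouping jobs in decreasing order (LJF) is optimal. Your level-by-level accounting bound describes the pausing model, not this one.
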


Proposition \ref{prop:np} follows straightforwardly by setting $\alpha=\max_{i \in [n]} o_i$. With this choice, the fairness constraint in Equation \eqref{eq:fairness} becomes trivially satisfied for all possible schedules. In the absence of fairness constraints, prior work by \cite{lee2001machine} establishes that throughput maximization is strongly NP-hard. Consequently, the problem necessitates the design of approximation algorithms.

To evaluate the performance of an approximation algorithm, we use the 
\emph{competitive ratio}. Given a request instance 
$\mathcal{I}$, let $\text{OPT}(\mathcal{I})$ denote 
the optimal throughput achievable by any fair schedule 
(obtainable via the integer program in 
Appendix~\ref{append:model}). The competitive ratio 
of algorithm $\mathcal{A}$ is defined as:
\[
CR(\mathcal{A}) = \inf_{\mathcal{I}} 
\frac{\text{Throughput}(\mathcal{A};\mathcal{I})}
{\text{OPT}(\mathcal{I})}.
\]
A competitive ratio of $1$ means the algorithm 
always achieves optimal throughput; a ratio of 
$c < 1$ means the algorithm's throughput is 
guaranteed to be at least a $c$ fraction of 
optimal on every instance.

}

\xhdr{Bicriteria interpretation of the constrained objective.}
The formulation above should be interpreted as a constrained bicriteria
optimization problem. The operator cares about both cost alignment and system
efficiency. The parameter $\alpha$ controls the first dimension by imposing a
resource-homogeneity budget on each active batch; conditional on this budget,
the scheduler optimizes the second dimension, namely throughput, or equivalently
makespan. This constraint-and-optimize formulation is common in multi-objective
online decision problems: rather than specifying a single scalar weight across
objectives, one objective is parameterized as a constraint and the other is
optimized. A related philosophy appears in the consistency--robustness
literature for online algorithms with predictions, where a tunable parameter is
used to trace a Pareto frontier between two performance notions
\citep{lykouris2021competitive,golrezaei2023online}.

\begin{remark}[Offline snapshot model and online arrivals]
\label{rem:offline-online-scope}
The main theoretical model is an offline backlog model: all requests in
$\mathcal I$ are available at time $0$. This formulation should be interpreted
as a snapshot approximation of a congested serving system, in which the operator
periodically faces a large queue of pending requests and must prioritize them
under a batch-capacity constraint. 
This model does not claim to solve the fully online arrival problem. In a
production system, new requests continue to arrive while the current backlog is
being served, and the scheduler must make causal decisions without knowing
future arrivals. This introduces a distinct information-structure problem:
whether to serve currently available requests immediately or wait for future
requests that may form better resource-homogeneous batches. The competitive-ratio
results in Sections~\ref{subsec:LJF}--\ref{sec:robust} are therefore
offline guarantees for backlogged instances. Section~\ref{sec:online-discussion}
discusses how to adapt our offline algorithms under online arrivals and why the
offline competitive guarantee does not directly extend to adversarial release times.
\end{remark}

{
\begin{remark}[Compute-bound vs.\ memory-bound constraints in LLM inference]
\label{rem:compute-memory}
LLM inference systems face two distinct resource constraints that can 
each become the binding bottleneck depending on hardware configuration 
and workload characteristics.

The \emph{memory-bound} constraint limits the total number of KV-cache 
tokens that can be held in GPU high-bandwidth memory (HBM) simultaneously. 
When sequences are very long, GPU HBM fills up before 
compute is saturated, forcing the system to pause or preempt active 
requests. This constraint is the focus of recent theoretical work by 
\citet{jaillet2025online}, and is also the engineering motivation for 
systems such as PagedAttention~\citep{kwon2023efficient}, which manages 
KV memory via virtual paging to delay the onset of memory-bound behavior.

The \emph{compute-bound} constraint limits the number of requests that 
can be processed concurrently by GPU compute throughput. When sequences 
are short-to-medium in length, the system can hold all active KV caches in 
memory simultaneously, and the binding constraint is the number of 
requests $B$ that can be batched in a single compute step. This is the 
formulation adopted by Orca~\citep{yu2022orca}, 
Sarathi-Serve~\citep{agrawal2024taming}, and 
Sheng et al.~\citep{sheng2024fairness}, all of which use a maximum 
concurrent sequence count as the primary scheduling constraint. It is 
also the regime described by the GPU batching analysis 
in~\citet{le2023dissecting}: the observation that MLP compute time is 
approximately constant across a wide range of batch sizes holds precisely 
because in the compute-bound regime, the GPU is underutilized at small 
batch sizes and parallelizes additional tokens at near-zero marginal 
wall-clock cost.

Our model adopts the compute-bound formulation, constraining each batch 
to at most $B$ concurrent requests. This is also consistent with the fair serving constraint described above: when GPU compute drives batch cost, the per-step 
wall-clock time is determined by the \emph{longest} KV cache in the 
batch, creating the cross-subsidization of short requests by long ones. Extending the model to the 
memory-bound regime can be an interesting direction for future work.
\end{remark}}

\subsection{Examples of Unfair Schedules} \label{subsec:example}

\begin{example}
    \textbf{First-Come, First-Served (FCFS) } scheduling policy can lead to highly unfair and inefficient outcomes, as demonstrated in the Figure below. The scenario depicts three requests with demands $o_1 > o_2 \gg o_3$ arriving in sequence. Under FCFS, $r_3$ is batched with request 1's long tail. This forces $r_3$, which has a negligible resource requirement, to endure a latency comparable to $r_1$, severely inflating its completion time and leading to inefficient resource utilization.
    \begin{figure}[H]
  \centering
  \includegraphics[width=0.5\textwidth]{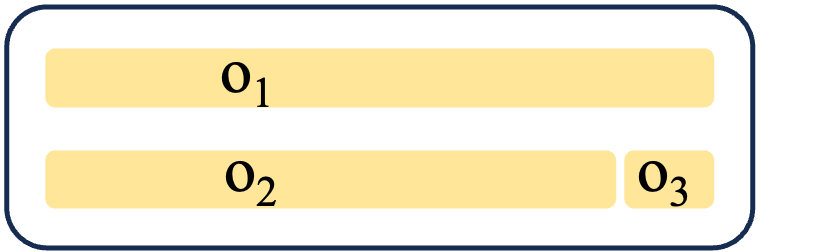}
  \label{fig:sub1}
\end{figure}
\end{example}

\begin{example}
    \textbf{Shortest First (SF) scheduling}, while often efficient for average latency, introduces a different form of unfairness that can lead to starvation (Figure below). Consider five requests arriving in sequence with demands $o_1 > o_2 > o_3 > o_4 > o_5$. For request 3, 2, and 1: each is forced to batch with a progressively larger tail to complete. As the executing batch aggregates more and larger jobs, its prolonged execution time exacerbates the unfairness for subsequent requests.
\begin{figure}[H]
  \centering
  \includegraphics[width=0.7\textwidth]{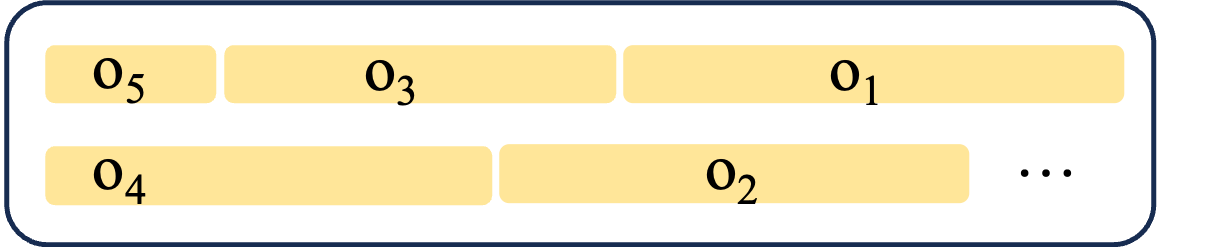}
  \label{fig:sub2}
\end{figure}
\end{example}

{
\begin{remark}[Continuous batching in FCFS and SJF]
\label{rem:cont-batch-examples}
Both FCFS and SJF operate under \emph{continuous batching}: whenever a request
completes and frees a processing slot, the next request (in arrival order
under FCFS, or by shortest remaining length under SJF) is immediately admitted
to the active batch without waiting for other running requests to finish.
The unfairness illustrated in the examples above therefore arises not from
static batch assignments, but from the \emph{head-of-line externality} that
persists even under continuous batching: as long as a long request remains
active, every short request admitted to share its batch is forced to advance
in lockstep with it, paying a per-step cost proportional to the long
request's KV footprint rather than its own.
\end{remark}
}

\section{Fair Scheduling Algorithms} \label{sec:alg}

This section presents two fair scheduling algorithms: Longest Job First (LJF) and Insert Short Jobs with Limit (ISJL). LJF provides absolute fairness, satisfying the fairness constraint for any $\alpha \geq 0$, but achieves poor throughput, serving as a benchmark for worst-case throughput. To address this limitation, we design ISJL, a parametric algorithm that accepts a fairness parameter $\alpha$ and provides a tunable trade-off, guaranteeing fairness for the given $\alpha$ while significantly improving throughput.

\subsection{Longest Job First (LJF)} \label{subsec:LJF}

First, we propose the Longest Job First (LJF) scheduling algorithm. LJF operates in discrete batches. It begins by selecting the $B$ longest requests from the queue to process simultaneously. The system then idles until the longest request in the current batch completes, upon which the entire batch is considered finished. Only then does LJF select the next $B$ longest requests to form a new batch. This process repeats indefinitely. Figure \ref{fig:ljf} provides an example of how LJF processes each batch of requests.

\begin{figure}[H]
  \centering
\includegraphics[width=0.8\textwidth]{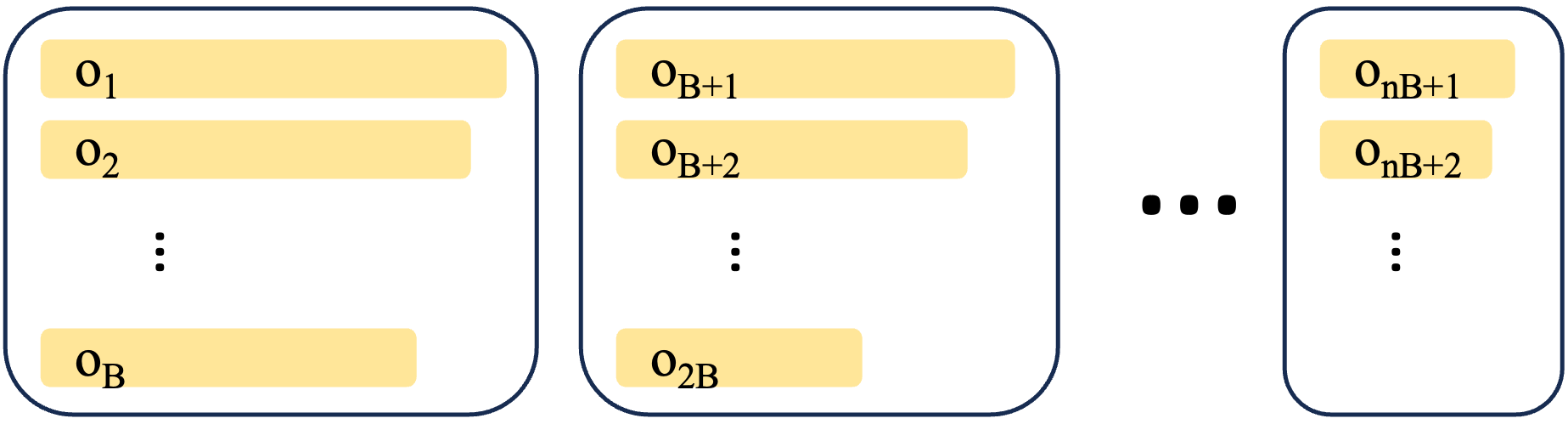}
\caption{An example of LJF}
  \label{fig:ljf}
\end{figure}

\begin{proposition} \label{prop:1}
    Algorithm LJF is fair, which satisfies the constraint \eqref{eq:fairness} for any $\alpha > 0$.
\end{proposition}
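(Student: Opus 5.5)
The plan is to argue directly from the structure of LJF. The key point is that LJF runs in discrete, static batches. Every request in a given batch is admitted at the same instant and no new request joins until the whole batch is done. So within a batch, all members share a common start time and advance in lockstep. That lockstep property is what makes the progress gap vanish.

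Fix an arbitrary batch formed by LJF at time $t_0$, consisting of requests $S$ with $|S|\le B$. We take the active set $S^{(t)}$ at a step $t$ to be the requests of the current batch that still have tokens left to generate; a request that has completed holds no slot in the active set. For every $i\in S$ and every step $t$ at which $i$ is still active, $i$ has been processed in each step $t_0,\dots,t-1$. Hence $a_i^{(t)}=t-t_0$, which is the same value for every active member.

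It follows that at every step $t$,
\[
\max_{i\in S^{(t)}} a_i^{(t)}-\min_{i\in S^{(t)}} a_i^{(t)}=0\le \alpha
\]
for any $\alpha\ge 0$, and in particular for any $\alpha>0$. So constraint \eqref{eq:fairness} holds throughout the batch.

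Since batches are disjoint in time and LJF begins a new batch only after the previous one has fully completed, the argument applies batch by batch and covers the whole schedule. The ordering by length plays no role in the fairness claim; it matters only for throughput. The fact used is simply that LJF never admits a request into a batch already in progress.

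The only subtle step is the convention for finished requests. If a completed request were counted in $S^{(t)}$ with $a_i^{(t)}=o_i$ while the batch's longest job continues, the gap could reach $\max o-\min o$. I will therefore state explicitly that $S^{(t)}$ means the requests being processed at step $t$, as in the definition preceding \eqref{eq:fairness}. Under that definition, requests that have already finished are excluded and the zero-gap argument goes through.
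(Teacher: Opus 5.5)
Your proof is correct and uses the same argument as the paper: all requests in an LJF batch start simultaneously and advance in lockstep, so the max--min progress gap is zero at every step. Your explicit rule that completed requests leave $S^{(t)}$ is also the point the paper makes later when it shows LJF has zero batching externality, where it notes that removing finished requests cannot create a positive gap.
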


Proposition \ref{prop:1} states that LJF is fair for any $\alpha > 0$. The proof is straightforward: within any batch $S^{(t)}$, all requests start simultaneously. Therefore, the instantaneous acceleration for each request is equal, i.e., $\max_{i \in S^{(t)}}{a_i^{(t)}} = \min_{i \in S^{(t)}}{a_i^{(t)}}$, satisfying the fairness condition.
However, a significant drawback of LJF is its poor throughput. The algorithm does not admit new requests into a batch until the entire batch completes, even if resources are released early when shorter jobs finish. This leads to unavoidable resource idling. The following theorem quantifies this inefficiency.

\begin{algorithm}[H]
\caption{Longest Job First (LJF)}
\label{alg:ljf}
\begin{algorithmic}[1]
  \STATE $o = \{o_1 \ge o_2 \ge \dots \ge o_n\}$
  \FOR{$i = 1 \dots \lfloor n/B\rfloor$}
    \STATE \textbf{run} $\{o_{(i-1)B+1}, \dots , \,o_{iB}\}$
  \ENDFOR
  \STATE \textbf{run} the remaining jobs in another batch
\end{algorithmic}
\end{algorithm}

{ 
\begin{remark}[LJF as a static-batching baseline]
\label{rem:ljf-static}
Algorithm~\ref{alg:ljf} is the only \emph{static-batching} algorithm in this
paper: it forms a complete batch of the $B$ longest requests and holds it until
every member finishes, tolerating resource idling whenever shorter jobs complete
early, since no new request is admitted until the entire batch is done.
\end{remark}
}

\begin{theorem}[Competitive Ratio of Algorithm LJF]
\label{thm:ljf-competitive}
  For any $B \geq 1$, we have 
  \[
  \text{CR}(\text{LJF}) \ge \frac{B}{2B-1}.
  \]
\end{theorem}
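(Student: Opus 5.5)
We compare the makespan of LJF with a lower bound on the optimal makespan. Order the jobs so that $o_1\ge o_2\ge\dots\ge o_n$, and let $m=\lceil n/B\rceil$ be the number of LJF batches. Batch $k$ has length $o_{(k-1)B+1}$, so
\[
T_{\mathrm{LJF}}=\sum_{k=1}^{m} o_{(k-1)B+1}.
\]
Since throughput is total work divided by makespan, it suffices to prove $T_{\mathrm{LJF}}\le \frac{2B-1}{B}\,T_{\mathrm{OPT}}$, where $T_{\mathrm{OPT}}$ is the makespan of any feasible (fair) schedule. We will only need two simple lower bounds on any schedule: (i) $T_{\mathrm{OPT}}\ge o_1$, because a job advances at most one token per step; and (ii) $T_{\mathrm{OPT}}\ge \frac1B\sum_i o_i$, because at most $B$ tokens are processed per step. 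Dropping the fairness constraint can only decrease $T_{\mathrm{OPT}}$, so these bounds remain valid for the fair optimum.

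The key step is a charging argument that bounds the sum of batch leaders by the total work. For $k\ge 2$, the leader $o_{(k-1)B+1}$ of batch $k$ is at most every job of batch $k-1$, since jobs are sorted. Hence
\[
B\cdot o_{(k-1)B+1}\le \sum_{j\in \text{batch } k-1} o_j .
\]
Summing over $k=2,\dots,m$ gives
\[
\sum_{k\ge2} o_{(k-1)B+1}\le \frac1B \sum_{j\in\text{batches }1..m-1} o_j .
\]
This is not yet tight enough, so we refine it: the leader of batch $k$ is also a member of batch $k$. Let $W_k$ denote the total work of batch $k$, so $W_k\ge o_{(k-1)B+1}$. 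Then
\[
\sum_{j\in\text{batch }k-1\setminus\{\text{leader}\}} o_j \;\ge\; (B-1)\,o_{(k-1)B+1}.
\]
Write $L_1=o_1$ and let $L=\sum_k L_k$ be the sum of the leaders. Counting each non-leader of batch $k-1$ against $L_k$, and counting every leader against itself, gives
\[
\sum_i o_i \;\ge\; L + (B-1)(L-L_1).
\]
Therefore $B\,L \le \sum_i o_i + (B-1)L_1$, which yields
\[
T_{\mathrm{LJF}} = L \le \frac1B\sum_i o_i + \frac{B-1}{B}\,o_1 \le \frac1B T_{\mathrm{OPT}}+\frac{B-1}{B}T_{\mathrm{OPT}}.
\]
This is only $T_{\mathrm{OPT}}$, which is too strong to be true, so the counting must be wrong. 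The mistake is the term $(B-1)(L-L_1)$: the leader of batch $k-1$ is counted twice. The correct count charges the $B-1$ non-leaders of batch $k-1$ to $L_k$, and each leader to itself, which indeed gives $\sum_i o_i\ge L+(B-1)(L-L_1)$ with no double counting. So the inequality does hold, and it would give a ratio of $1$ provided every batch is full.

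The real obstacle is therefore the last, partial batch, together with the rounding in the batch count. The slack has to come from a more careful treatment of this final partial batch and from replacing bound (ii) with the correct per-step accounting. I would first redo the computation allowing the last batch to have $r<B$ jobs, and check whether the $\frac{2B-1}{B}$ factor arises from combining $L\le \frac1B\sum_i o_i+\frac{B-1}{B}o_1$ with a looser form of the bounds on $T_{\mathrm{OPT}}$. A looser form would be, for instance, $T_{\mathrm{OPT}}\ge \max\bigl(o_1,\frac1B\sum_i o_i\bigr)$ used additively. Adding the two terms bounded by $T_{\mathrm{OPT}}$ separately gives $L\le \bigl(\frac1B+\frac{B-1}{B}\bigr)\cdot$ a sum of two lower bounds, each at most $T_{\mathrm{OPT}}$. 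This yields a conservative factor, and the stated $\frac{2B-1}{B}$ is then obtained as a safe bound.

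Finally, I would verify on the instance of one job of length $1$ followed by $B-1$ jobs of length $\epsilon$, plus a further batch, that the inequalities are consistent with the claimed direction.
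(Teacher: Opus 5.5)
Your charging inequality is correct, and it is the same inequality the paper relies on. Batches $1,\dots,m-1$ are full, and each of the $B-1$ non-leaders of batch $k-1$ is at least the leader of batch $k$. Hence $\sum_i o_i \ge T_{\mathrm{LJF}} + (B-1)(T_{\mathrm{LJF}}-o_1)$, i.e.\ $B\,T_{\mathrm{LJF}}\le \sum_i o_i+(B-1)o_1$. The paper writes this as $O\ge T_{\mathrm{LJF}}+\sum_{j\ne 1}O_{\mathrm{mod}\,j}$ with each $O_{\mathrm{mod}\,j}\ge O_{\mathrm{mod}\,1}=T_{\mathrm{LJF}}-o_1$, and then splits on whether $o_1\ge \frac{B}{2B-1}T_{\mathrm{LJF}}$.

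The gap is in the very next line. Bound (ii) says $\sum_i o_i\le B\,T_{\mathrm{OPT}}$, so $\frac1B\sum_i o_i\le T_{\mathrm{OPT}}$, not $\frac1B T_{\mathrm{OPT}}$. Because of this slip you conclude the argument yields $T_{\mathrm{LJF}}\le T_{\mathrm{OPT}}$, call that ``too strong,'' and abandon a correct proof. The correct substitution gives directly
\[
T_{\mathrm{LJF}}\le \frac1B\sum_i o_i+\frac{B-1}{B}\,o_1\le T_{\mathrm{OPT}}+\frac{B-1}{B}\,T_{\mathrm{OPT}}=\frac{2B-1}{B}\,T_{\mathrm{OPT}},
\]
which is the theorem, with no case split needed.

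The rest of your proposal therefore chases a non-problem and contains false claims.
\begin{itemize}
\item The claim that your inequality ``would give a ratio of $1$ provided every batch is full'' is wrong. Take $B=2$, $\alpha\ge 1$ and lengths $\{2,1,1\}$. LJF takes $2+1=3$ steps, while the fair schedule that runs the two unit jobs back-to-back beside the long job takes $2$. So the ratio $\frac{2B-1}{B}=\frac32$ is attained.
\item The partial last batch is not an obstacle: your charging only uses batches $1,\dots,m-1$, which are always full. The paper equivalently pads with zero-length jobs.
\item Your final paragraph is not a derivation. Read literally, $\bigl(\frac1B+\frac{B-1}{B}\bigr)$ times a sum of two quantities each at most $T_{\mathrm{OPT}}$ gives $2T_{\mathrm{OPT}}$, not $\frac{2B-1}{B}T_{\mathrm{OPT}}$. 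The constant actually comes from the unequal weights $1$ and $\frac{B-1}{B}$ on the two lower bounds $\frac1B\sum_i o_i$ and $o_1$.
\end{itemize}

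As written, the proposal never reaches the stated bound. With the one-line correction above it becomes a complete proof, slightly more direct than the paper's case analysis.
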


\proof{Proof of Theorem~\ref{thm:ljf-competitive}.}

We begin with a trivial lemma states that maximizing throughput is equivalent to minimizing the makespan since we cannot cancel any request in our model.

\begin{lemma}[Makespan-Throughput Correspondance] \label{lem:1}
 Define the makespan of Algorithm $\mathcal A$ under the arrival instance $\mathcal I$ is $T_{\mathcal A}(\mathcal I)$. Then, maximizing throughput is equivalent to minimizing makespan. i.e.,
    \[
    \min \; T_{\mathcal A} (\mathcal{I}) \quad \Longleftrightarrow \quad \max \; \text{Throughput}(\mathcal{A}; \mathcal{I})
    \]
\end{lemma}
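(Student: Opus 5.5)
The claim in Lemma~\ref{lem:1} is immediate from the definition of throughput, so the plan is a short direct argument. Fix an instance $\mathcal I$. Because no request can be cancelled or dropped, every feasible schedule processes exactly $O(\mathcal I):=\sum_{i=1}^n o_i$ tokens. This total does not depend on the algorithm $\mathcal A$, and it is strictly positive (each $o_i\ge 1$). Hence
\[
\text{Throughput}(\mathcal A;\mathcal I)=\frac{O(\mathcal I)}{T_{\mathcal A}(\mathcal I)},
\]
and the map $x\mapsto O(\mathcal I)/x$ is strictly decreasing on $(0,\infty)$. Every feasible schedule has makespan $T_{\mathcal A}(\mathcal I)\ge \max_i o_i>0$, so this map applies.

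First I will note that the set of feasible schedules is the same in both problems. In each case it is the set of policies satisfying the capacity constraint $|S^{(t)}|\le B$ and the fairness constraint~\eqref{eq:fairness}. Second, I will use strict monotonicity to show the two orderings are reversed. For any two feasible schedules $\mathcal A,\mathcal A'$ we have $T_{\mathcal A}(\mathcal I)\le T_{\mathcal A'}(\mathcal I)$ if and only if $\text{Throughput}(\mathcal A;\mathcal I)\ge \text{Throughput}(\mathcal A';\mathcal I)$. Consequently the two problems have the same set of optimizers, and
\[
\mathrm{OPT}(\mathcal I)=\frac{O(\mathcal I)}{T^*(\mathcal I)},
\]
where $T^*(\mathcal I)$ is the minimum fair makespan. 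Existence of the optimum is not an issue: the makespan is integer-valued and bounded below, and feasible schedules exist (for example LJF, by Proposition~\ref{prop:1}).

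Finally, I will record the consequence used in the rest of the proof of Theorem~\ref{thm:ljf-competitive}:
\[
\frac{\text{Throughput}(\mathcal A;\mathcal I)}{\mathrm{OPT}(\mathcal I)}=\frac{T^*(\mathcal I)}{T_{\mathcal A}(\mathcal I)}.
\]
This means any competitive-ratio bound can be proved as a makespan ratio. There is no real obstacle here. The only point needing care is that the numerator $O(\mathcal I)$ is genuinely schedule-independent, which holds precisely because the model forbids cancelling or partially serving requests.
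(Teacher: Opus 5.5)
Your proposal is correct and follows essentially the same route as the paper. Both arguments rest on the fact that $\sum_{i=1}^n o_i$ is fixed for the instance, because no request can be cancelled, so throughput is a strictly decreasing function of makespan. Your additional remarks on positivity, the common feasible set, existence of an optimum, and the identity $\mathrm{Throughput}/\mathrm{OPT} = T^*/T_{\mathcal A}$ spell out details the paper's one-line proof leaves implicit.
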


\proof{Proof of Lemma~\ref{lem:1}}
  For the given $n$ jobs in an arrival instance $\mathcal I$, $\text{Throughput}(\mathcal{A}; \mathcal{I})$ satisfies
  \[
    \text{Throughput}(\mathcal{A}; \mathcal{I}) = \frac{\sum_{i=1}^{n}o_i}{T_{\mathcal A} (\mathcal{I})},
  \]
  where $\sum_{i=1}^{n}o_i$ is unchanged within the same instance, which completes the proof.
\Halmos
\endproof

Then, we show that the competitive ratio of LJF is lower bounded by $\frac{B}{2B-1}$.

  W.L.O.G., let $nB$ jobs have processing times
  $o_1\ge o_2\ge\cdots\ge o_{nB}$, and Algorithm~1 produces the makespan
  \[
    T_{\mathcal LJF} (\mathcal{I}) \;=\; o_1 + o_{B+1} + o_{2B+1} + \dots + o_{(n-1)B+1}.
  \]
  Set $L:=o_1$, $O:=\sum_{i=1}^{nB}o_i$ and $T_{\mathcal OPT} (\mathcal{I})$ to be the makespan of optimal scheduling for instance $\mathcal{I}$.
  Any feasible schedule must process the longest job and must split the total work over batch size B, hence
  \[
    T_{\mathcal OPT} (\mathcal{I}) \;\ge\; L,
    \qquad
    T_{\mathcal OPT} (\mathcal{I}) \;\ge\; \frac{O}{B}.
  \]

  \paragraph{Case 1: $L \ge \frac{B}{2B-1} T_{\mathcal LJF} (\mathcal{I})$.}
  Then $T_{\mathcal LJF} (\mathcal{I}) \le \frac{2B-1}{B}\,L \le \frac{2B-1}{B}\,T_{\mathcal OPT} (\mathcal{I})$.

  \paragraph{Case 2: $L < \frac{B}{2B-1} T_{\mathcal LJF} (\mathcal{I})$.}
  Decompose
  \[
    T_{\mathcal LJF} (\mathcal{I}) \;=\; L + O_{\text{mod1}},
    \qquad  \text{where }
    O_{\text{mod1}} := o_{B+1}+o_{2B+1}+\dots+o_{(n-1)B+1}.
  \]
  The assumption yields that $O_{\text{mod1}} > \frac{B-1}{2B-1} T_{\mathcal LJF} (\mathcal{I})$.
  Because $o_{(k-1)B+2}\ge o_{(k-1)B+3} \ge \dots \ge o_{kB} \ge o_{kB+1}$ for all $k > 0$, therefore
  \[
    O_{\text{mod2}} := o_2+o_{B+2}+\dots+o_{(n-1)B+2}
    \;\ge\; O_{\text{mod1}}
    \;>\; \frac{B-1}{2B-1} T_{\mathcal LJF} (\mathcal{I}).
  \]
  \[
    O_{\text{mod3}} := o_3+o_{B+3}+\dots+o_{(n-1)B+3}
    \;\ge\; O_{\text{mod1}}
    \;>\; \frac{B-1}{2B-1} T_{\mathcal LJF} (\mathcal{I}).
  \]
  \[
    \vdots
  \]
  \[
    O_{\text{mod0}} := o_B+o_{2B}+\dots+o_{nB}
    \;\ge\; O_{\text{mod1}}
    \;>\; \frac{B-1}{2B-1} T_{\mathcal LJF} (\mathcal{I}).
  \]
  Consequently,
  \[
    O \;=\; T_{\mathcal LJF} (\mathcal{I}) + O_{\text{mod2}} + O_{\text{mod3}} + \dots + O_{\text{mod0}}
    \;>\; T_{\mathcal LJF} (\mathcal{I}) + \frac{(B-1)^2}{2B-1}\,T_{\mathcal LJF} (\mathcal{I})
        \;=\; \frac{B^2}{2B-1}T_{\mathcal LJF} (\mathcal{I}),
  \]
  \[\Longrightarrow\qquad
    T_{\mathcal OPT} (\mathcal{I}) \;\ge\; \frac{O}{B}
      \;>\; \frac{B}{2B-1}T_{\mathcal LJF} (\mathcal{I}).
  \]
  In both cases $T_{\mathcal LJF} (\mathcal{I}) \le \frac{2B-1}{B}\,T_{\mathcal OPT} (\mathcal{I})$, completing the proof.
\Halmos
\endproof

\begin{remark}
    For $B \geq 2$, the competitive ratio of LJF decreases monotonically as the batch size $B$ increases. The competitive ratio upper bound achieves its maximum value of $\frac{2}{3}$ when $B=2$.
\end{remark}

\subsection{Better Algorithm: ISJL (Warm-Up case, B=2)}

Motivated by the trade-off between the absolute fairness of LJF (fair for any $\alpha \geq 0$) and its severe throughput limitations, we propose a parametric algorithm that permits a bounded fairness violation (within a parameter $\alpha \geq 0$) to achieve significant throughput gains. This algorithm, called Insert Short Jobs with Limit (ISJL), takes $\alpha$ as its input parameter.

Given the complexity of ISJL, we first describe and analyze it for the warm-up case of $B=2$. The algorithm consists of two phases.

\textit{Phase One (Batch Initialization): } The algorithm first selects the longest waiting job, denoted $o_1$. To determine which jobs can be batched with $o_1$ without exceeding the allowed fairness violation, a threshold is calculated:
$q = \min\{o_1 - o_2 + \beta, 2\alpha\}$,
where $o_2$ is the second longest waiting job, and $\beta$ is a parameter related to $\alpha$, with $\beta \in [0,2\alpha]$. The algorithm then prioritizes selecting the longest jobs whose processing time is less than or equal to $q$ to run concurrently with $o_1$. This process continues until adding the next available job would cause the cumulative size of the batched short jobs to exceed the threshold $q$. At this point, the algorithm proceeds the second longest job, $o_2$, alongside $o_1$. Since Phase One does not guarantee that $o_1$ and $o_2$ will finish simultaneously, a Phase Two is required to schedule new jobs as resources become available. 

Exceptionally, when $o_2$ is less or equal to $\alpha$, which allows for $o_3$, the third longest waiting job to be processed right after $o_2$ and in parallel with $o_1$, Phase TWO is triggered directly.

\textit{Phase TWO (Batch Finalization): } This phase is triggered when a resource is freed, either by the completion of a job, or by the length of $o_2$. The algorithm checks the current decode length of the occupied resource, denoted by $a_m$. As long as $a_m \le \alpha$, the algorithm assign the longest unprocessed job to the freed resource.

The algorithm then iterates between Phase One and Phase Two, dynamically forming new batches or adding jobs to existing ones as resources become available, until all requests are completed. The complete formal procedure for ISJL is presented in Algorithm \ref{alg:isj-lim}. Next, Theorem \ref{thm:cr-isj-limited} shows that the lower bound of competitive ratio of ISJL is a function of $\beta$, with the best choice $\beta=\frac{\sqrt{17}-1}{4}\alpha \approx 0.78\alpha$ gives the maximum lower bound at $\frac{25+\sqrt{17}}{38} \approx 0.77$.

\begin{algorithm}
\caption{Insert Short Jobs with Limit (ISJL)}
\label{alg:isj-lim}
\begin{algorithmic}[1]
\STATE \textbf{Input: } $\alpha \geq 0$, $\beta = \frac{\sqrt{17}-1}{4}\alpha$.
  \STATE \textbf{Phase ONE}
  \STATE \textbf{Denote} $o = \{o_1 \ge o_2 \ge \dots \ge o_n\}$ for all undo jobs
  \IF {$o_2 \le \alpha$}
    \STATE \textbf{goto Phase TWO}
  \ENDIF
  \STATE $o_k \leftarrow \max\{o_i \in o \mid o_i \le \min\{o_1-o_2+\beta, 2\alpha\}\}$
  \STATE $o_{\text{array}} \leftarrow [\,]$
  \FOR{$i = k,\dots,n$}
    \IF{$o_i + \sum o_{\text{array}} > \min\{o_1-o_2+\beta, 2\alpha\}$}
      \STATE \textbf{continue}
    \ELSE
      \STATE \textbf{append} $o_i$ to $o_{\text{array}}$
    \ENDIF
  \ENDFOR
  \IF{$\sum o_{\text{array}} > \alpha$}
    \STATE \textbf{run} $o_{\text{array}}$ alone for $\sum o_{\text{array}} - \alpha$
  \ENDIF
  \STATE \textbf{run} $o_1$ with the rest part of $o_{\text{array}}$
  \STATE \textbf{run} $o_2$ with the rest part of $o_1$
  \STATE \textbf{Phase TWO}
  \WHILE {One of the resource is freed}
    \STATE $o_m \leftarrow \max\{o_i \in o \mid a_i = 0\}$
    \STATE $a_m \leftarrow $ the current output length on the occupied resource
    \IF {$a_m \le \alpha$}
      \STATE \textbf{run} $o_m$ until one of the resource is freed
    \ELSE
      \STATE \textbf{break}
    \ENDIF
  \ENDWHILE
  \STATE \textbf{repeat from step 1}
\end{algorithmic}
\end{algorithm}

{
\begin{remark}[ISJL as a hybrid continuous-batching policy]
\label{rem:isjl-hybrid}
ISJL occupies a deliberate middle ground between static
and continuous batching. Phase~ONE is semi-static: it constructs a structured short-job pack
around the longest active request $o_1$ and holds this configuration for a
bounded period, ensuring that all co-batched requests enter the decode
phase with KV caches within $\alpha$ tokens of each other.
\emph{Phase~TWO} is fully event-driven and implements continuous batching:
the moment a processing slot is freed and the current maximum progress
satisfies $A_{\max}(t)\le\alpha$, the next longest unprocessed request is
immediately admitted. This hybrid design is intentional.
The semi-static Phase~ONE enforces the fairness constraint by carefully
controlling the initial KV-cache spread of each new cohort of requests.
The continuous-batching Phase~TWO then recaptures efficiency by
eliminating idle time as soon as fairness permits new admissions.
\end{remark}
}

\begin{theorem}[Competitive ratio of Algorithm~\ref{alg:isj-lim}]
\label{thm:cr-isj-limited}
For $B = 2$, we have
\[
\text{CR}(\text{ISJL})\;\ge
\begin{cases}
\min\{\frac{\alpha+2\beta}{\alpha+3\beta},\,\frac{4\alpha}{6\alpha-\beta}\}, & \beta \in [0, \alpha] \\[10pt]
\min\{\frac{2\alpha+\beta}{2\alpha+2\beta},\,\frac{4\alpha}{6\alpha-\beta}\}, & \beta \in (\alpha, 2\alpha]
\end{cases}
\]
\end{theorem}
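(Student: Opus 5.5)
By Lemma~\ref{lem:1} it suffices to bound the makespan ratio $T_{\mathrm{ISJL}}(\mathcal I)/T_{\mathrm{OPT}}(\mathcal I)$ from above by the reciprocal of the claimed expression. I will use the same two lower bounds on the optimum as in the proof of Theorem~\ref{thm:ljf-competitive}: $T_{\mathrm{OPT}}\ge o_1$ and $T_{\mathrm{OPT}}\ge O/2$, where $O=\sum_i o_i$. The plan is a charging argument on the rounds of ISJL. Each iteration of Phase ONE followed by Phase TWO is one \emph{round}, and each round consists of \emph{busy} time (both slots occupied) and \emph{idle} time (one slot empty). Then $T_{\mathrm{ISJL}} = O/2 + I/2$, where $I$ is the total single-slot idle time. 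So it suffices to show that $I$ is a bounded fraction of either $O$ or $o_1$.

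Idle time within a round can arise in only two ways.
\begin{itemize}
\item[(a)] At the start of Phase ONE, when the short-job pack $o_{\mathrm{array}}$ has total size above $\alpha$, it runs alone for $\sum o_{\mathrm{array}}-\alpha$ steps. This is at most $q-\alpha\le\alpha$.
\item[(b)] At the end, after $o_2$ and its Phase~TWO insertions are exhausted, the tail of $o_1$ (or of $o_2$) runs alone once the current progress exceeds $\alpha$, so that no insertion is allowed.
\end{itemize}
For (b), I will argue that the threshold $q=\min\{o_1-o_2+\beta,2\alpha\}$ was chosen so that the greedy pack nearly fills the gap $o_1-o_2$. Concretely, either the pack fills to within $\beta$ of $q$ (greedy bin-filling leaves slack smaller than the smallest skipped job), or no remaining job fits. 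In the first case the tail idle is at most about $\beta$, or is offset by up to $\beta$ of overlap. In the second case every unprocessed job is long, which forces the optimum to also pay for long jobs.

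The round analysis then splits into two regimes.
\begin{itemize}
\item[(i)] When $o_1-o_2+\beta\le 2\alpha$, the idle per round is at most on the order of $\beta$, against total work at least $o_1+o_2+(\text{pack})\ge 2o_2+\dots$. Minimizing the resulting ratio over the free parameter $o_1-o_2$ should produce $\frac{\alpha+2\beta}{\alpha+3\beta}$ when $\beta\le\alpha$. When $\beta>\alpha$, the pre-run idle from (a) becomes the binding term, and the same minimization should produce $\frac{2\alpha+\beta}{2\alpha+2\beta}$.
\item[(ii)] When the cap $2\alpha$ binds, i.e.\ $o_1-o_2$ is large, the idle is at most about $o_1-o_2-2\alpha+(\alpha-\text{slack})$. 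Here I will compare against the bound $T_{\mathrm{OPT}}\ge o_1$. The worst case, with roughly $o_1=3\alpha$ and pack slack $\beta/2$, should yield $\frac{4\alpha}{6\alpha-\beta}$.
\end{itemize}

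Summing over rounds, the ratio of busy time to total time is at least the worst per-round ratio. One exception needs care: the first round, where $o_1$ is the global maximum, is the only place the $T_{\mathrm{OPT}}\ge o_1$ bound is used. Later rounds must be charged through the $O/2$ bound. Combining the two bounds in the standard way, $T_{\mathrm{OPT}}\ge\max\{o_1,O/2\}$, yields the minimum of the two expressions in each case. The degenerate branch $o_2\le\alpha$, which goes directly to Phase~TWO, runs every job with no idle until the progress exceeds $\alpha$. The total remaining work there is then small and is absorbed by the $o_1$ bound.

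The main obstacle is step (b): rigorously bounding the tail idle of a round. This requires controlling how greedy packing interacts with Phase~TWO insertions that themselves create new idle tails. I would first try to show by induction on insertions that, within a round, the difference between the two slots' finishing times never exceeds $\max\{\beta,\,o_1-o_2-q+\alpha\}$. From there, the case calculus in $\beta$ versus $\alpha$ is routine optimization.
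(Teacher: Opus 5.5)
\textbf{There is a genuine gap: the two lower bounds on the optimum cannot certify anything above $2/3$.} The only lower bounds you use are $T_{\mathrm{OPT}}\ge o_1$ and $T_{\mathrm{OPT}}\ge O/2$. Both ignore the fairness constraint, and fairness alone can force every feasible schedule to be $3/2$ times longer than $\max\{o_1,O/2\}$.

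Take $B=2$ and three jobs of lengths $2L, L, L$ with $L>2\alpha$.
\begin{enumerate}
\item If two jobs are ever co-active, the later one starts when the earlier one's progress equals their start-time difference, so that difference is at most $\alpha$.
\item If some job overlapped both others, all three start times would lie in a window of length $2\alpha$. Since every length exceeds $2\alpha$, all three would be active at the latest start time, which violates $B=2$.
\item So at most one pair ever shares time, and the third job runs in isolation. Every $\alpha$-fair schedule therefore has makespan at least $3L$.
\end{enumerate}
But $\max\{o_1,O/2\}=2L$. Hence any inequality $T_{\mathrm{ISJL}}\le c^{-1}\max\{o_1,O/2\}$ forces $c\le 2/3$. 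That is exactly the LJF bound $B/(2B-1)$ these same two bounds produced in Theorem~\ref{thm:ljf-competitive}.

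The statement claims strictly more than $2/3$ for every $\beta\in(0,2\alpha)$, for example about $0.77$ at $\beta=\frac{\sqrt{17}-1}{4}\alpha$. So no sharpening of your idle bookkeeping in (a) and (b) can close the argument.

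\textbf{Where the charging breaks.} On this instance ISJL's pack is empty, because the cap $2\alpha$ binds and no job has length at most $2\alpha$. ISJL therefore idles one slot for $L$ steps while $o_1$ finishes alone. It then idles for another $L$ steps while the third job runs alone. So $I=2L$ and $T_{\mathrm{ISJL}}=O/2+I/2=3L$.

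One idle block can be charged to $T_{\mathrm{OPT}}\ge o_1$. The other has nowhere to go: $o_1$ can be used only once, and $O/2$ does not see idle time at all. For the same reason, your per-round estimates cannot be made rigorous in this framework. These include idle ``on the order of $\beta$'' and the guessed extremum $o_1\approx 3\alpha$ with slack $\beta/2$. Per-round idle is not bounded in terms of $\alpha$ and $\beta$; it is only bounded relative to what a fair schedule must also waste.

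\textbf{How the paper avoids this.} The paper never compares against $\max\{o_1,O/2\}$. It compares each Phase ONE stage with an alternative arrangement of that stage's own jobs, so fairness-forced idle appears on both sides of the ratio.
\begin{itemize}
\item In the over-inserting case the ratio is $(L_r+P_r)/(S_r+P_r+Q_r)$ (Lemma~\ref{lem:idle-time-lim}).
\item In the under-inserting case the ratio is $(S_r+P_r)/(L_r+P_r)$ (Lemma~\ref{lem:under-insert-lim}). Its extremal configuration is $S_r=P_r=2\alpha$ with $L_r\uparrow 4\alpha-\beta$, not $o_1\approx 3\alpha$.
\end{itemize}
Each ratio is optimized over the constraints defining its regime, and Phase TWO is handled separately through Proposition~\ref{prop:graham}.

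To salvage your route, you would first need a fairness-aware lower bound on $T_{\mathrm{OPT}}$. For example, it could charge the optimum for the unavoidable lone tails of jobs longer than $2\alpha$. Only then can per-round accounting beat $2/3$.
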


\proof{Proof of Theorem~\ref{thm:cr-isj-limited}.}
Let $T_{\mathcal ISJL} (\mathcal{I})$ be the makespan produced by Algorithm~\ref{alg:isj-lim} and let $T_{\mathcal OPT} (\mathcal{I})$ be the optimal makespan for the same input instance.  Then by Lemma \ref{lem:1}, it suffices to show that
\[
\inf_{\mathcal I}\frac{T_{\mathcal OPT}(\mathcal I)}{T_{\text{ISJL}}(\mathcal I)}\;=
\begin{cases}
\min\{\frac{\alpha+2\beta}{\alpha+3\beta},\,\frac{4\alpha}{6\alpha-\beta}\}, & \beta \in [0, \alpha] \\[10pt]
\min\{\frac{2\alpha+\beta}{2\alpha+2\beta},\,\frac{4\alpha}{6\alpha-\beta}\}, & \beta \in (\alpha, 2\alpha]
\end{cases}
\]

We first quantify the loss in Phase ONE, starting by define a term \textit{stage}: 
\begin{definition}
  \label{def:stage}
  A stage is initiated whenever line~2 of Algorithm~\ref{alg:isj-lim} is reached. For the $r$th stage, denote the current largest two undo jobs by $L_r$ (length $o_1$) and $S_r$ (length $o_2$), and the pack of short jobs by $P_r$. One \emph{stage} in Phase ONE is formally defined by the following sequence of actions:
  \begin{enumerate}[label=\roman*)]
  \item Execute $P_r$ alone if $P_r > \alpha$ for $P_r-\alpha$ tokens;
  \item Execute $L_r$ in parallel with $P_r$ until $P_r$ finishes;
  \item Execute $L_r$ in parallel with $S_r$ until either $L_r$ or $S_r$ finishes.
  \end{enumerate}
\end{definition}

Next, we point out that the loss in any stage of Phase ONE can produce loss from two aspects: over-inserting or under-inserting (See Figure \ref{fig:over-under} for intuition). Lemma \ref{lem:idle-time-lim} quantify the worst competitive ratio in Phase ONE under the over-inserting case, while Lemma \ref{lem:under-insert-lim} provides a property to bound the competitive ratios in the under-inserting case.

\begin{figure}[htbp]
  \centering
\includegraphics[width=0.8\textwidth]{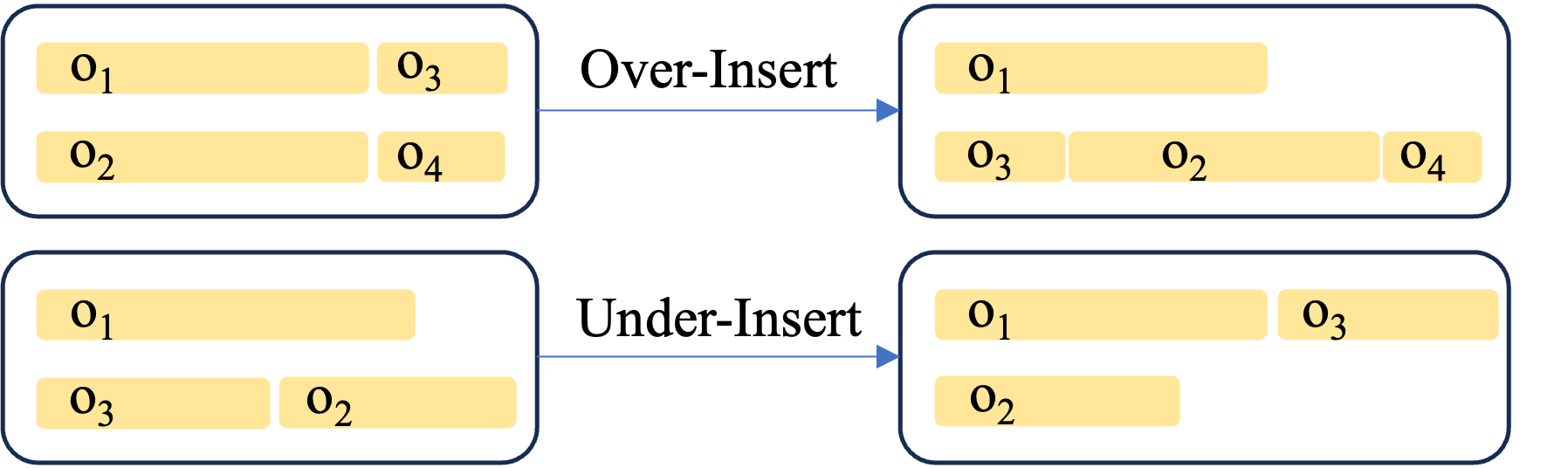}
  \caption{Over and Under Inserting}
  \label{fig:over-under}
\end{figure}

\begin{lemma}[Cost of over-inserting~\ref{alg:isj-lim}]
  \label{lem:idle-time-lim}
  For any stage $r$ of Phase ONE, the worst competitive ratio caused by over-inserting is given by:
  \[
  \text{CR}(\text{ISJL})\;\ge
  \begin{cases}
  \dfrac{\alpha+2\beta}{\alpha+3\beta}, & \beta \in [0, \alpha] \\[10pt]
  \dfrac{2\alpha+\beta}{2\alpha+2\beta}, & \beta \in (\alpha, 2\alpha]
  \end{cases}
  \]
\end{lemma}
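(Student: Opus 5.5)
The plan is to analyze one Phase ONE stage $r$ in isolation. I would compare the time the stage occupies under ISJL with the lower bound on its contribution to $T_{\mathcal OPT}$. By Lemma~\ref{lem:1} it is enough to work with makespans, and a bound that holds stage by stage then transfers to the ratio of sums. Write $o_1=|L_r|$, $o_2=|S_r|$, let $P=\sum P_r$ be the total length of the short pack, and let $d:=o_1-o_2\ge 0$. The pack is built so that $P\le q=\min\{d+\beta,2\alpha\}$.

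\textbf{Step 1: timeline of the stage (Definition~\ref{def:stage}).} First the pack runs alone for $(P-\alpha)^+$ steps. Then $L_r$ starts and runs next to the rest of the pack until time $P$. At time $P$, $S_r$ starts in the slot the pack frees. \emph{Over-inserting} is the case $P>d$, so $S_r$ finishes after $L_r$ by $P-d\le\beta$ tokens. I would check that Phase TWO cannot fill the slot freed by $L_r$ during this tail. The reason is that $S_r$ has already progressed more than $\alpha$, because $o_2>\alpha$ in every stage that is not sent directly to Phase TWO. So the only idle single-slot time in the stage is
\[
I=(P-\alpha)^+ + (P-d),
\]
the stage length is $T_r=(P-\alpha)^+ + P + o_2$, and the work processed in the stage is $W_r=o_1+o_2+P$.

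\textbf{Step 2: reduce to a ratio.} Since $W_r=2T_r-I$, the capacity bound $T_{\mathcal OPT}\ge W/2$ gives a per-stage ratio of $1-I/(2T_r)$. I would also keep $T_{\mathcal OPT}\ge o_1$ in reserve for stages where $o_2$ is small relative to $P$, since there the work bound alone is too weak. The worst case is therefore a maximization of $I/T_r$ over $(o_1,o_2,P)$ subject to
\[
\alpha<o_2\le o_1,\qquad d<P\le\min\{d+\beta,2\alpha\}.
\]
The ratio $I/T_r$ increases in $P$, so I would push $P$ to $q$, and it decreases in $o_2$, so I would push $o_2$ down to $\alpha$. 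That leaves a one-parameter family in $d$, which I would optimize directly.

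\textbf{Step 3: the two regimes of $\beta$.} For $\beta\le\alpha$, I expect the extremal configuration to be $P=d+\beta\le\alpha$, so that $(P-\alpha)^+=0$ and $I=\beta$. Then the worst choice of $d$ (presumably $d=0$) should give exactly $\frac{\alpha+2\beta}{\alpha+3\beta}$, since this requires $2T_r=(\alpha+3\beta)$ per unit $\beta$ of idle time. For $\beta\in(\alpha,2\alpha]$, the pack can exceed $\alpha$ and the $(P-\alpha)^+$ term becomes active, so a trade-off between the two idle sources appears. Optimizing over $d$ at the boundary $P=\min\{d+\beta,2\alpha\}$ should yield $\frac{2\alpha+\beta}{2\alpha+2\beta}$. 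I would confirm that the two expressions agree at $\beta=\alpha$ as a consistency check.

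\textbf{Main obstacle.} The hardest step is making the timeline of Step 1 rigorous in the worst case. This means justifying that Phase TWO really admits nothing during the tail, and that the integral knapsack-style pack selection cannot produce a worse configuration than the continuous relaxation. If the plain work bound is not tight at the extremal point, I would try the convex combination of the two lower bounds $T_{\mathcal OPT}\ge W/2$ and $T_{\mathcal OPT}\ge o_1$ first. I would also exhibit matching instances (e.g.\ $o_2=\alpha^+$, $P=q$, with $d$ at the optimizer) to confirm that the stated values are attained.
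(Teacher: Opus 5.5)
Your proposal has a genuine gap. The loss this lemma measures does not live inside stage $r$, so a stage-local idle-time and work-bound computation cannot produce the stated value.

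\textbf{The missing displaced job.} The paper's proof uses a quantity you never model: the pack $Q_r$ of short jobs that the optimum would run beside $P_r$. Because ISJL spent that slot in front of $S_r$, it must run $Q_r$ after $S_r$, alone. The paper's comparison is $T_{\mathrm{OPT}}=L_r+P_r$ against $T_{\mathrm{ISJL}}=S_r+P_r+Q_r$. Setting $S_r=L_r$ and $Q_r=P_r$ gives $(L_r+P_r)/(L_r+2P_r)$. This is minimized by $P_r\uparrow\beta$ and $L_r\downarrow\min\{2\alpha,P_r+\alpha\}$.

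Concretely, for $\beta\le\alpha$ take $L_r=S_r=\alpha+\beta+\varepsilon$ and $P_r=Q_r=\beta$. ISJL's makespan is $\alpha+3\beta+\varepsilon$, against an optimum of $\alpha+2\beta+\varepsilon$. Half of ISJL's idle time, $Q_r$ running alone, falls in a later stage whose utilization is $1/2$. That stage breaks your ``bound stage by stage, then sum'' transfer. Using $T_{\mathrm{OPT}}\ge o_1$ on some stages and $W/2$ on others does not sum to a bound either.

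At this same extremal point, your per-stage ratio is $(2\alpha+3\beta)/(2\alpha+4\beta)$, strictly larger than $(\alpha+2\beta)/(\alpha+3\beta)$. At your own proposed point ($d=0$, $o_2=\alpha$, $P=\beta$), your formulas give $1-\beta/(2\alpha+2\beta)=(2\alpha+\beta)/(2\alpha+2\beta)$, not the value you claim.

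\textbf{Errors in the Step~1 timeline.} First, $o_2>\alpha$ does not keep Phase TWO out. When $L_r$ completes, $S_r$ has progress $o_1-\min\{P,\alpha\}$. So the freed slot stays idle only if $o_1>\min\{P+\alpha,2\alpha\}$, which is exactly the paper's constraint $L_r>\min\{2\alpha,P_r+\alpha\}$. Combine this with the over-inserting condition $o_1<o_2+\min\{P,\alpha\}$. Letting $o_2\downarrow\alpha$ then forces $d\uparrow\min\{P,\alpha\}$ and removes the tail idle. Hence your extremal choice ($o_2\to\alpha$ with $d=0$) is infeasible: there Phase TWO admits a new job immediately. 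The true extremum has $o_2=o_1\approx\alpha+\beta$, or $o_2=o_1\approx 2\alpha$ when $\beta>\alpha$.

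Second, you double count the pack's solo segment, which already lies inside $[0,P]$. So $T_r=P+o_2$. For $P>\alpha$, $S_r$ trails $L_r$ by $\alpha-d$, not $P-d$, so $I=P-d$ in both regimes. As written, your formulas violate your own identity $W_r=2T_r-I$ whenever $P>\alpha$.

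\textbf{Repair.} Put the displaced job $Q_r$ into the accounting. That means working with the local comparison $(L_r+P_r)/(S_r+P_r+Q_r)$ under the constraints $P_r\le\min\{L_r-S_r+\beta,2\alpha\}$, $L_r>\min\{2\alpha,P_r+\alpha\}$, and $S_r+P_r>L_r$, which is what the paper does.
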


\proof{Proof of Lemma~\ref{lem:idle-time-lim}.}
Denote $Q_r$ as the package of small jobs that could be executed in parallel with $P_r$ in optimal scheduling. Underlying constraints for over-inserting are
\[
L_r \ge S_r \ge P_r \ge Q_r \ge 0,\; P_r \le \min\{L_r -S_r + \beta, 2\alpha\},\; L_r > \min\{2\alpha,\,P_r+\alpha\},\; S_r+P_r>L_r.
\]
Because
\[
\text{CR}(\text{ISJL})=\frac{L_r+P_r}{S_r+P_r+Q_r}
\]
strictly increases in $L_r$ and strictly decreases in $S_r$ and $Q_r$, the minimum is approached at the boundary $S_r=L_r$ and $Q_r=P_r$. The constraints reduce to
\[
0<P_r\le \beta,\quad P_r\le2\alpha,\quad L_r>\min\{2\alpha,P_r+\alpha\}.
\]
For fixed $L_r$ the function $(L_r + P_r)/(L_r + 2P_r)$ decreases in $P_r$, so we take $P_r$ as large as allowed; for fixed $P_r$ it increases in $L_r$, so we take $L_r$ as small as allowed. Two cases:
\begin{itemize}
    \item If $0\le t\le\alpha$, then $P_r<t\le\alpha$, hence $\min\{2\alpha,P_r+\alpha\}=P_r+\alpha$ and the smallest admissible $L_r$ is $L_r\downarrow P_r+\alpha$. Letting $P_r\uparrow \beta$ gives
    \[
    \inf \text{CR}(\text{ISJL})=\frac{\alpha+2\beta}{\alpha+3\beta}.
    \]
    \item If $\alpha<\beta\le2\alpha$, we can choose $P_r$ with $\alpha\le P_r<\beta$, so $\min\{2\alpha,P_r+\alpha\}=2\alpha$ and the smallest admissible $L_r$ is $L_r\downarrow2\alpha$. Letting $P_r\uparrow \beta$ gives
    \[
    \inf \text{CR}(\text{ISJL})=\frac{2\alpha+\beta}{2\alpha+2\beta}=\frac{2\alpha+\beta}{2(\alpha+\beta)}\le \frac34,    
    \]
    matching continuously the first case at $\beta=\alpha$.
\end{itemize}
The overall infimum decreases from $1$ ($\beta=0$) down to $2/3$ ($\beta=2\alpha$), completing the proof.
\Halmos 
\endproof

\begin{lemma}[Cost of under-inserting]
  \label{lem:under-insert-lim}
  In any stage $r$ of Phase ONE, we have 
  \[
  \text{CR}(\text{ISJL})\ \ge\ \frac{4\alpha}{6\alpha-\beta}
  \]
\end{lemma}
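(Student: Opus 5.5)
We bound the per-stage ratio under under-inserting by the same local accounting as in Lemma~\ref{lem:idle-time-lim}. Fix a stage $r$ with long job $L_r$, second job $S_r$ and pack $P_r$, and write $q=\min\{L_r-S_r+\beta,2\alpha\}$ for the threshold used in Phase ONE. Under-inserting means that ISJL's pack is strictly smaller than what could usefully be placed beside $L_r$. The stage's ISJL makespan is $L_r+\max\{0,P_r-\alpha\}$ plus whatever idle time remains on the second resource after $S_r$ finishes. An optimal schedule can run $L_r$ in parallel with $S_r$ together with a larger pack $Q_r$ of total length up to about $L_r-S_r$, keeping the second resource busy for the whole stage. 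The ISJL idle time is then at most $L_r-S_r-P_r$, minus whatever Phase TWO refills.

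The first step is to show that, under under-inserting, this idle gap is controlled by $\alpha$. Any job not added to the pack has length greater than $q-\sum P_r$ by the greedy rule. In addition, once $a_m\le\alpha$ fails, Phase TWO stops refilling, so the residual gap satisfies $L_r-S_r-P_r\le L_r-S_r-(q-\text{gap})$. This should yield two regimes. If $q=2\alpha$, the gap is at most $L_r-S_r-\alpha$ while $L_r\ge 2\alpha$. If $q=L_r-S_r+\beta$, the greedy pack leaves at most $\alpha-\beta/2$ unused. The latter bound comes from the fact that a rejected job of length more than half the remaining threshold forces the packed amount to exceed half of $q$. The second step writes the ratio as
\[
\frac{T_{\mathrm{OPT}}}{T_{\mathrm{ISJL}}}\ \ge\ \frac{(L_r+S_r+P_r+Q_r)/2}{L_r+\text{idle}},
\]
using the bound $T_{\mathrm{OPT}}\ge O/2$ from the proof of Theorem~\ref{thm:ljf-competitive}, restricted to the work of the stage, together with $T_{\mathrm{OPT}}\ge L_r$. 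The third step extremizes over $L_r,S_r,P_r$ under these constraints, as in Lemma~\ref{lem:idle-time-lim}, using monotonicity in each variable. The worst case should be $L_r\downarrow 2\alpha$ with total stage work $4\alpha$ and ISJL length $2\alpha+(\alpha-\beta/2)\cdot 2=6\alpha-\beta$ in doubled units. This would give $4\alpha/(6\alpha-\beta)$.

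I expect the main obstacle to be the greedy-packing step: proving that the unused slack is at most $\alpha-\beta/2$, rather than $\alpha$, uniformly over the cases $o_2\le\alpha$ (direct Phase TWO) and $o_2>\alpha$. The first approach I would try is the standard half-fill argument for first-fit-decreasing. If the smallest rejected job exceeds the remaining slack, then either it exceeds $q/2$, which forces the inserted mass to be at least the remaining slack, or the pack is nearly full. I would then combine this with the fairness cap $a_m\le\alpha$ to handle the refill in Phase TWO.
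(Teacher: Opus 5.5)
There is a genuine gap. The constant $4\alpha/(6\alpha-\beta)$ is matched rather than derived, and the extremization you rely on fails even inside your own accounting. The slack value $\alpha-\beta/2$ is never obtained. Half-fill only gives packed mass at least $q/2$, i.e.\ slack at most $q/2\le\alpha$ against the threshold, and nothing in your sketch produces the $-\beta/2$. The extremal configuration is only asserted, and the displayed arithmetic $2\alpha+(\alpha-\beta/2)\cdot 2$ actually equals $4\alpha-\beta$.

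More seriously, in the regime $q=2\alpha$ your gap bound $L_r-S_r-\alpha$ grows with $L_r$, but your lower bounds on $T_{\mathrm{OPT}}$ do not. If the optimal schedule fills that gap, its stage length is $L_r$, so neither $L_r$ nor half the stage work exceeds $L_r$. Take $\beta=\alpha$, $S_r\downarrow\alpha$, $L_r=3\alpha$, so $q=2\alpha$. Your ingredients then allow $T_{\mathrm{ISJL}}$ up to $2L_r-S_r-\alpha\to 4\alpha$ against the lower bound $T_{\mathrm{OPT}}\ge 3\alpha$. That certifies only $3/4<4/5$, and only $1/2$ as $L_r\to\infty$, so the ratio is not monotone toward $L_r\downarrow 2\alpha$. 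Ruling this out needs the fairness fact that no schedule can start a job beside $L_r$ once its progress exceeds $\alpha$, which you never use.

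The deeper problem is that you are bounding a different loss. Your half-fill reasoning concerns jobs of length at most $q$, which the greedy scan actually considers. The loss in this lemma comes from a pack of length in $(q,2\alpha]$, which ISJL excludes outright because the scan starts at $o_k\le q$.

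The paper encodes under-inserting as $L_r\ge S_r\ge P_r$ and $L_r-S_r+\beta<P_r\le 2\alpha$. An optimal schedule runs $P_r$ beside $L_r$ before $S_r$, while ISJL runs $P_r$ after $L_r$, so the stage ratio is explicitly $(S_r+P_r)/(L_r+P_r)$. Three monotonicity steps then finish:
\begin{enumerate}
\item Push $L_r\uparrow S_r+P_r-\beta$ to get $(S_r+P_r)/(S_r+2P_r-\beta)$.
\item This increases in $S_r$ because the constraints force $P_r>\beta$, so set $S_r=P_r$ to get $2P_r/(3P_r-\beta)$.
\item This decreases in $P_r$, so set $P_r=2\alpha$.
\end{enumerate}
The bound is approached at $S_r=P_r=2\alpha$, $L_r\uparrow 4\alpha-\beta$. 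There $6\alpha-\beta=L_r+P_r$ is ISJL's stage length and $4\alpha=S_r+P_r$ is the optimal one. No bin-packing slack estimate or Phase TWO refill argument is needed. The case $o_2\le\alpha$ also needs no treatment, since it bypasses Phase ONE.
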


\proof{Proof of Lemma~\ref{lem:under-insert-lim}.}
Underlying constraints for under-inserting are
\[
L_r \ge S_r \ge P_r \ge 0,\; L_r - S_r + \beta < P_r \le 2\alpha.
\]
From $L_r-S_r+\beta<P_r$ we have $L_r< S_r+P_r-\beta$.
For fixed $S_r,P_r$, $\text{CR}(\text{ISJL})=(S_r+P_r)/(L_r+P_r)$ is minimized by taking $L_r$ as large as allowed, i.e., $L_r\uparrow S_r+P_r-\beta$. Hence

$$
\text{CR}(\text{ISJL})\ \ge\ \frac{S_r+P_r}{\,S_r+2P_r-\beta\,}.
$$

For fixed $\beta,P_r$ with $P_r>\beta$, the right-hand side increases in $S_r$, so the minimum over $S_r\ge P_r$ occurs at $S_r=P_r$:

$$
\text{CR}(\text{ISJL})\ \ge\ \frac{2P_r}{\,3P_r-\beta\,}.
$$

For fixed $\beta$, this expression decreases in $P_r$, so using the bound $P_r\le 2\alpha$ and feasibility $P_r>\beta$ gives
\[
\text{CR}(\text{ISJL})\ \ge\ \frac{4\alpha}{\,6\alpha-\beta\,}.
\]

The bound is tight: it is approached by choices $S_r=P_r=2\alpha$ and $L_r$ arbitrarily close to $S_r+P_r-\beta$ from below.
When $\beta=0$, this yields $\text{CR}(\text{ISJL})\ge 2/3$.
\Halmos 
\endproof

Therefore, with Lemmas \ref{lem:idle-time-lim} and \ref{lem:under-insert-lim}, we obtain the adversarial competitive ratio in Phase ONE:
\[
\frac{T_{\mathcal OPT}(\mathcal I)}{T_{\text{ISJL}}(\mathcal I)}\;\ge
\begin{cases}
\min\{\dfrac{\alpha+2\beta}{\alpha+3\beta},\,\dfrac{4\alpha}{6\alpha-\beta}\}, & \beta \in [0, \alpha] \\[10pt]
\min\{\dfrac{2\alpha+\beta}{2\alpha+2\beta},\,\dfrac{4\alpha}{6\alpha-\beta}\}, & \beta \in (\alpha, 2\alpha]
\end{cases}
\]

Consider Phase TWO, we introduce the following proposition proved in \cite{graham1969bounds}:

\begin{proposition}[Theorem~2 in \cite{graham1969bounds}]
\label{prop:graham}
    Without the limit $\alpha$, the makespan of Phase TWO is at most $(\frac43 - \frac1{3B})\,T_{\mathcal OPT} (\mathcal{I})$.
\end{proposition}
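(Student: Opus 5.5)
The plan is to recognize that Phase TWO without the fairness limit is exactly Graham's \emph{Longest Processing Time} (LPT) list-scheduling rule, and then to reproduce Graham's minimal-counterexample argument.

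\textbf{Reduction.} First I will make the reduction explicit. Each of the $B$ batch slots plays the role of an identical machine. Each request $i$ is a job of length $o_i$ that occupies one slot for $o_i$ consecutive token steps. When $\alpha$ is dropped, line ``$o_m\leftarrow\max\{o_i: a_i=0\}$'' assigns the longest unstarted request to whichever slot frees first, which is precisely LPT. The makespan $T$ of Phase TWO is then the LPT makespan on $B$ machines. By Lemma~\ref{lem:1}, it suffices to show $T\le(\frac43-\frac1{3B})T_{\mathcal OPT}(\mathcal I)$. I will compare against the nonpreemptive optimum, consistent with the model in which a request decodes contiguously once admitted. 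This is the one modelling point I would state carefully, since the two-jobs-per-machine step below uses nonpreemption.

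\textbf{Trimming to a minimal counterexample.} Suppose the bound fails, and take a counterexample with the fewest jobs, sorted $o_1\ge\dots\ge o_n$. Let $o_j$ be a job that finishes last under LPT. Deleting all jobs $o_{j+1},\dots,o_n$ does not change the LPT schedule of $o_1,\dots,o_j$, because LPT starts jobs in index order. So the LPT makespan is unchanged, while the optimum can only decrease. Minimality therefore forces the last-finishing job to be the shortest job $o_n$. At its start time every machine is busy, so the start time is at most $(O-o_n)/B$ with $O=\sum_i o_i$. This gives
\[
T\le \frac{O}{B}+\Bigl(1-\frac1B\Bigr)o_n\le T_{\mathcal OPT}(\mathcal I)+\Bigl(1-\frac1B\Bigr)o_n,
\]
using $T_{\mathcal OPT}(\mathcal I)\ge O/B$. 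If $o_n\le T_{\mathcal OPT}(\mathcal I)/3$, the claimed bound follows immediately.

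\textbf{Main case: $o_n> T_{\mathcal OPT}(\mathcal I)/3$.} This is where I expect the work to be. Every job then exceeds a third of the optimal makespan, so each machine in the optimal schedule processes at most two jobs. Hence $n\le 2B$. I will show that LPT is actually optimal in this regime, contradicting the counterexample.

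With at most two jobs per machine, an exchange argument shows that some optimal schedule has a specific form. It places $o_1,\dots,o_{\min(n,B)}$ on distinct machines and pairs the remaining jobs in reverse order: the largest with the smallest second job, and so on. Any crossing pair can be uncrossed without increasing the maximum load.

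Next I will check that LPT produces exactly this configuration. The first $B$ jobs go to distinct machines. Each subsequent job, taken in decreasing order, goes to the machine that frees earliest, which is the one holding the currently shortest first job, since no machine yet holds two jobs. This yields the reverse pairing, so LPT attains the optimum. That contradiction completes the argument.

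The delicate parts are the exchange argument and confirming that ties in free times do not break the reverse pairing. I would handle both by a direct induction on the number of second-round jobs.
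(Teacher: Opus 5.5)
Your route is Graham's own minimal-counterexample proof of the LPT bound. That is all the paper relies on: it gives no argument of its own and only cites Theorem~2 of Graham (1969). The following steps are sound:
\begin{itemize}
\item the reduction to LPT on $B$ identical machines;
\item the trimming to a counterexample whose last-finishing job is $o_n$;
\item the bound $T\le O/B+(1-\frac{1}{B})o_n$ and the easy case $o_n\le T_{\mathrm{OPT}}(\mathcal I)/3$;
\item the exchange argument for the reverse pairing.
\end{itemize}

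The one step that fails as justified is your claim that LPT reproduces the reverse pairing \emph{since no machine yet holds two jobs}. Once $o_{B+1}$ is placed on the machine of $o_B$, that machine does hold two jobs. It frees at time $o_B+o_{B+1}$, which could a priori come before $o_{B-1}$, and then list scheduling would start $o_{B+2}$ there as a third job. Without your case hypothesis this really happens. For $B=2$ and lengths $10,3,3,3$, LPT stacks all three $3$'s on one machine, because that machine frees at time $6<10$. So this step must use $o_n>T_{\mathrm{OPT}}(\mathcal I)/3$, and ties are not the real difficulty.

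The repair is short and fits into your induction on the number of second-round jobs. Suppose that when $o_{B+j}$ ($j\ge 2$) is dispatched, some machine holding $o_{B-i+1}$ and $o_{B+i}$ ($i<j$) frees no later than the machine holding $o_{B-j+1}$. Then
\[
o_{B-j+1}\ \ge\ o_{B-i+1}+o_{B+i}\ >\ \frac{2}{3}\,T_{\mathrm{OPT}}(\mathcal I).
\]
Every other job exceeds $T_{\mathrm{OPT}}(\mathcal I)/3$, so each of $o_1,\dots,o_{B-j+1}$ must run alone on its own machine in any optimal schedule. The remaining $n-B+j-1$ jobs must then fit at most two per machine on the other $j-1$ machines. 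This forces $n\le B+j-1$, contradicting the existence of $o_{B+j}$.

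Because the contradiction uses only the weak inequality, it also rules out ties in free times. With this step inserted, the LPT schedule coincides with the reverse pairing, and your argument goes through.
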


Noticing that the competitive ratio in Phase TWO is larger than that obtained in Phase ONE, for $B=2$. Therefore, in both Phase ONE and Phase TWO we have
\[
\frac{T_{\mathcal OPT}(\mathcal I)}{T_{\text{ISJL}}(\mathcal I)}\;\ge
\begin{cases}
\min\{\dfrac{\alpha+2\beta}{\alpha+3\beta},\,\dfrac{4\alpha}{6\alpha-\beta}\}, & \beta \in [0, \alpha] \\[10pt]
\min\{\dfrac{2\alpha+\beta}{2\alpha+2\beta},\,\dfrac{4\alpha}{6\alpha-\beta}\}, & \beta \in (\alpha, 2\alpha]
\end{cases}
\]
which completes the proof.
\Halmos 
\endproof

\begin{remark}
    The best choice for parameter $\beta$ can be obtained by solving the inequality. Noticing the monotony of the two functions, the best $\beta$ solves
\[
\frac{\alpha+2\beta}{\alpha+3\beta}
=
\frac{4\alpha}{6\alpha-\beta}.
\]
Therefore, $\beta=\frac{\sqrt{17}-1}{4}\alpha \approx 0.78\alpha$ gives the maximum lower bound for the competitive ratio at $\frac{25+\sqrt{17}}{38} \approx 0.77$. 
\end{remark}

\begin{remark}
\label{rem:naive}
    A naive choice for $\beta$ is $\beta=\alpha$, which gives a competitive ratio of $\frac45$ in under-inserting and a competitive ratio of $\frac34$ in over-inserting. Therefore, the overall competitive ratio of this simplified choice is bounded by $\frac34=0.75$.
\end{remark}

\subsection{Generalization ($B \geq 2$)}

Motivated by the fact that the competitiveness of the insertion rule used in Phase TWO approaches $3/4$ as $B$ grows, as shown in Proposition~\ref{prop:graham}, and to keep the mechanism simple, we generalize ISJL by fixing $\beta=\alpha$. Let the unprocessed job lengths be ordered $o=\{o_1\ge o_2\ge \dots \ge o_n\}$. For $B\ge 2$ workers, define $M=\min\{B,\;|\{i:\,o_i>\alpha\}|\}$. If $o_2\le \alpha$, we immediately invoke Phase TWO; otherwise we build a batch around $o_1$ by allocating short jobs to accompany each of the next $M-1$ longest “long” jobs. For each $j\in\{2,\dots,M\}$ set a per-worker packing budget $\mathrm{cap}_j=\min\{2\alpha,\;o_1-o_j+\alpha\}$. Let $o_k$ be the largest job length not exceeding $\min\{2\alpha,\;o_1-o_M+\alpha\}$; we then scan the jobs $o_i$ for $i\ge k$ in decreasing order and greedily assign each to the largest index $j\in\{2,\dots,M\}$ whose current packed total plus $o_i$ stays within $\mathrm{cap}_j$. This produces disjoint short-job packs $\{w_{\text{array}}^j\}_{j=2}^M$ with $\sum w_{\text{array}}^j\le \mathrm{cap}_j$ by construction.

Scheduling proceeds in two stages while holding $o_1$ on worker $1$. First, on each worker $j\in\{2,\dots,M\}$ we run a prefix of its packed short jobs of total length exactly $\alpha$ in parallel; this ensures that no worker advances more than $\alpha$ units ahead of the slowest active job and thus bounds the fairness violation by $\alpha$. Second, we continue on each worker $j$ with the remaining tail of its pack, of length $\mathrm{tail}_j=\max\{0,\sum w_{\text{array}}^j-\alpha\}$, still in parallel with $o_1$. These two stages constitute Phase ONE for the current batch.

Phase TWO handles fill-in while any resources are idle. Let $A_{\max}$ be the maximum remaining processing time across all jobs currently running. As long as $A_{\max}\le \alpha$, we place the longest unprocessed job on any idle worker; if $A_{\max}>\alpha$, we stop admitting new jobs and return to Phase ONE to rebuild packs around the current longest jobs. The algorithm alternates between these two phases, rebuilding batches as the set of unprocessed jobs changes, until all work completes. With $\beta=\alpha$, the packing budgets $\mathrm{cap}_j$ and the $\alpha$-prefix rule together enforce the $\alpha$-bounded fairness constraint while exploiting the available parallelism across $B$ workers.

\begin{algorithm}
\caption{Generalized Insert Short Jobs with Limit (ISJL, $B\ge 2$)}
\label{alg:isj-lim-gen}
\label{alg:isjl-b}
\begin{algorithmic}[1]
  \STATE \textbf{Input: } $\alpha \ge 0$, number of workers $B\ge 2$
  \STATE \textbf{Phase ONE}
  \STATE \textbf{Denote} $o = \{o_1 \ge o_2 \ge \dots \ge o_n\}$ for all undo jobs
  \STATE $M \leftarrow \min\bigl\{B,\; |\{i\mid o_i>\alpha\}|\bigr\}$, $o_k \leftarrow \max\{o_i \mid o_i \le \min\{2\alpha,\, o_1 - o_M + \alpha\}\}$
  \FOR{$j = 2,\dots,B$}
      \STATE $w_{\text{array}}^j \leftarrow [\,]$
  \ENDFOR
  \FOR{$j = 2,\dots,M$}
      \STATE $\text{cap}_j \leftarrow \min\{2\alpha,\, o_1 - o_j + \alpha\}$
  \ENDFOR
  \FOR{$i = k,\dots,n$}
  \STATE $\mathrm{assigned}\leftarrow \mathrm{false}$
  \FOR{$j = M+1,\dots,B$}
      \IF{$\mathrm{assigned}=\mathrm{false}$ and $o_i + \sum w_{\text{array}}^j \le \alpha$}
          \STATE \textbf{append} $o_i$ to $w_{\text{array}}^j$; $\mathrm{assigned}\leftarrow \mathrm{true}$
      \ENDIF
  \ENDFOR
  \FOR{$j = M,\dots,2$}
      \IF{$\mathrm{assigned}=\mathrm{false}$ and $o_i + \sum w_{\text{array}}^j \le \mathrm{cap}_j$}
          \STATE \textbf{append} $o_i$ to $w_{\text{array}}^j$; $\mathrm{assigned}\leftarrow \mathrm{true}$
      \ENDIF
  \ENDFOR
\ENDFOR
  \FOR{$j = 2,\dots,B$}
      \STATE \textbf{run} a prefix of $w_{\text{array}}^j$ of length $\alpha$ on worker $j$ in parallel
      \STATE \textbf{idle} worker $1$ for $\min\{\max\{\sum o_{\text{array}}^M-\alpha,0\},\alpha\}$
  \ENDFOR
  \FOR{$j = 2,\dots,B$}
      \STATE \textbf{run} the remaining part of $w_{\text{array}}^j$ of length $\text{tail}_j$ on worker $j$, in parallel with $o_1$
  \ENDFOR

  \STATE \textbf{Phase TWO}
  \WHILE{a resource is idle}
      \STATE $A_{\max}\leftarrow$ current maximum output length among all running jobs
      \STATE $o_m \leftarrow \max\{o_i \in o \mid o_i \text{ is undo}\}$ 
      \IF {$A_{\max} \le \alpha$}
          \STATE \textbf{run} $o_m$ on an arbitrary idle worker
      \ELSE
          \STATE \textbf{break}
      \ENDIF
  \ENDWHILE
  \STATE \textbf{repeat from Phase ONE with the remaining undo jobs}
\end{algorithmic}
\end{algorithm}

\begin{theorem}[Competitive ratio of Algorithm~\ref{alg:isj-lim-gen}]
\label{thm:cr-isj-lim-gen}
For any $B \ge 2$, we have
\[
\text{CR}(\text{ISJL}) \ge \frac34. 
\]
\end{theorem}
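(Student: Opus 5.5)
I would follow the same template as the proof of Theorem~\ref{thm:cr-isj-limited}. By Lemma~\ref{lem:1} it suffices to show $T_{\text{ISJL}}(\mathcal I)\le \tfrac43 T_{\mathcal OPT}(\mathcal I)$ on every instance. I would decompose the schedule produced by Algorithm~\ref{alg:isj-lim-gen} into stages, defined as in Definition~\ref{def:stage}: each entry to Phase ONE starts a stage, and each maximal run of Phase TWO is a fill-in period. I then bound the loss of each piece separately against lower bounds on $T_{\mathcal OPT}$. Within a stage the relevant lower bounds are the length of the anchoring job $o_1$ (any schedule must process it serially) and the work bound $O/B$. The global argument then sums stage-wise inequalities of the form $T^{\text{stage}}_{\text{ISJL}}\le \tfrac43 T^{\text{stage}}_{\text{lb}}$, where $T^{\text{stage}}_{\text{lb}}$ are quantities whose sum is at most $T_{\mathcal OPT}$.

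For Phase ONE, fixing $\beta=\alpha$, I would treat each worker $j\in\{2,\dots,M\}$ as a two-machine subproblem paired with worker $1$ carrying $o_1$. Its pack budget $\mathrm{cap}_j=\min\{2\alpha,o_1-o_j+\alpha\}$ is exactly the $B=2$ threshold $q$ with $o_2$ replaced by $o_j$.

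\textbf{Over-inserting.} Here I redo the computation of Lemma~\ref{lem:idle-time-lim} worker by worker with $\beta=\alpha$. The per-worker ratio $(L+P)/(S+P+Q)$ is minimized at $S=L$, $Q=P\uparrow\alpha$, $L\downarrow 2\alpha$, giving $\tfrac{3\alpha}{4\alpha}=\tfrac34$.

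\textbf{Under-inserting.} Lemma~\ref{lem:under-insert-lim} with $\beta=\alpha$ gives $\tfrac{4\alpha}{5\alpha}=\tfrac45\ge\tfrac34$, as in Remark~\ref{rem:naive}.

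Workers $j>M$ only receive packs of total length at most $\alpha$, run alongside the $\alpha$-prefix phase. They therefore never extend the stage beyond what worker $1$ already forces, and I would argue they contribute no additional loss. The stage length is the maximum over workers, but every worker is compared against the same $o_1$ in the optimum. So the worst per-worker ratio, $\tfrac34$, governs the whole stage.

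For Phase TWO, I would invoke Proposition~\ref{prop:graham}. During fill-in the algorithm is list scheduling in LPT order, restricted only by the requirement $A_{\max}\le\alpha$. When that requirement is inactive, the Graham bound gives a ratio of at least $1/(\tfrac43-\tfrac1{3B})\ge\tfrac34$ for all $B$. When it binds, the algorithm exits to Phase ONE, so the wasted time is charged to the next stage's Phase ONE analysis rather than to Phase TWO.

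The main obstacle is the gluing step: showing that per-stage ratios compose into a global bound, that is, that $T_{\mathcal OPT}$ dominates the sum of the per-stage lower bounds. The optimal schedule need not respect our stage boundaries, and it could interleave jobs that ISJL assigns to different stages. My first attempt would be an exchange or charging argument. Each stage is charged to its anchor $o_1$ together with the work it processes. Because stages are built in decreasing order of the longest remaining job, an optimal schedule must spend at least $\max\{o_1^{(r)},\text{work}_r/B\}$ time on stage $r$'s jobs, and I would try to show these time intervals can be made disjoint in a modified optimal schedule. If that fails, the fallback is a cruder two-case bound in the style of the proof of Theorem~\ref{thm:ljf-competitive}: split on whether idle time is dominated by the longest job or by total work, and use that the idle fraction within any stage is at most $\tfrac14$ of that stage's length.
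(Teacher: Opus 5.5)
\paragraph{Same route as the paper, with the same unfilled gap.} Your outline follows the paper's proof exactly. The paper invokes Proposition~\ref{prop:graham} for Phase TWO and applies Lemmas~\ref{lem:idle-time-lim} and~\ref{lem:under-insert-lim} with $\beta=\alpha$ to each pair $(o_1,o_i)$, $i\in[2,B]$. It then asserts, without argument, that the overall ratio is the minimum of $\tfrac34$, $\tfrac45$ and $\tfrac{3B}{4B-1}$. So the gluing step you flag is a genuine gap that the paper does not fill either, and your proposal leaves it open.

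Two other ingredients also do not transfer the way you use them:
\begin{itemize}
\item Graham's $\tfrac43-\tfrac1{3B}$ bound concerns LPT list scheduling of a complete, unconstrained instance against that instance's own optimum. It gives no inequality for a Phase TWO segment that starts with partially processed jobs and is cut off by the $A_{\max}\le\alpha$ gate.
\item The per-worker ratios are measured against two-machine optima of gadgets $\{o_1,o_j,P_j,Q_j\}$. These are not lower bounds for a $B$-machine optimum: for $B\ge3$ the tight gadget can be finished in time $o_1$ by starting $o_1$, $o_j$, $P_j$ together and running $Q_j$ right after $P_j$.
\end{itemize}

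\paragraph{Why both of your closures fail.}
(i) Stage-local lower bounds are not additive, already on the tight over-inserting instance behind Lemma~\ref{lem:idle-time-lim}. Take $B=2$, $L=S=2\alpha+\varepsilon$, $P=Q=\alpha-\delta$.
\begin{itemize}
\item Then $q=\alpha$ and the pack is $\{P\}$, so $S$ starts at $\alpha-\delta$.
\item When $L$ finishes, $S$ has progress $\alpha+\varepsilon+\delta>\alpha$, so worker~1 idles and $Q$ is served alone in a second stage.
\item Your bounds $\max\{o_1^{(r)},\mathrm{work}_r/2\}$ are about $\tfrac52\alpha$ and $\alpha$. Even the exact optima of the two stages' job sets are about $3\alpha$ and $\alpha$.
\item However, $T_{\mathrm{OPT}}=O/2\approx3\alpha$, because the optimum runs $P$ beside $Q$.
\end{itemize}
So the intervals you want to make disjoint overlap in every optimal schedule, and the loss the lemma measures is intrinsically cross-stage.

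(ii) The LJF-style fallback fails because $\max\{o_1,O/B\}$ is too weak a lower bound under fairness.
\begin{itemize}
\item A job co-batched with job $i$ must have started within $\alpha$ of $i$, since the progress gap is constant while both are active.
\item With $B=2$ and all lengths above $2\alpha$, the earlier of two co-runners of $i$ would have to finish within $2\alpha$. Hence each job is co-batched with at most one other, and $T_{\mathrm{OPT}}\ge y_1+y_3+y_5+\cdots$ for the sorted lengths $y_1\ge y_2\ge\cdots$.
\item With lengths $1,\tfrac14,\tfrac14,\tfrac18,\tfrac18$ and $\alpha<\tfrac1{16}$, this gives $T_{\mathrm{OPT}}=T_{\mathrm{ISJL}}=\tfrac{11}8$, while $\max\{o_1,O/2\}=1$.
\end{itemize}
So $T_{\mathrm{ISJL}}\le\tfrac43\max\{o_1,O/B\}$ is false even though ISJL is optimal here. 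Its first stage also leaves a machine idle for $\tfrac34$ of the stage, so a per-stage idle fraction of at most $\tfrac14$ does not hold either.

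What is missing, in your proposal and in the paper, has two parts. First, a global accounting against a lower bound on $T_{\mathrm{OPT}}$ that captures fairness-forced idling, such as the pairing bound above. Second, an argument that the optimum cannot recoup ISJL's per-stage losses by pairing jobs across ISJL's stage boundaries.
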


\proof{Proof of Theorem~\ref{thm:cr-isj-lim-gen}.}
By Proposition~\ref{prop:graham}, we know that the competitive ratio of Phase TWO is bounded by $\frac{3B}{4B-1} > \frac34$. Consider two cases in Phase ONE:
\begin{enumerate}[label=\roman*)]
\item (Under-inserting) Inserting algorithm is not optimal in Phase ONE, i.e., $\lvert P_r \rvert < \lvert P_{opt} \rvert$.
\item (Over-inserting) Inserting P causes additional idle time, i.e., another job that could have been run in parallel is now running after $L$ and $S$ (which requires that $L_r < S_r + \lvert P_r \rvert$).
\end{enumerate}

In case i), for any $i \in [2,B]$, consider $o_1$ and $o_i$, idle time is bounded by Lemma~\ref{lem:under-insert-lim}.

In case ii), for any $i \in [2,B]$, consider $o_1$ and $o_i$, idle time is bounded by Lemma~\ref{lem:idle-time-lim}.

Therefore, the lower bound of overall competitive ratio is given by the minimum over the two phases,
\[
\text{CR}(\text{ISJL}) \ge \frac34, 
\]
completing the proof.
\Halmos 
\endproof

\subsection{Robust Scheduling under Prediction Uncertainty}
\label{sec:robust}

In practice, the exact decoding length $o_i$ of request $i$ is unknown upon arrival. Instead, the system
may have access to a prediction interval $[\ell_i,u_i]$ such that $\ell_i \le o_i \le u_i$.
We define the (minimum) prediction-interval accuracy as
\[
\kappa := \min_i \frac{\ell_i}{u_i}\in(0,1].
\]

\paragraph{Robust-ISJL (Upper-Bound ISJL).}
We define \textbf{Robust-ISJL} by running Algorithm~\ref{alg:isj-lim-gen} 
using the proxy sizes $\widehat{o}_i := u_i$ wherever job sizes are used for sorting, forming packs,
and computing quantities such as $M$ and $cap_j$.
During execution, the fairness check always uses the \emph{true run-time progress}
$\{a_i(t)\}$ (e.g., the Phase TWO condition $A_{\max}(t)\le \alpha$ is evaluated using the current
maximum progress among the active jobs, as in Algorithm~\ref{alg:isj-lim-gen}). The following theorem shows that our Robust-ISJL satisfies the fairness constraint, and the proof can be found in Appendix \ref{append:alg}.

\begin{theorem}[Fairness of Robust-ISJL]
\label{thm:robust-fairness}
Robust-ISJL satisfies the fairness constraint~\eqref{eq:fairness} for any $\alpha \ge 0$.
\end{theorem}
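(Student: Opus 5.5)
The plan is to prove the statement as an invariant over time. We show that at every decision epoch $t$ the active set $S^{(t)}$ satisfies $\max_{i\in S^{(t)}} a_i^{(t)}-\min_{i\in S^{(t)}} a_i^{(t)}\le\alpha$. The key observation is that the proxy sizes $\widehat o_i=u_i$ enter only through \emph{which} jobs are selected and \emph{in what order} they are packed: sorting, $M$, $\mathrm{cap}_j$ and the packs $w^j_{\text{array}}$. They never enter the timing of admissions that could violate fairness, because every admission test is evaluated on the true runtime progress $a_i(t)$. We therefore isolate the events at which the spread can grow. Between events, all active jobs advance by exactly one token per step, so differences $a_i-a_k$ are constant. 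When a job completes or is removed, the spread can only shrink. So the spread can increase only when a new job is admitted with $a=0$, and then the new spread is exactly $\max_{i\in S^{(t)}} a_i^{(t)} = A_{\max}(t)$.

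The argument then reduces to checking every admission point of Algorithm~\ref{alg:isj-lim-gen}, run with proxies, and showing that $A_{\max}(t)\le\alpha$ whenever a job is inserted.

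\emph{Phase TWO.} This is immediate, since the admission rule is literally ``admit only if $A_{\max}(t)\le\alpha$'' evaluated on true progress.

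\emph{Start of a Phase ONE stage.} The stage is initiated only after Phase TWO breaks and the running set drains, or at time $0$. The jobs launched simultaneously on workers $2,\dots,B$ together with $o_1$ all have progress $0$, so the spread is $0$.

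\emph{Within-pack admissions.} During the $\alpha$-prefix stage and the tail stage, a new short job is started on worker $j$ whenever its predecessor in $w^j_{\text{array}}$ finishes. Here I would \emph{not} rely on the nominal lengths $u_i$ to argue that the prefix has length $\alpha$. Instead I would implement, and argue for, Robust-ISJL's runtime guard: each such start is gated, like Phase TWO, by the check $A_{\max}(t)\le\alpha$ on true progress, and if the check fails the job is deferred to the next stage. The statement says the fairness check ``always uses the true run-time progress'', and I would read it as applying to every admission.

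Concretely, I would argue that the longest-running job $o_1$ is idled (line ``idle worker 1'') until the packs have consumed their $\alpha$-prefix. A true completion can come earlier than predicted, since $o_i\le u_i$. In that case the next packed job either starts while $A_{\max}\le\alpha$, or is held. So no insertion ever occurs against a job with progress exceeding $\alpha$.

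The main obstacle is precisely this within-Phase-ONE step. The deterministic ISJL argument uses exact lengths to guarantee that the prefixes sum to exactly $\alpha$ and that $o_1$ starts with the right offset. Under misprediction, a short job may finish early or run longer than its pack slot assumed. This could start $o_1$ late relative to a long-running short job, or insert a short job after $o_1$ has exceeded $\alpha$.

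I would handle this by showing two things. First, the idle period of worker~1 is bounded by $\min\{\cdot,\alpha\}$, so while $o_1$ is idle the only active jobs are short jobs with progress at most $\alpha$ relative to each other's starts. Second, any job that overruns its predicted slot only increases its own $a_i$ in lockstep with the others, which cannot change differences. Combining these with the gating check closes the induction. If the algorithm as literally written lacks the gate inside Phase ONE, I would state that Robust-ISJL applies the same $A_{\max}\le\alpha$ guard there, which is the reading under which the theorem holds for all $\alpha\ge0$, including $\alpha=0$ where it degenerates to LJF-like simultaneous starts as in Proposition~\ref{prop:1}.
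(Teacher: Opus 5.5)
Your proposal is correct and uses the same invariant argument as the paper. Lockstep advancement freezes all progress gaps, completions can only shrink the range, and the range can grow only when a progress-$0$ job is admitted, which is gated by the runtime check $A_{\max}(t)\le\alpha$ on true progress. The paper's proof simply asserts that every admission passes this check, which is exactly the reading you make explicit for Phase ONE. Once that universal gate is assumed, your extra Phase ONE timing arguments are unnecessary. These are the assumption that the running set drains, the claim that $o_1$ starts simultaneously with the packs (false in general, since worker~1 may first idle in Algorithm~\ref{alg:isj-lim-gen}), and the idle-period bound.
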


\paragraph{Why a naive $\kappa$-multiplicative bound is invalid.}
A tempting approach is to prove $T_{\text{OPT}}(I_u)\le \rho T_{\text{OPT}}(I)$ for $\rho:=1/\kappa$ by time-stretching
an optimal schedule for $I$. This is false under the token-gap fairness constraint~\eqref{eq:fairness}:
time-stretching can amplify progress gaps and violate feasibility (it would generally require relaxing
$\alpha$ to $\rho\alpha$).

\paragraph{A prediction-error dependent guarantee.}
Define the request-level upper-bound slack
\[
e_i:=u_i-o_i\ge 0,
\]
and the aggregate upper-bound overestimation error
\[
\Delta:=\sum_{i=1}^n e_i=\sum_{i=1}^n (u_i-o_i).
\]

Let \(\mathcal P_{\alpha,B}(\mathcal I)\) denote the family of partitions of \([n]\) into groups \(G\) such that \(|G|\le B\) and
\[
\max_{i\in G} o_i-\min_{i\in G} o_i\le \alpha .
\]
For each request, define the upper-bound slack \(e_i:=u_i-o_i\). The packable residual prediction error is
\[
\Delta_{\alpha,B}^{\mathrm{pack}}(\mathcal I,u)
:=
\min_{\mathcal P\in\mathcal P_{\alpha,B}(\mathcal I)}
\sum_{G\in\mathcal P}\max_{i\in G} e_i .
\]

\begin{theorem}[Competitive guarantee of Robust-ISJL under interval uncertainty]
\label{thm:robust-cr}
For any $B\ge 2$ and any instance with $\ell_i\le o_i\le u_i$,
Robust-ISJL satisfies
\[
T_{\mathrm{Robust\text{-}ISJL}}(\mathcal I)
\le
\frac{4}{3}
\left(
T_{\mathrm{OPT}}(\mathcal I)
+
\Delta_{\alpha,B}^{\mathrm{pack}}(\mathcal I,u)
\right).
\]
Equivalently,
\[
\frac{T_{\mathrm{OPT}}(\mathcal I)}
{T_{\mathrm{Robust\text{-}ISJL}}(\mathcal I)}
\ge
\frac{3}{4}\cdot
\frac{1}
{1+
\Delta_{\alpha,B}^{\mathrm{pack}}(\mathcal I,u)/
T_{\mathrm{OPT}}(\mathcal I)}.
\]
In particular, when predictions are perfect, i.e., $u_i=o_i$ for all $i$,
we have $\Delta_{\alpha,B}^{\mathrm{pack}}(\mathcal I,u)=0$ and recover
$\mathrm{CR}(\mathrm{Robust\text{-}ISJL})\ge 3/4$.
\end{theorem}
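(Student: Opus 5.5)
The plan is a two-step reduction: first compare Robust-ISJL to ISJL run on the inflated instance $\mathcal I_u$ (job lengths $u_i$), and then compare the optimum of $\mathcal I_u$ to the optimum of $\mathcal I$ using the packable residual error.

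\textbf{Step 1: Robust-ISJL is no worse than ISJL on $\mathcal I_u$.} All sorting, packing, $M$ and $\mathrm{cap}_j$ computations use $\widehat o_i=u_i$. The skeleton of decisions is therefore exactly that of Algorithm~\ref{alg:isj-lim-gen} on $\mathcal I_u$; only the executions differ, since each job actually stops after $o_i\le u_i$ tokens. I will argue by a coupling/monotonicity step that the realized makespan is at most $T_{\text{ISJL}}(\mathcal I_u)$. Consider the schedule in which every job $i$ is held for its full reserved window of $u_i$ steps, so its slot is released only at the predicted time. This schedule is feasible: fairness is checked on true progress, which never exceeds the progress used in the reserved window, so Theorem~\ref{thm:robust-fairness} applies. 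Its makespan equals $T_{\text{ISJL}}(\mathcal I_u)$. Early completions only free slots sooner. In Phase TWO they can only admit jobs earlier, and in Phase ONE the stage lengths are governed by $o_1$ and the packs, which are bounded by their $u$-values. I expect to state this as a dominance lemma, $T_{\text{Robust-ISJL}}(\mathcal I)\le T_{\text{ISJL}}(\mathcal I_u)$. Theorem~\ref{thm:cr-isj-lim-gen} together with Lemma~\ref{lem:1} then gives $T_{\text{ISJL}}(\mathcal I_u)\le \tfrac43 T_{\text{OPT}}(\mathcal I_u)$.

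\textbf{Step 2: $T_{\text{OPT}}(\mathcal I_u)\le T_{\text{OPT}}(\mathcal I)+\Delta^{\mathrm{pack}}_{\alpha,B}(\mathcal I,u)$.} The paper explicitly warns that time-stretching fails, so I will use an additive construction instead. I fix a minimizing partition $\mathcal P\in\mathcal P_{\alpha,B}(\mathcal I)$ and take an optimal fair schedule $\sigma$ for $\mathcal I$. For each group $G\in\mathcal P$, I append a block of $\max_{i\in G}e_i$ steps in which the members of $G$ run in parallel to finish their extra $e_i$ tokens. This block uses at most $B$ slots because $|G|\le B$.

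The appended blocks must also satisfy fairness. Inside a block all members start from progress $o_i$, and these values differ by at most $\alpha$ by the definition of $\mathcal P$. Each member advances by one token per step until it reaches $u_i$, so the gap among still-active members never exceeds $\alpha$: all active members advance together and the max--min gap is preserved. Jobs that finish leave the batch and cannot increase the gap. Concatenating $\sigma$ with the blocks in sequence gives a feasible schedule for $\mathcal I_u$ of length $T_{\text{OPT}}(\mathcal I)+\sum_G\max_{i\in G}e_i$. One subtlety is that a job must run its tokens contiguously in our model. I will handle this by noting that the model only constrains progress and batch composition, not contiguity, so the tail block is an allowed pause-and-resume. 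Alternatively, if contiguity is required, I will splice each group's extension at the moment its members finish in $\sigma$ and shift the rest of $\sigma$ later by that amount; this shift preserves all gaps of the other batches.

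Combining the two steps gives the first displayed inequality, and the second is a rearrangement. When $u_i=o_i$, every $e_i=0$, so $\Delta^{\mathrm{pack}}=0$ and the bound recovers $\tfrac34$.

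\textbf{Main obstacle.} I expect Step 1's dominance to be hardest: list-scheduling anomalies (Graham) mean that shorter jobs can in general lengthen a greedy schedule. To avoid needing monotonicity of the greedy itself, I will try first to define Robust-ISJL's accounting against the reserved-window schedule directly.
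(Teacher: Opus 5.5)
Your proposal follows the paper's proof essentially step for step. The paper likewise obtains $T_{\mathrm{Robust\text{-}ISJL}}(\mathcal I)\le T_{\mathrm{ISJL}}(\mathcal I_u)$ by treating Robust-ISJL as ISJL's $u$-based batching decisions executed with early completions, applies Theorem~\ref{thm:cr-isj-lim-gen} on $\mathcal I_u$, and bounds $T_{\mathrm{OPT}}(\mathcal I_u)\le T_{\mathrm{OPT}}(\mathcal I)+\Delta^{\mathrm{pack}}_{\alpha,B}(\mathcal I,u)$ by appending one lockstep residual block of length $\max_{i\in G}e_i$ per group $G$ after an optimal fair schedule for $\mathcal I$. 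The paper only asserts the Step~1 dominance and does not discuss contiguity, so your two caveats are open there too; your splicing fallback would fail as stated, since group members need not finish together in $\sigma$ and shifting $\sigma$ would pause other jobs active at the splice point.
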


\begin{proof}{Proof of Theorem~\ref{thm:robust-cr}.}
Let $\mathcal I_u$ denote the proxy instance whose request lengths are exactly
$u_i$. Running Algorithm~\ref{alg:isj-lim-gen} on $\mathcal I_u$
yields makespan $T_{\mathrm{ISJL}}(\mathcal I_u)$. Executing the same batching
decisions on the realized instance $\mathcal I$ cannot take longer because each
request requires only $o_i\le u_i$ service; early completions only remove jobs
from the active set. Hence,
\[
T_{\mathrm{Robust\text{-}ISJL}}(\mathcal I)
\le
T_{\mathrm{ISJL}}(\mathcal I_u).
\]

By Theorem~\ref{thm:cr-isj-lim-gen},
\[
T_{\mathrm{ISJL}}(\mathcal I_u)
\le
\frac{4}{3}
T_{\mathrm{OPT}}(\mathcal I_u).
\]

We upper-bound $T_{\mathrm{OPT}}(\mathcal I_u)$ using an
optimal $\alpha$-fair schedule for the realized instance $\mathcal I$. After
this schedule completes, request $i$ has received $o_i$ units of service. To
complete the proxy instance $\mathcal I_u$, request $i$ requires an additional
\[
e_i=u_i-o_i
\]
units of service.

Fix any partition
$\mathcal P\in\mathcal P_{\alpha,B}(\mathcal I)$. For each group
$G\in\mathcal P$, process all requests in $G$ together for
$\max_{i\in G} e_i$ time steps, removing request $i$ once it has received its
additional $e_i$ units of service. Capacity feasibility follows from
$|G|\le B$.

We now verify fairness. At the beginning of the residual processing for group
$G$, the progress values of requests in $G$ are $\{o_i:i\in G\}$. By the
definition of $\mathcal P_{\alpha,B}(\mathcal I)$,
\[
\max_{i\in G} o_i-\min_{i\in G} o_i\le \alpha.
\]
While active requests in $G$ are processed together, their progress values
increase in lockstep, so the max--min progress gap among active requests is
unchanged. When a request completes its residual work and is removed, the
max--min gap cannot increase. Therefore, each residual group is processed by a
feasible $\alpha$-fair schedule.

The residual tail length induced by partition $\mathcal P$ is
\[
\sum_{G\in\mathcal P}\max_{i\in G} e_i.
\]
Minimizing over all feasible partitions yields
\[
T_{\mathrm{OPT}}(\mathcal I_u)
\le
T_{\mathrm{OPT}}(\mathcal I)
+
\Delta_{\alpha,B}^{\mathrm{pack}}(\mathcal I,u).
\]
Combining (1)--(3) gives the stated result.
\Halmos
\end{proof}

\begin{corollary}[Scale-normalized robustness]
\label{cor:robust-scale-normalized}
Let
\[
O(\mathcal I):=\sum_{i=1}^n o_i,
\qquad
\bar\varepsilon(\mathcal I):=
\frac{\Delta}{O(\mathcal I)}
=
\frac{\sum_{i=1}^n (u_i-o_i)}{\sum_{i=1}^n o_i}
\]
denote the total realized token workload and the average upper-bound slack per
realized token. Then Robust-ISJL satisfies
\[
\frac{T_{\mathrm{OPT}}(\mathcal I)}
{T_{\mathrm{Robust\text{-}ISJL}}(\mathcal I)}
\ge
\frac{3}{4}\cdot
\frac{1}{1+B\bar\varepsilon(\mathcal I)}.
\]
In particular, if $u_i\le (1+\varepsilon)o_i$ for all $i$, then
\[
\frac{T_{\mathrm{OPT}}(\mathcal I)}
{T_{\mathrm{Robust\text{-}ISJL}}(\mathcal I)}
\ge
\frac{3}{4}\cdot
\frac{1}{1+B\varepsilon}.
\]
\end{corollary}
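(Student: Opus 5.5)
We derive the corollary directly from Theorem~\ref{thm:robust-cr}. All that is needed is an upper bound on the packable residual error $\Delta_{\alpha,B}^{\mathrm{pack}}(\mathcal I,u)$ in terms of $\bar\varepsilon(\mathcal I)$ and $T_{\mathrm{OPT}}(\mathcal I)$. Since the ratio bound in Theorem~\ref{thm:robust-cr} is monotone decreasing in $\Delta_{\alpha,B}^{\mathrm{pack}}/T_{\mathrm{OPT}}$, it suffices to show
\[
\frac{\Delta_{\alpha,B}^{\mathrm{pack}}(\mathcal I,u)}{T_{\mathrm{OPT}}(\mathcal I)} \le B\,\bar\varepsilon(\mathcal I).
\]

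\textbf{Step 1: bound the packed error by the total slack.} Choose the trivial partition of $[n]$ into singletons. Each singleton has size $1\le B$ and max--min gap $0\le\alpha$, so it belongs to $\mathcal P_{\alpha,B}(\mathcal I)$. Its cost is $\sum_i e_i=\Delta$. Hence $\Delta_{\alpha,B}^{\mathrm{pack}}(\mathcal I,u)\le\Delta=\bar\varepsilon(\mathcal I)\,O(\mathcal I)$.

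\textbf{Step 2: the volume lower bound on the optimum.} As in the proof of Theorem~\ref{thm:ljf-competitive}, each step serves at most $B$ tokens, so $T_{\mathrm{OPT}}(\mathcal I)\ge O(\mathcal I)/B$, i.e.\ $O(\mathcal I)\le B\,T_{\mathrm{OPT}}(\mathcal I)$.

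\textbf{Step 3: combine.} Steps~1 and~2 give
\[
\Delta_{\alpha,B}^{\mathrm{pack}}(\mathcal I,u)\le \bar\varepsilon(\mathcal I)\,O(\mathcal I)\le B\,\bar\varepsilon(\mathcal I)\,T_{\mathrm{OPT}}(\mathcal I).
\]
Substituting into the equivalent form of Theorem~\ref{thm:robust-cr} yields the first claim.

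For the second claim, if $u_i\le(1+\varepsilon)o_i$ for all $i$, then $e_i\le\varepsilon o_i$. Summing over $i$ gives $\Delta\le\varepsilon\,O(\mathcal I)$, so $\bar\varepsilon(\mathcal I)\le\varepsilon$. Monotonicity of $x\mapsto 1/(1+Bx)$ then gives the stated bound.

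\textbf{Main obstacle.} There is no real difficulty here. The only point needing care is Step~1: the singleton partition must actually be admissible, i.e.\ groups of size one are allowed in $\mathcal P_{\alpha,B}(\mathcal I)$. The definition imposes only $|G|\le B$ and the gap condition, so singletons qualify. The factor $B$ in the bound is precisely the cost of using singletons instead of genuine packing. A better partition could remove this factor, but we do not attempt that here.
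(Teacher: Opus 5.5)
Your proof is correct and follows the paper's argument: apply Theorem~\ref{thm:robust-cr}, bound the error ratio using the volume bound $T_{\mathrm{OPT}}(\mathcal I)\ge O(\mathcal I)/B$, and use $e_i\le\varepsilon o_i$ for the second claim. Your Step~1, the singleton partition giving $\Delta_{\alpha,B}^{\mathrm{pack}}(\mathcal I,u)\le\Delta$, is a step the paper uses only implicitly when it substitutes $\Delta$ directly into the theorem, so stating it explicitly makes the argument cleaner.
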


The proof of Corollary~\ref{cor:robust-scale-normalized} can be found in Appendix~\ref{append:alg}.

\begin{remark}[Large-scale interpretation of the robust guarantee]
\label{rem:robust-large-scale}
Theorem~\ref{thm:robust-cr} strengthens the aggregate-error bound by recognizing
that residual overestimation errors can often be processed in fair batches. The
old aggregate term $\Delta=\sum_i(u_i-o_i)$ corresponds to the pessimistic
singleton partition, in which every residual error is processed separately.
When many requests have realized lengths within the same $\alpha$-window, the
effective residual penalty can be much smaller. For example, if all realized
lengths lie within one $\alpha$-window and all residual errors equal $e$, then
\[
\Delta=ne,
\qquad
\Delta_{\alpha,B}^{\mathrm{pack}}
=
\left\lceil\frac{n}{B}\right\rceil e,
\]
so the aggregate-error bound can overstate the residual penalty by a factor
approaching $B$.

Corollary~\ref{cor:robust-scale-normalized} clarifies the large-scale behavior.
Although $\Delta$ grows with the number of requests, the optimal makespan also
grows with the realized token workload. For fixed batch size $B$, the guarantee
depends on the average upper-bound slack $\bar\varepsilon(\mathcal I)=\Delta/O(\mathcal I)$,
not directly on $n$. Thus, large-scale operation does not by itself weaken the
robustness certificate. The certificate deteriorates when prediction intervals
are systematically loose relative to realized output lengths. In deployment,
this means Robust-ISJL should use calibrated upper bounds with controlled
average slack, rather than extremely conservative global upper bounds such as
the maximum context length for every request. Fairness remains robust because
admission decisions are always checked using realized run-time progress; interval
errors affect only the planning order and hence the efficiency guarantee.
\end{remark}

\subsection{Online Arrivals: An Extension and Its Limitation} \label{sec:online-discussion}

Consider an online release-date instance
\[
\mathcal I^r=\{(r_i,o_i)\}_{i=1}^n,
\]
where request $i$ is revealed to the scheduler at release time $r_i$. At time
$t$, the scheduler knows only the released unfinished requests. Let $Q(t)$ denote
the queue of released requests that have not yet started service.

Online-ISJL implements Algorithm~\ref{alg:isj-lim-gen} in a rolling manner. When
Phase ONE is invoked at time $t$, the algorithm sorts only the currently
available queue $Q(t)$ and computes $M$, $o_k$, and the packing budgets $cap_j$
using this released set. Requests that arrive during execution are inserted into
$Q(t)$ but do not change the currently running batch. During Phase TWO, whenever
a slot becomes free, the algorithm admits the longest currently released request
in $Q(t)$ if the usual fairness check $A_{\max}(t)\le \alpha$ is satisfied; if
the check fails, the algorithm returns to Phase ONE. If $Q(t)$ is empty, the
server waits until the next request arrives.

\begin{proposition}[Fairness of Online-ISJL]
\label{prop:online-isjl-fair}
For any arrival sequence $\{(r_i,o_i)\}_{i=1}^n$ and any $\alpha\ge 0$,
Online-ISJL satisfies the fairness constraint~\eqref{eq:fairness} at every time
step.
\end{proposition}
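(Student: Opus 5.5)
Online-ISJL inherits fairness from the offline ISJL mechanism, so the plan is to show that every admission of a request into the active set is covered by one of the offline admission rules, whose fairness Theorem~\ref{thm:robust-fairness} (and its offline counterpart for Algorithm~\ref{alg:isj-lim-gen}) already guarantees. The key step is an invariant: the active set changes only through (a) completions, (b) Phase ONE batch formation at some epoch $t$, and (c) Phase TWO admissions. Between these events all active requests advance in lockstep, one token per step, so the max--min progress gap is constant. It therefore suffices to check that no event creates a gap exceeding $\alpha$.

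For (a), removing a request from $S^{(t)}$ can only shrink $\max_i a_i^{(t)}-\min_i a_i^{(t)}$. For (c), Online-ISJL admits a new request (progress $0$) only when $A_{\max}(t)\le\alpha$. Once admitted, the gap is $A_{\max}(t)-0\le\alpha$, and lockstep evolution preserves it. For (b), the batch formed at epoch $t$ is exactly the offline Phase ONE construction of Algorithm~\ref{alg:isj-lim-gen} applied to the instance $Q(t)$, together with whatever is still running. Phase ONE is only entered after Phase TWO breaks or the previous batch ends, so I will argue the active set is then empty or consists only of members of the current cohort. The fairness argument for the offline Phase ONE then applies verbatim: packs are started in an $\alpha$-prefix while $o_1$ idles, and the budgets $\mathrm{cap}_j\le 2\alpha$ ensure $o_1$ starts no more than $\alpha$ behind any pack job. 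Because that argument depends only on the lengths in the currently available set, not on the whole instance, it holds for $Q(t)$.

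The remaining observation is that arrivals never touch the running batch; they are only inserted into $Q(t)$. Likewise, waiting while $Q(t)$ is empty merely advances time with an unchanged or empty active set. Hence release times cannot introduce any new event type, and induction over decision epochs gives \eqref{eq:fairness} at every step.

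The main obstacle is (b): making precise that the offline Phase ONE fairness proof, stated for a backlog available at time $0$, is local to each stage. In particular, I must confirm that requests admitted in an earlier stage are no longer active, or are compatible, when a new Phase ONE cohort starts. My first attempt will formalize the offline proof stage by stage (Definition~\ref{def:stage}), showing that each stage begins with a fairness-compatible active set. I will then note that Online-ISJL's stages are offline stages applied to $Q(t)$.
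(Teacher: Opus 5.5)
Your event decomposition (arrivals only enter $Q(t)$, lockstep progress between events, completions only shrink the range, Phase TWO admissions gated by $A_{\max}(t)\le\alpha$) is exactly the paper's proof. The paper disposes of Phase ONE in one sentence, saying it ``applies the same packing rule as Algorithm~\ref{alg:isj-lim-gen}.'' So everything rests on your step (b), and that is where your plan has a genuine gap.

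The right lens is this. Started requests are never paused, and a newly started request has progress $0$. Hence a schedule is $\alpha$-fair if and only if every request is started at a step at which all other active requests have progress at most $\alpha$. Step (b) must therefore check every Phase ONE start (each pack job on each worker, $o_1$, the long companions) against \emph{all} active requests. Your sketch only compares pack jobs with $o_1$.

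For $B=2$ there is a single pack, and that comparison suffices within a stage. For $B\ge 3$, the missing cross-worker pairs fail under the rule as written. Take $\alpha=10$, $B=3$, lengths $100,50,40,20$, and twenty requests of length $1$. Then $M=3$ and $\mathrm{cap}_2=\mathrm{cap}_3=20$. The greedy puts the length-$20$ request into $w_{\text{array}}^3$ and all unit requests into $w_{\text{array}}^2$, and the two packs run in parallel (prefix, then tail). At step $12$ the length-$20$ request has progress $11$ while a fresh unit request starts on worker $2$. With all $r_i=0$ this is an Online-ISJL run.

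Hence there is no offline Phase ONE fairness argument you can import verbatim. Theorem~\ref{thm:robust-fairness} is proved by checking only Phase TWO admissions. The paper offers only an informal remark that the budgets and the $\alpha$-prefix rule enforce fairness.

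Your other claim in (b) also fails as written. Phase TWO breaks precisely when some active request has progress above $\alpha$, so the active set is \emph{not} empty when Phase ONE is re-entered. Since that request's progress only grows, no member of the new cohort can be started until it completes.

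To turn your induction into a proof, you need an explicit start rule: every start, in Phase ONE as well as Phase TWO, is deferred until the maximum progress among active requests is at most $\alpha$. This makes a new cohort wait for carried-over requests, and it delays any pack job that would start beside a request already past progress $\alpha$. Under that rule, the criterion above gives fairness at once, and your arrival and lockstep observations carry over unchanged. Without it, the example shows the claim fails for the literal Phase ONE when $B\ge 3$. So neither your planned argument nor the paper's one-line assertion can close this step.
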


Proposition \ref{prop:online-isjl-fair} states that our Online-ISJL is fair and the proof can be found in Appendix \ref{append:alg}. Although it is hard to provide an upper bound of the competitive ratio of Online-ISJL, the next proposition provides a lower bound to the online setting. 

\begin{proposition}
\label{prop:online-lower-bound}
Fix any finite $\alpha\ge 0$ and set $B=2$. In the online release-date model
under the same nonpreemptive service convention as the main algorithms, no
deterministic online scheduling policy can guarantee competitive ratio $3/4$
against a clairvoyant offline optimum that knows all release times in advance.
More precisely, for every deterministic online policy and every $\delta>0$,
there exists an online instance such that
\[
\frac{T_{\mathrm{OPT}}^r(\mathcal I^r)}
     {T_{\mathcal A}^r(\mathcal I^r)}
\le
\frac{\sqrt 5-1}{2}+\delta
<
\frac34,
\]
where $T_{\mathrm{OPT}}^r$ is the optimal $\alpha$-fair makespan respecting
release times.
\end{proposition}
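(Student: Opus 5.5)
We prove Proposition~\ref{prop:online-lower-bound} with a standard adversary argument built around the golden ratio $\varphi=\frac{\sqrt5-1}{2}$, which satisfies $\varphi^2=1-\varphi$. The idea is to make the fairness constraint irrelevant. We choose all job lengths either equal or all at most $\alpha$, so that any two jobs may be co-batched at any time. When $\alpha$ is very small relative to the job lengths, we instead ensure that co-batched jobs always start together, so their progress stays equal. Either way the problem reduces to nonpreemptive online makespan minimization on $B=2$ identical machines with release dates. A $\varphi$-type lower bound is classical for that problem, and the construction only has to be shown compatible with the $\alpha$-fair model.

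\textbf{Construction.} We use a scale $p$ and release one job of length $p$ at time $0$. The deterministic online policy $\mathcal A$ must decide when to start it; let $s$ be its start time, possibly $\infty$ until further jobs are released.

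\emph{Case (a): $s\ge \varphi p$, or the job is never started before time $\varphi p$.} No further jobs are released. Then $T_{\mathcal A}\ge s+p\ge(1+\varphi)p$, while $T_{\mathrm{OPT}}=p$, so the ratio is at most $1/(1+\varphi)=\varphi$.

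\emph{Case (b): $s<\varphi p$.} At time $s+\epsilon$ the adversary releases two jobs of length $p$. The machine running the first job stays busy until $s+p$ because service is nonpreemptive. The online policy then has two options:
\begin{itemize}
\item it runs one new job on the free slot and the other after time $s+p$, finishing no earlier than $s+2p$; or
\item it runs both new jobs sequentially on one slot, finishing no earlier than about $s+2p$.
\end{itemize}
The offline optimum knows the release times, waits, and starts all three jobs well. Equal-length jobs started simultaneously keep a zero progress gap, so fairness is preserved.

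\textbf{Tuning the adversary.} Simply counting the work shows this three-equal-jobs adversary is too weak: OPT must still serialize one job. We therefore refine it in the classical way and release one long job instead. At time $s$ the adversary releases a single job of length $q$ with $q\gg p$, tuned so that $T_{\mathrm{OPT}}=s+q$. OPT idles until time $s$ and runs the two jobs in parallel. Both jobs start at time $s$ with zero progress, and the shorter one finishes first, so the gap remains zero and the schedule is $\alpha$-fair for every $\alpha\ge 0$.

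The online policy has already committed the first slot. It can start the new job at time $s$ on the second slot, but then its behaviour matches OPT's, and the adversary must be sharpened further. The standard fix releases the long job at time $s$ and a second job of length $p$ shortly after. This produces the recursion whose fixed point gives the ratio $\varphi$.

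\textbf{Main obstacle.} The hardest step is precisely this recursive or tuned adversary against a two-slot machine: choosing release times and lengths as functions of $s$ so that, in every branch, $T_{\mathrm{OPT}}/T_{\mathcal A}\le\varphi+\delta$. Two further requirements must hold throughout:
\begin{itemize}
\item every OPT schedule used must start co-batched jobs simultaneously, or involve only jobs of length at most $\alpha$, so that it is $\alpha$-fair;
\item any restrictions fairness imposes on the online policy may only hurt it.
\end{itemize}

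I would first attempt the adaptation of the known $B=2$ nonpreemptive release-date lower bound, which achieves exactly $\varphi$. The plan is to verify each branch by direct computation and then let the discretization parameter $\epsilon\to0$ to obtain the stated $+\delta$ slack.
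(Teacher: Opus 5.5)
\paragraph{Verdict: genuine gap.} The adversary that actually forces $\varphi$ is never constructed. The ``standard fix'' that ``produces the recursion whose fixed point gives $\varphi$'' is only asserted. Worse, the strategy you propose for building it points the wrong way, for three reasons.

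\begin{enumerate}
\item Equal job lengths do \emph{not} make fairness irrelevant. Constraint~\eqref{eq:fairness} is on progress $a_i^{(t)}$, not on total length, so two equal-length jobs started more than $\alpha$ apart can never be co-batched.
\item Where fairness genuinely is irrelevant (all $o_i\le\alpha$), the problem is exactly nonpreemptive online makespan on two identical machines with release dates. The optimal deterministic ratio there is $(5-\sqrt5)/2\approx1.382$ (Noga and Seiden), so some policy guarantees $T_{\mathrm{OPT}}/T_{\mathcal A}\approx0.724$ on those instances. A reduction that neutralizes fairness can therefore refute a $3/4$ guarantee, but it can never force $\varphi+\delta$. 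The golden-ratio bound you are recalling belongs to a single parallel-batch machine, where jobs in a batch must start together. It is precisely the fairness constraint that makes the $B=2$ server behave like such a machine.
\item Your other device, making OPT's co-batched jobs start together, only certifies that OPT is feasible. It places no restriction on the online policy, and restricting the online policy is what the bound needs.
\end{enumerate}

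\paragraph{The missing idea: use fairness against the online policy to block its free second slot.} Your Case (a) is correct and matches the paper. In Case (b), do not release jobs at $s+\epsilon$; if $\epsilon\le\alpha$, the policy can simply co-batch them with $J_1$, which is why that attempt was too weak. Instead, release a single job $J_2$ of the same length $L$ at $r=s+\alpha+\varepsilon$.
\begin{itemize}
\item At time $r$, $J_1$ already has progress $\alpha+\varepsilon>\alpha$, while $J_2$ would start at progress $0$. Under nonpreemption, $J_2$ cannot start until $J_1$ completes, so $T_{\mathcal A}\ge s+2L$.
\item The offline optimum idles until $r$ and starts both jobs together with gap $0$, so $T_{\mathrm{OPT}}\le s+\alpha+\varepsilon+L$.
\item Set $x=s/L$ and let $L\to\infty$, so that $(\alpha+\varepsilon)/L\to0$; this is where finiteness of $\alpha$ is used. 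The two branches then force the ratio below $\min\{1/(1+x),(1+x)/(2+x)\}$. Its maximum over $x\ge0$ is attained at $x=\varphi$, with value $\varphi$, since $(1+\varphi)/(2+\varphi)=\varphi$.
\end{itemize}
No long job and no recursion are needed.
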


\begin{proof}{Proof of Proposition~\ref{prop:online-lower-bound}.}
Fix a large length $L$ and release one request $J_1$ of length $L$ at time
$0$. Let $s$ be the time at which the deterministic online policy first starts
processing $J_1$. If the policy never starts $J_1$, the adversary releases no
more requests and the competitive ratio is zero. Hence assume $s<\infty$.

The adversary considers two continuations.

First, suppose no additional request arrives. Then the online policy completes
no earlier than $s+L$, whereas the offline optimum starts $J_1$ at time $0$ and
completes at time $L$. Thus,
\[
\frac{T_{\mathrm{OPT}}^r}{T_{\mathcal A}^r}
\le
\frac{L}{s+L}.
\]

Second, suppose a second request $J_2$ of length $L$ is released at time
\[
r=s+\alpha+\varepsilon,
\]
where $\varepsilon>0$. At time $r$, request $J_1$ has already advanced by
$\alpha+\varepsilon$ tokens, whereas $J_2$ has progress zero. Therefore $J_2$
cannot be admitted into the same active batch as $J_1$ without violating the
$\alpha$-fairness constraint. Under the nonpreemptive service convention,
the online policy must complete $J_1$ before processing $J_2$, so
\[
T_{\mathcal A}^r\ge s+2L.
\]
The clairvoyant offline optimum instead waits until time $r$ and then starts
$J_1$ and $J_2$ together. Their progress values remain equal throughout service,
so the schedule is $\alpha$-fair and completes at time
\[
T_{\mathrm{OPT}}^r\le r+L=s+\alpha+\varepsilon+L.
\]
Thus,
\[
\frac{T_{\mathrm{OPT}}^r}{T_{\mathcal A}^r}
\le
\frac{s+\alpha+\varepsilon+L}{s+2L}.
\]

Let $x=s/L$ and $\eta=(\alpha+\varepsilon)/L$. Combining the two possible
continuations, the adversary can force
\[
\frac{T_{\mathrm{OPT}}^r}{T_{\mathcal A}^r}
\le
\min\left\{
\frac{1}{1+x},
\frac{1+x+\eta}{2+x}
\right\}.
\]
Taking $L$ sufficiently large makes $\eta$ arbitrarily small. Hence the best
ratio any deterministic online policy can guarantee is at most
\[
\sup_{x\ge 0}
\min\left\{
\frac{1}{1+x},
\frac{1+x}{2+x}
\right\}.
\]
The first function is decreasing in $x$, and the second is increasing in $x$.
The maximum of the minimum occurs when they are equal:
\[
\frac{1}{1+x}
=
\frac{1+x}{2+x}
\quad\Longleftrightarrow\quad
x^2+x-1=0.
\]
Thus $x=(\sqrt 5-1)/2$, and the common value is also $\frac{\sqrt 5-1}{2}$.
\Halmos
\end{proof}

Proposition~\ref{prop:online-lower-bound} clarifies the scope of the offline
competitive-ratio result. 
Online-ISJL remains event-driven and causal. The difficulty is the release-time
information structure. A clairvoyant offline scheduler can delay an available
long request to wait for a future compatible request, whereas an online scheduler
does not know whether such a request will arrive. This waiting-versus-processing
trade-off is absent from the offline backlog model and requires a different
online analysis.

Accordingly, the $3/4$-competitive theorem should be read as an offline guarantee for a backlogged request pool. In deployment, Online-ISJL
is a natural rolling implementation: it applies the same resource-fair packing
logic to the currently released queue and admits new requests whenever the
runtime fairness check permits. Obtaining positive competitive guarantees under
stochastic arrivals, bounded lookahead, or batching-window policies is an
important direction for future work.

\section{Theoretical Extensions: Tighter Competitive Ratio as a Function of $\alpha$} \label{sec:extension}

In Section \ref{sec:alg}, we established a lower bound of \(\frac{3}{4}\) on the competitive ratio of Algorithm ISJL, which holds for any value of \(\alpha\) (Theorem \ref{thm:cr-isj-lim-gen}). This result prompts two natural questions: How does the competitive ratio vary with \(\alpha\)? And for which values of \(\alpha\) is the \(\frac{3}{4}\) bound tight? To address these questions, we derive a tighter, parameterized characterization of the competitive ratio. Our approach involves categorizing adversarial instances into three distinct types and then determining which type dictates the worst-case performance for a given \(\alpha\). For consistency with Algorithm~\ref{alg:isj-lim-gen}, we conduct this analysis with the parameter \( \beta = \alpha\).


\subsection{Adversarial Behaviors}

To analyze the competitive ratio with respect to every $\alpha$, we start with discussing two adversarial behaviors: \textit{over-inserting} and \textit{under-inserting}. Following a similar line of reasoning as in the previous proof, we begin by examining the case $B = 2$, and then generalize the argument to the case $B \geq 2$.

\paragraph{Over-Inserting}

As illustrated in Fig.~\ref{fig:over-under}, the over-inserting behavior is characterized by the constraints $o_3 \le o_1 - o_2 + \alpha$, $o_3 \le 2\alpha$ and $o_1 > \min\{o_3+\alpha,2\alpha\}$. The competitive ratio is then given by
\[
\mathrm{CR}_{\mathrm{over}}=\dfrac{o_1+o_3}{o_2+o_3+o_4}. 
\]

\begin{lemma}[Over-Inserting: $B=2$]
\label{lem:gamma-over}
For each fixed $\gamma\in(0,1)$, the greatest lower bound of $\mathrm{CR}_{\mathrm{over}}$ over the feasible region satisfies
\[
\mathrm{CR}_{\mathrm{over}}(\gamma)\ge
\begin{cases}
\dfrac{1+\gamma}{1+2\gamma}, & 0<\gamma\le \tfrac12,\\[8pt]
\dfrac{2-\gamma}{3-2\gamma}, & \tfrac12\le \gamma<1.
\end{cases}
\]
\end{lemma}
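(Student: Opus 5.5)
The plan is to reduce the four-variable minimization to a one-variable problem by monotonicity and then read off the two regimes of $\gamma$. By scale invariance of $\mathrm{CR}_{\mathrm{over}}$ I normalize $o_1=1$, so that $\alpha=\gamma$. The feasible region is then
\[
1\ge o_2\ge o_3\ge o_4\ge 0,\quad o_3\le 1-o_2+\gamma,\quad o_3\le 2\gamma,\quad 1>\min\{o_3+\gamma,2\gamma\},
\]
where the last constraint is the over-inserting trigger inherited from Lemma~\ref{lem:idle-time-lim} with $\beta=\alpha$. Within this region I minimize $(1+o_3)/(o_2+o_3+o_4)$. As in the proof of Lemma~\ref{lem:idle-time-lim}, the ratio is strictly decreasing in $o_4$, and $o_4$ appears in no constraint other than $o_4\le o_3$. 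So I first set $o_4=o_3=:x$ and obtain $f(o_2,x)=(1+x)/(o_2+2x)$.

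Next I eliminate $o_2$. For fixed $x$, $f$ decreases in $o_2$, so the adversary pushes $o_2$ up until either $o_2=1$ or the packing constraint $x\le 1-o_2+\gamma$ becomes tight.

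\begin{itemize}
\item If it is the packing constraint that binds, then $o_2=1+\gamma-x$ and $f=(1+x)/(1+\gamma+x)$, which is increasing in $x$. The adversary therefore prefers small $x$ along this edge and slides to the corner $x=\gamma$, $o_2=1$.
\item Hence the infimum lies on the face $o_2=1$, where $x\le\gamma$ and $g(x)=(1+x)/(1+2x)$ is decreasing in $x$. The adversary takes $x$ as large as the remaining constraints allow.
\end{itemize}

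The remaining constraint is the trigger $1>\min\{x+\gamma,2\gamma\}$, and this is where the two regimes separate.

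\begin{itemize}
\item For $\gamma\le\tfrac12$ we have $2\gamma\le 1$. Since $x\le\gamma$, the minimum equals $x+\gamma\le 2\gamma\le 1$. The strict inequality can fail only at the single corner $\gamma=\tfrac12$, $x=\gamma$, so the infimum is approached at $x=\gamma$, giving $(1+\gamma)/(1+2\gamma)$.
\item For $\gamma\ge\tfrac12$ we have $2\gamma\ge 1$, and $x\le\gamma$ gives $x+\gamma\le 2\gamma$. The trigger therefore becomes $x<1-\gamma\le\gamma$. Letting $x\uparrow 1-\gamma$ gives $(2-\gamma)/(3-2\gamma)$.
\end{itemize}

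The two formulas agree at $\gamma=\tfrac12$, both equal to $\tfrac34$. I will also check that the resulting near-extremal instances, namely $o_2=o_1$ and $o_4=o_3$ with $o_3=\alpha$ or $o_3\uparrow o_1-\alpha$, satisfy all constraints, so that the bound is a greatest lower bound.

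I expect the main obstacle to be the second step. The monotonicity argument is not a clean coordinatewise minimization, because the coupling constraint $o_3\le o_1-o_2+\alpha$ links $o_2$ and $o_3$. I have to argue carefully that the minimum over the two-dimensional region lies at the corner $o_2=o_1$ rather than somewhere along the slanted edge. The edge computation above, where the ratio increases in $x$, is the check I would rely on. I would also confirm it by comparing directly: shifting by $\delta$ from $(1,\gamma)$ to $(1-\delta,\gamma+\delta)$ adds $\delta$ to both the numerator and the denominator of a ratio below $1$, which increases the ratio.
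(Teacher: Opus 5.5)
Your proposal is correct and follows essentially the same route as the paper: set $x_4=x_3$, take $x_2=\min\{1,1+\gamma-x_3\}$, note that the resulting one-variable ratio decreases for $x_3\le\gamma$ and increases for $x_3\ge\gamma$, and let the trigger $1>\min\{x_3+\gamma,2\gamma\}$ decide whether the infimum is approached at $x_3=\gamma$ or as $x_3\uparrow 1-\gamma$. One point to state explicitly is that for $\gamma\ge\tfrac12$ the trigger depends only on $x_3$ and forces $x_3<1-\gamma\le\gamma$, so the slanted edge and the corner $x_3=\gamma$ are both infeasible there; this, rather than the sliding argument (whose corner value $(1+\gamma)/(1+2\gamma)$ lies below the claimed bound when $\gamma>\tfrac12$), is why the edge cannot undercut $(2-\gamma)/(3-2\gamma)$, exactly as the paper notes when it says that only the branch $(0,1-\gamma)\subseteq(0,\gamma]$ survives.
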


The proof can be found in Appendix \ref{append:extend}.

\medskip
\paragraph{Under-Inserting}

Under-inserting yields an instance in which placing $o_3$ before $o_2$ leads to a shorter makespan. To construct such a case, it suffices to require $o_1 - o_2 + \alpha < o_3 \le 2\alpha$. Consider $\mathrm{CR}_{\mathrm{under}}=\dfrac{o_2+o_3}{o_1+o_3}$. 

\begin{lemma}[Under-Inserting: $B=2$]
\label{lem:gamma-under}
\[
\mathrm{CR}_{\textnormal{under}}(\gamma)\ge
\begin{cases}
\dfrac{1+\gamma}{1+2\gamma}, & 0<\gamma\le \tfrac{1}{3},\\[6pt]
\dfrac{2(1+\gamma)}{3+\gamma}, & \tfrac{1}{3}\le \gamma<1,
\end{cases}
\]
$\mathrm{CR}_{\textnormal{under}}(\gamma)$ is strictly decreasing on $(0,\tfrac{1}{3}]$ and strictly increasing on $[\tfrac{1}{3},1)$, with a unique global minimum at $\gamma=\tfrac{1}{3}$ equal to $\tfrac{4}{5}$; finally, $\lim_{\gamma\to 0^+}\mathrm{CR}_{\textnormal{under}}(\gamma)=\lim_{\gamma\to 1^-}\mathrm{CR}_{\textnormal{under}}(\gamma)=1$.
\end{lemma}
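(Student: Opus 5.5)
We normalize by the longest job: set $o_1=1$, so that $\alpha=\gamma\in(0,1)$ and every other length is measured relative to $o_1$. This is legitimate because $\mathrm{CR}_{\mathrm{under}}=\frac{o_2+o_3}{o_1+o_3}$ and the defining constraints $o_1-o_2+\alpha<o_3\le 2\alpha$ are homogeneous of degree one. The feasible region becomes
\[
1\ge o_2\ge o_3\ge 0,\qquad 1-o_2+\gamma<o_3\le 2\gamma ,
\]
and we must compute $\inf \frac{o_2+o_3}{1+o_3}$ over it as a function of $\gamma$. The plan is the same boundary-pushing argument used in the proof of Lemma~\ref{lem:under-insert-lim}: eliminate one variable at a time using monotonicity, then optimize a one-dimensional function.

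\textbf{Step 1 (eliminate $o_2$).} For fixed $o_3$, the ratio is increasing in $o_2$, so we push $o_2$ to its smallest admissible value. The lower constraints on $o_2$ are $o_2>1+\gamma-o_3$ and $o_2\ge o_3$. Two regimes arise according to which constraint binds.
\begin{itemize}
\item If $o_3\le \frac{1+\gamma}{2}$, the infimum is approached at $o_2\downarrow 1+\gamma-o_3$. The ratio tends to $\frac{1+\gamma}{1+o_3}$, which is decreasing in $o_3$.
\item If $o_3\ge\frac{1+\gamma}{2}$, we take $o_2=o_3$. The ratio becomes $\frac{2o_3}{1+o_3}$, which is increasing in $o_3$.
\end{itemize}
Hence, ignoring the cap $o_3\le 2\gamma$, the best choice for the adversary is $o_3=\frac{1+\gamma}{2}$. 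Note that $o_3\le 1$ holds automatically, so the condition $o_1\ge o_2$ never binds.

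\textbf{Step 2 (impose the cap $o_3\le 2\gamma$).} The cap is active exactly when $2\gamma\le\frac{1+\gamma}{2}$, i.e. $\gamma\le\frac13$.
\begin{itemize}
\item For $\gamma\le \frac13$, we set $o_3=2\gamma$ in the first regime. This gives $\frac{1+\gamma}{1+2\gamma}$; the adversarial instance has $o_2\downarrow 1-\gamma$, which indeed satisfies $o_2\ge o_3$.
\item For $\gamma\ge\frac13$, the optimal point $o_3=o_2=\frac{1+\gamma}{2}$ is feasible. It gives $\frac{(1+\gamma)}{1+(1+\gamma)/2}=\frac{2(1+\gamma)}{3+\gamma}$.
\end{itemize}
Because each value is approached by an explicit near-feasible instance, the bound is also tight.

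\textbf{Step 3 (shape of the bound).} We verify the stated qualitative properties by elementary calculus.
\begin{itemize}
\item $\frac{d}{d\gamma}\frac{1+\gamma}{1+2\gamma}=-\frac{1}{(1+2\gamma)^2}<0$, so the bound is strictly decreasing on $(0,\frac13]$.
\item $\frac{d}{d\gamma}\frac{2(1+\gamma)}{3+\gamma}=\frac{4}{(3+\gamma)^2}>0$, so the bound is strictly increasing on $[\frac13,1)$.
\item Both branches equal $\frac45$ at $\gamma=\frac13$, which is therefore the unique global minimum.
\item The limits are $1$ as $\gamma\to0^+$ and $\frac{4}{4}=1$ as $\gamma\to1^-$.
\end{itemize}

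The main point requiring care is Step 1's switch between the two binding constraints on $o_2$. We must check that at the kink $o_3=\frac{1+\gamma}{2}$ the strict inequality $o_2>1+\gamma-o_3$ becomes an equality only in the limit, so the value is an infimum rather than attained, exactly as in Lemma~\ref{lem:under-insert-lim}. We must also confirm that the piecewise infimum is continuous at $\gamma=\frac13$.
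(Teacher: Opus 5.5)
Your proof is correct, and it lands on the same limiting instances as the paper, but you eliminate the variables in the opposite order.

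\textbf{How the paper proceeds.} It fixes $x_2$ and pushes $x_3$ to $\min\{2\gamma,x_2\}$. This gives the two regimes $2\gamma\le x_2$ and $x_2\le 2\gamma$. For $\gamma\ge\frac13$ it then has to compare the two candidate values $\frac{4\gamma}{1+2\gamma}$ and $\frac{2(1+\gamma)}{3+\gamma}$.

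\textbf{How you proceed.} You fix $o_3$ and push $o_2$ down to its smallest admissible value. The ratio is unambiguously increasing in $o_2$, so this step is sound. The regime split is then just which lower bound on $o_2$ binds, which yields a single piecewise function of $o_3$ with one kink at $\frac{1+\gamma}{2}$ and no cross-regime comparison.

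\textbf{Why your order is the sounder one.} In the paper's order, the paper's own derivative $\partial f/\partial x_3=(1-x_2)/(1+x_3)^2\ge 0$ shows $f$ is \emph{increasing} in $x_3$, not decreasing as its text says. So setting $x_3=\min\{2\gamma,x_2\}$ takes the upper edge of each slice. By itself, that certifies only an upper bound on the infimum, whereas the lemma needs a lower bound. The paper's formula is nonetheless correct because its minimizing points also lie on the lower edge $x_3=1-x_2+\gamma$. Those points are $x_3=2\gamma$ with $x_2\downarrow 1-\gamma$, and $x_2=x_3\downarrow\frac{1+\gamma}{2}$, which is exactly where your Step 1 places them. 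Your argument therefore gives both the lower bound and its tightness directly.

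\textbf{One wording fix.} Your justification for dropping $o_2\le 1$ is slightly off. It is not that $o_3\le 1$ holds automatically: when $\gamma>\frac12$ the cap $2\gamma$ exceeds $1$, and the constraint $o_2\le 1$ in fact forces $o_3>\gamma$. The correct reason is that dropping a constraint can only lower the infimum, and your limiting instances satisfy $o_2\le 1$ anyway.
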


The proof can be found in Appendix \ref{append:extend}.

\medskip

\paragraph{Jigsaw Effect}

\begin{figure}[htbp]
\centering
\includegraphics[width=0.8\textwidth]{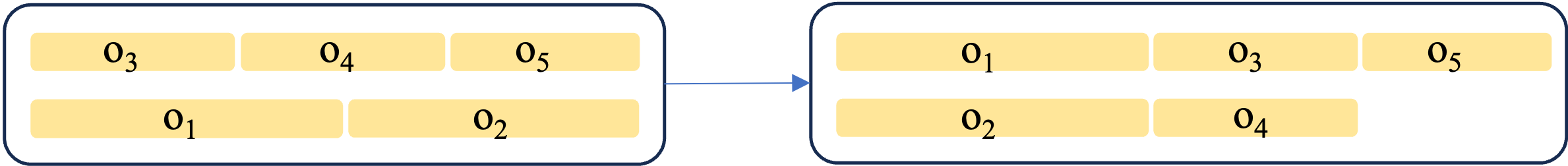}
\caption{Jigsaw Effect}
\label{fig:jigsaw}
\end{figure}

As shown in Fig.~\ref{fig:jigsaw}, the jigsaw effect arises when concatenating a set of smaller jobs that happens to produce a makespan comparable to that of another set of larger jobs. The adversarial case can be formulated as 
\[
o=\{2B-1,2B-1,2B-2,2B-2,…,B+1,B+1,B,B,B\}.
\]
Recall that Proposition~\ref{prop:graham} gives a competitive ratio no less than $\frac{3B}{4B-1}$ under this adversary, and thus the bound won't be effective because $\frac{3B}{4B-1} > \frac34$, indicating that the jigsaw effect is dominated by over-inserting.

\subsection{Dominant Behavior Respecting $\gamma$ and Generalization}

Let $\gamma=\alpha/o_1\in(0,1)$ and denote by $\underline{\mathrm{CR}}(\gamma)$ the greatest lower bound of $\mathrm{CR}(\mathrm{ISJL})$ over all feasible instances with fixed~$\gamma$. From the preceding analysis in Lemma~\ref{lem:gamma-over} and Lemma~\ref{lem:gamma-under} we have the behavior-specific bounds
\[
\mathrm{CR}_{\mathrm{over}}(\gamma)=
\begin{cases}
\dfrac{1+\gamma}{1+2\gamma}, & 0<\gamma\le \tfrac12,\\[6pt]
\dfrac{2-\gamma}{3-2\gamma}, & \tfrac12\le \gamma<1,
\end{cases}
\qquad
\mathrm{CR}_{\mathrm{under}}(\gamma)=
\begin{cases}
\dfrac{1+\gamma}{1+2\gamma}, & 0<\gamma\le \tfrac13,\\[6pt]
\dfrac{2(1+\gamma)}{3+\gamma}, & \tfrac13\le \gamma<1.
\end{cases}
\]
The overall worst-case lower bound is the lower envelope
\[
\underline{\mathrm{CR}}(\gamma)=\min\{\mathrm{CR}_{\mathrm{over}}(\gamma),\ \mathrm{CR}_{\mathrm{under}}(\gamma)\}.
\]
The following theorem explicitly gives the piecewise function of $\mathrm{CR}_{\mathrm{under}}(\gamma)$ and proves that this lower bound is the same for $B=2$ and $B>2$ cases.

\begin{theorem}[Generalized global competitive ratio as a function of $\gamma$]
\label{thm:gamma}

$\forall B \ge 2$, The overall worst-case lower bound $\underline{\mathrm{CR}}(\gamma)$ admits the single piecewise description
\[
\underline{\mathrm{CR}}(\gamma)=
\begin{cases}
\dfrac{1+\gamma}{1+2\gamma}, & 0<\gamma\le \tfrac12,\\[8pt]
\dfrac{2-\gamma}{3-2\gamma}, & \tfrac12\le \gamma<1,
\end{cases}
\]
which is strictly decreasing on $(0,\tfrac12]$, strictly increasing on $[\tfrac12,1)$, attains its unique global minimum $\underline{\mathrm{CR}}(\tfrac12)=\tfrac34$, and satisfies
\[
\lim_{\gamma\to0^+}\underline{\mathrm{CR}}(\gamma)=\lim_{\gamma\to1^-}\underline{\mathrm{CR}}(\gamma)=1.
\]
\end{theorem}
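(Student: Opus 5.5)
The plan is to compute the lower envelope $\min\{\mathrm{CR}_{\mathrm{over}}(\gamma),\mathrm{CR}_{\mathrm{under}}(\gamma)\}$ explicitly for $B=2$, using Lemma~\ref{lem:gamma-over} and Lemma~\ref{lem:gamma-under}. Afterwards we will argue that the same envelope governs every $B\ge2$. The jigsaw adversary is discarded from the start. By Proposition~\ref{prop:graham}, its ratio is at least $\frac{3B}{4B-1}>\frac34$. We also need it to stay above the envelope for every $\gamma$. Since the jigsaw configuration does not involve the $\alpha$-budget at all, we take the envelope together with the Phase~TWO bound, exactly as in the proofs of Theorems~\ref{thm:cr-isj-limited} and~\ref{thm:cr-isj-lim-gen}. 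We will check that this bound never binds below the over/under envelope; this is the subtle point flagged below.

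\textbf{Step 1 (envelope, $B=2$).} Compare the pieces region by region.
\begin{itemize}
\item On $(0,\tfrac13]$ both lemmas give $\frac{1+\gamma}{1+2\gamma}$.
\item On $[\tfrac13,\tfrac12]$ we compare $\frac{1+\gamma}{1+2\gamma}$ with $\frac{2(1+\gamma)}{3+\gamma}$. Cross-multiplying, the first is no larger iff $3+\gamma\le 2+4\gamma$, i.e.\ $\gamma\ge\tfrac13$. So the over-inserting expression is the minimum there.
\item On $[\tfrac12,1)$ we compare $\frac{2-\gamma}{3-2\gamma}$ with $\frac{2(1+\gamma)}{3+\gamma}$. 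Cross-multiplying gives $(2-\gamma)(3+\gamma)\le 2(1+\gamma)(3-2\gamma)$. This reduces to $6-\gamma-\gamma^2\le 6+2\gamma-4\gamma^2$, i.e.\ $3\gamma^2-3\gamma\le0$, which holds on $[0,1]$.
\end{itemize}
Hence the envelope equals $\mathrm{CR}_{\mathrm{over}}(\gamma)$ throughout, which is the claimed piecewise formula.

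\textbf{Step 2 (shape of the curve).} Compute derivatives:
\[
\frac{d}{d\gamma}\frac{1+\gamma}{1+2\gamma}=\frac{-1}{(1+2\gamma)^2}<0,
\qquad
\frac{d}{d\gamma}\frac{2-\gamma}{3-2\gamma}=\frac{1}{(3-2\gamma)^2}>0.
\]
Both pieces equal $\tfrac34$ at $\gamma=\tfrac12$, so the function is continuous there with a unique minimum of $\tfrac34$. The limits at $0^+$ and $1^-$ follow by direct substitution, giving $1$ in each case.

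\textbf{Step 3 (generalization to $B>2$).} This is where the main work lies. The argument has three parts.
\begin{itemize}
\item \emph{Decomposition.} In Algorithm~\ref{alg:isj-lim-gen}, each worker $j\in\{2,\dots,M\}$ is paired with $o_1$ and receives a pack with budget $\mathrm{cap}_j=\min\{2\alpha,o_1-o_j+\alpha\}$. This is precisely the $B=2$ budget with $\beta=\alpha$ applied to the pair $(o_1,o_j)$. Workers $j>M$ carry packs of size at most $\alpha$, which never cause a fairness-induced idle.
\item \emph{Per-pair reduction.} Any loss in a stage is therefore attributable to some pair $(o_1,o_j)$ being over- or under-inserted. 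For that pair the $B=2$ inequalities of Lemmas~\ref{lem:gamma-over} and~\ref{lem:gamma-under} apply verbatim, with the same normalization $\gamma=\alpha/o_1$.
\item \emph{Aggregation.} The makespan ratio of the stage is a mediant of the per-worker ratios, since the extra idle is bounded worker by worker. It is therefore no smaller than the worst pair.
\end{itemize}
Phase~TWO is handled as before via Proposition~\ref{prop:graham}.

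\textbf{Main obstacle.} I expect the hardest part to be rigorously justifying that stages interact only additively. The per-stage ratio bounds must combine into a global makespan bound by the mediant inequality, which needs $T_{\mathrm{OPT}}$ to decompose stage by stage. I would attempt this via the lower bounds $T_{\mathrm{OPT}}\ge o_1$ and $T_{\mathrm{OPT}}\ge O/B$ used in Theorem~\ref{thm:ljf-competitive}, charging each stage's excess against the optimal schedule's handling of the same jobs. If this fails, I would settle for proving the envelope as a per-stage worst-case characterization, matching how the earlier theorems are argued. For tightness, the instances from the lemmas' proofs (e.g.\ $o_2=o_1$, $o_4=o_3\uparrow$ its cap) embed into $B>2$ by padding the other workers with perfectly packed jobs, so the bound is attained.
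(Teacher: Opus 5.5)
Your Steps 1 and 2 reproduce the paper's proof essentially line for line. They use the same three regimes, the same cross-multiplications leading to $\gamma\ge\frac13$ and $3\gamma(\gamma-1)\le 0$, and the same derivatives. Step 3 is the paper's generalization: apply the $B=2$ lemmas to each pair $(o_1,o_j)$, and dismiss Phase TWO and the jigsaw adversary via Proposition~\ref{prop:graham}. The algebra is fine. The problem is the check you explicitly commit to, namely that the Phase TWO bound $\frac{3B}{4B-1}$ ``never binds below the over/under envelope.'' That check is false. Cross-multiplying,
\[
\frac{1+\gamma}{1+2\gamma}>\frac{3B}{4B-1}\iff \gamma<\frac{B-1}{2B+1},
\qquad
\frac{2-\gamma}{3-2\gamma}>\frac{3B}{4B-1}\iff \gamma>\frac{B+2}{2B+1}.
\]
So already for $B=2$ the constant $\frac67$ lies strictly below the envelope on $(0,\frac15)\cup(\frac45,1)$. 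As $B\to\infty$, the window $[\frac{B-1}{2B+1},\frac{B+2}{2B+1}]$ on which the envelope is the smaller of the two shrinks to $\{\frac12\}$. A $\gamma$-independent constant cannot dominate a curve that tends to $1$ at both ends. Your own remark that the jigsaw configuration does not involve the $\alpha$-budget is exactly why. Combining the Phase ONE envelope with Proposition~\ref{prop:graham} therefore certifies only $\min\{\underline{\mathrm{CR}}(\gamma),\frac{3B}{4B-1}\}$. That recovers the value $\frac34$ at $\gamma=\frac12$, but not the piecewise formula as the overall worst case, and in particular not $\lim_{\gamma\to0^+}\underline{\mathrm{CR}}(\gamma)=\lim_{\gamma\to1^-}\underline{\mathrm{CR}}(\gamma)=1$.

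The paper's proof has the same weakness: its generalization paragraph simply asserts that Proposition~\ref{prop:graham} keeps Phase TWO from making the ratio worse. What your Steps 1--2 (and the paper's) genuinely establish is the formula and shape of the Phase ONE envelope $\min\{\mathrm{CR}_{\mathrm{over}},\mathrm{CR}_{\mathrm{under}}\}$. Upgrading it to the global greatest lower bound requires a $\gamma$-dependent bound on Phase TWO and cross-stage losses. The fact that Phase TWO admits jobs only while $A_{\max}\le\alpha$ might control those losses as $\gamma\to0^+$. As $\gamma\to1^-$, however, the fairness constraint becomes nearly vacuous, and the limit $1$ there would require ISJL to be asymptotically optimal on essentially unconstrained instances. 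Proposition~\ref{prop:graham}, whose constant is tight for longest-first list scheduling, cannot supply that.

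Separately, your aggregation step is not a mediant argument, because within a stage the parallel workers' lengths combine through a maximum, not a sum. Also, the per-pair ``OPT'' quantities in Lemmas~\ref{lem:gamma-over} and~\ref{lem:gamma-under} are optimal makespans of small isolated instances. Nothing shows they lower-bound the global optimum, which may re-pair jobs across workers and stages. So the per-pair reduction, like the paper's, remains an assertion rather than a proof.
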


\proof{Proof of Theorem~\ref{thm:gamma}}
When $B=2$, it suffices to compare the behavior-specific bounds on the relevant sub-intervals and to establish monotonicity on each branch. In this proof, we start by discussing three regimes of $\gamma$ under $B=2$. And in the last part of the proof, we generalize this result to $B \ge 2$ cases.

\paragraph{Small $\gamma$ ($0<\gamma\le \tfrac{1}{3}$).}
In this range the two adversarial patterns coincide:
\[
\mathrm{CR}_{\mathrm{over}}(\gamma)=\mathrm{CR}_{\mathrm{under}}(\gamma)=\frac{1+\gamma}{1+2\gamma}.
\]
Hence
\[
\underline{\mathrm{CR}}(\gamma)=\frac{1+\gamma}{1+2\gamma},
\]
which is strictly decreasing on $(0,\tfrac13]$, with $\lim_{\gamma\to0^+}\underline{\mathrm{CR}}(\gamma)=1$ and $\underline{\mathrm{CR}}(\tfrac13)=\tfrac{4}{5}$.

\paragraph{Intermediate $\gamma$ ($\tfrac{1}{3}<\gamma\le \tfrac{1}{2}$).}
Here we compare the two candidate bounds:
\[
\frac{1+\gamma}{1+2\gamma}\ \le\ \frac{2(1+\gamma)}{3+\gamma}
\quad\Longleftrightarrow\quad
\frac{1}{1+2\gamma}\ \le\ \frac{2}{3+\gamma}
\quad\Longleftrightarrow\quad
\gamma\ge \tfrac13.
\]
Thus on $\gamma\in(\tfrac13,\tfrac12]$ the over-inserting bound dominates, and
\[
\underline{\mathrm{CR}}(\gamma)=\frac{1+\gamma}{1+2\gamma},
\]
which remains strictly decreasing up to $\underline{\mathrm{CR}}(\tfrac12)=\tfrac34$ by the derivative computed above.

\paragraph{Large $\gamma$ ($\tfrac{1}{2}\le\gamma<1$).}
For $\gamma\ge\tfrac12$, the over-inserting bound switches to
\[
\mathrm{CR}_{\mathrm{over}}(\gamma)=\frac{2-\gamma}{3-2\gamma},
\]
which is strictly increasing on $[\tfrac12,1)$. Moreover,
\[
\frac{2-\gamma}{3-2\gamma}\ \le\ \frac{2(1+\gamma)}{3+\gamma}
\quad\Longleftrightarrow\quad
(2-\gamma)(3+\gamma)\ \le\ 2(1+\gamma)(3-2\gamma)
\quad\Longleftrightarrow\quad
0\le 3\gamma(1-\gamma),
\]
which holds for all $\gamma\in[0,1]$. Hence the over-inserting bound is again the lower envelope on $[\tfrac12,1)$:
\[
\underline{\mathrm{CR}}(\gamma)=\frac{2-\gamma}{3-2\gamma}.
\]
Finally, $\lim_{\gamma\to1^-}\underline{\mathrm{CR}}(\gamma)=\lim_{\gamma\to1^-}\frac{2-\gamma}{3-2\gamma}=1$.

\paragraph{Generalization.}
$\forall i \in [2,B]$, consider $o_1$ and $o_i$, the total idle time is bounded by Lemma~\ref{lem:gamma-over} and Lemma~\ref{lem:gamma-under}. Therefore, the competitive ratio in Phase ONE of Algorithm~\ref{alg:isj-lim-gen} is bounded by $\underline{\mathrm{CR}}(\gamma)$. Proposition~\ref{prop:graham} ensures that Phase TWO won't make the competitive ratio worse. Combining the three regimes yields
\[
\underline{\mathrm{CR}}(\gamma)=
\begin{cases}
\dfrac{1+\gamma}{1+2\gamma}, & 0<\gamma\le \tfrac12,\\[8pt]
\dfrac{2-\gamma}{3-2\gamma}, & \tfrac12\le \gamma<1,
\end{cases}
\]
which is strictly decreasing on $(0,\tfrac12]$ and strictly increasing on $[\tfrac12,1)$. Since the two branches meet continuously at $\gamma=\tfrac12$ and are strictly monotone on either side, $\underline{\mathrm{CR}}$ attains a unique global minimum at $\gamma=\tfrac12$ with value $\tfrac34$, and satisfies the boundary limits $\lim_{\gamma\to0^+}\underline{\mathrm{CR}}(\gamma)=\lim_{\gamma\to1^-}\underline{\mathrm{CR}}(\gamma)=1$.
\Halmos 
\endproof

\begin{figure}[htbp]
\centering
\includegraphics[width=0.7\textwidth]{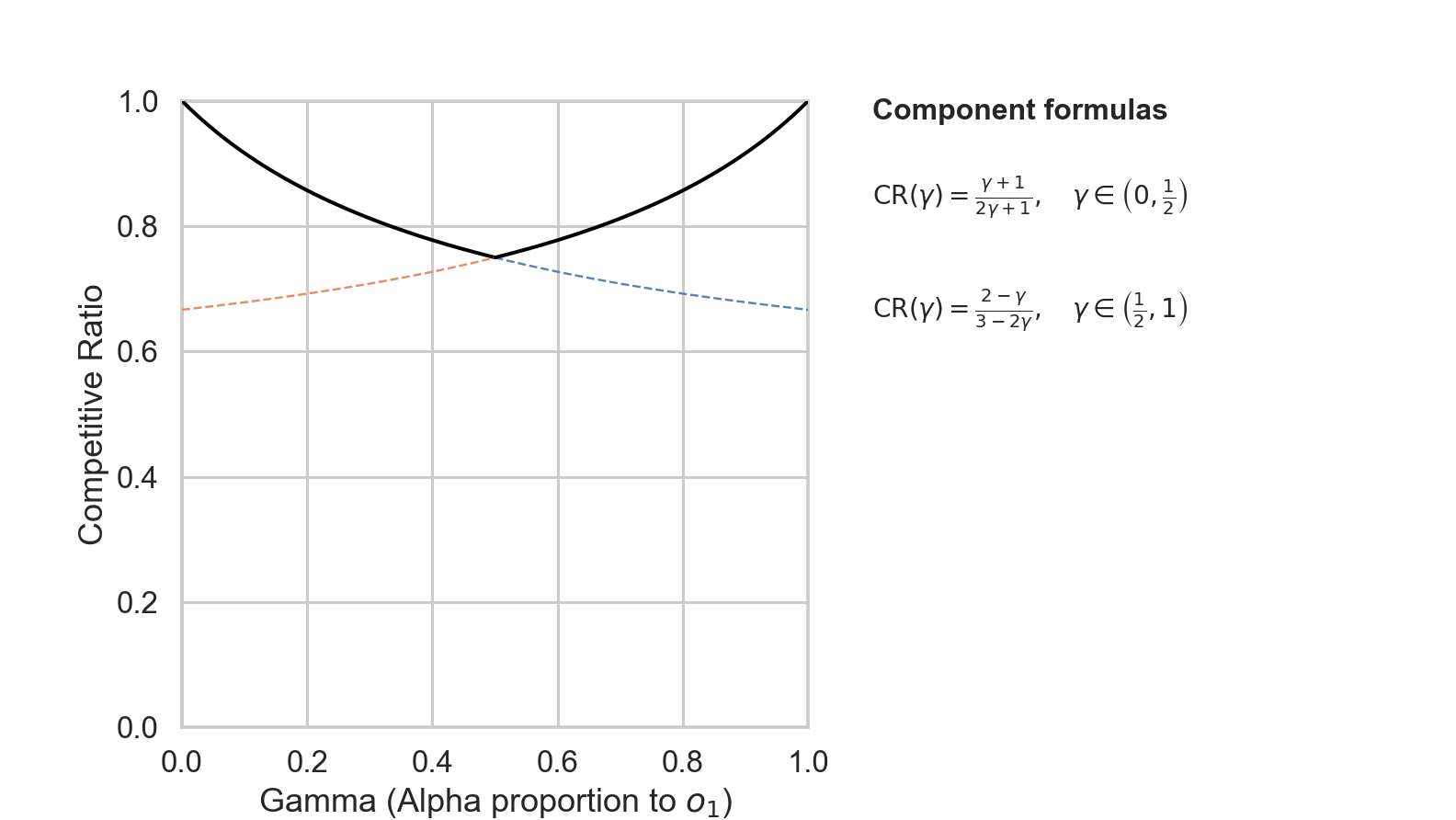}
\caption{Competitive Ratio as a Function of $\gamma$}
\label{fig:cr_alpha}
\end{figure}

\section{Profit and Cost Alignment under Linear Pricing}
\label{sec:profit-cost-alignment}

The preceding adversarial analysis establishes that ISJL has a worst-case
throughput guarantee under the resource-fairness constraint. We now connect this
throughput analysis to the operator's economic objective. Commercial LLM inference is typically sold through token-metered API contracts. Leading providers (OpenAI, Anthropic, Deepseek, Google Gemini) publish model-specific unit prices for tokens, usually quoted per one million tokens \citep{openai_api_pricing,anthropic_api_pricing,deepseek_api_pricing,google_gemini_api_pricing}. Thus, although providers may use different coefficients for tokens, the usage charge is linear in the realized token counts. Consistent with the token-level abstraction in Section~\ref{sec:model}, we represent this practice by a single effective linear price $p_i=\rho_0 o_i$ for request $i$ for some unit token price $\rho_0>0$. The economic tension studied below is that this token-metered revenue is additive across requests, whereas the cost of a batched inference step is driven by the largest active request in the batch. Resource-fair scheduling is therefore a mechanism for aligning linear token-metered revenue with max-driven batch cost.

\xhdr{Accepted workload and linear revenue.}
Fix an accepted request instance $\mathcal I=\{o_1,\ldots,o_n\}$ and let
\[
O(\mathcal I):=\sum_{i=1}^n o_i
\]
denote the total token workload. Under linear token pricing, the total revenue is
\begin{equation}
\label{eq:linear-revenue-fixed-workload}
R(\mathcal I)=p_0\sum_{i=1}^n o_i=p_0 O(\mathcal I),
\end{equation}
which is independent of the schedule used to serve the accepted workload.
Therefore, for a fixed accepted instance, maximizing profit is equivalent to
minimizing the schedule-dependent inference cost.

\xhdr{Token-progress cost.}
At the beginning of service step $t$, let $S^{(t)}$ be the active batch selected
by a schedule $\mathcal A$. Recall from Section~\ref{sec:model} that
$a_i^{(t)}$ is the number of tokens already processed for request $i$. Define $r_i^{(t)}:=a_i^{(t)}+1$, so that $r_i^{(t)}$ is the index of the token currently being processed. This
shift by one is only notational; it leaves the fairness gap unchanged, since $
\max_{i\in S^{(t)}} r_i^{(t)}-\min_{i\in S^{(t)}} r_i^{(t)}
=
\max_{i\in S^{(t)}} a_i^{(t)}-\min_{i\in S^{(t)}} a_i^{(t)}$.
Thus any schedule satisfying constraint~\eqref{eq:fairness} also satisfies $\max_{i\in S^{(t)}} r_i^{(t)}-\min_{i\in S^{(t)}} r_i^{(t)}\le \alpha$.

We define the instantaneous cost of a nonempty batch as
\begin{equation}
\label{eq:batch-cost-progress}
C_{\mathcal A}^{(t)}
=
 c+\tau |S^{(t)}|\max_{i\in S^{(t)}} r_i^{(t)},
\end{equation}
where $c\ge 0$ is the fixed cost of one service step and $\tau>0$ converts the
max-driven resource footprint into cost. The total inference cost and profit of
schedule $\mathcal A$ are then
\begin{equation}
\label{eq:total-cost-profit}
C_{\mathcal A}(\mathcal I)=\sum_{t=1}^{T_{\mathcal A}(\mathcal I)} C_{\mathcal A}^{(t)},
\qquad
\Pi_{\mathcal A}(\mathcal I)=R(\mathcal I)-C_{\mathcal A}(\mathcal I).
\end{equation}

\begin{theorem}[Profit decomposition and externality bound]
\label{thm:profit-decomposition}
For any feasible schedule $\mathcal A$ serving instance $\mathcal I$, define
\begin{equation}
\label{eq:intrinsic-resource-cost}
Q(\mathcal I):=\sum_{i=1}^n\sum_{k=1}^{o_i} k
=\frac12\sum_{i=1}^n o_i(o_i+1),
\end{equation}
and define the batching externality of $\mathcal A$ as
\begin{equation}
\label{eq:batching-externality}
E_{\mathcal A}(\mathcal I)
:=
\sum_{t=1}^{T_{\mathcal A}(\mathcal I)}
\sum_{i\in S^{(t)}}
\left(
\max_{j\in S^{(t)}} r_j^{(t)}-r_i^{(t)}
\right).
\end{equation}
Then profit admits the decomposition
\begin{equation}
\label{eq:profit-decomposition}
\Pi_{\mathcal A}(\mathcal I)
=
R(\mathcal I)
-
\tau Q(\mathcal I)
-
cT_{\mathcal A}(\mathcal I)
-
\tau E_{\mathcal A}(\mathcal I).
\end{equation}
Moreover, if $\mathcal A$ satisfies the fairness constraint~\eqref{eq:fairness}, then
\begin{equation}
\label{eq:externality-alpha-bound}
E_{\mathcal A}(\mathcal I)\le \alpha O(\mathcal I).
\end{equation}
\end{theorem}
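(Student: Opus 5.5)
The plan is a direct accounting argument: split each step's cost into a fixed part and a resource part, then rewrite the resource part as a sum over requests. Start from \eqref{eq:batch-cost-progress} and sum over $t=1,\dots,T_{\mathcal A}(\mathcal I)$. The fixed component contributes exactly $cT_{\mathcal A}(\mathcal I)$, since each step with a nonempty batch incurs $c$ once. For the resource component, write
\[
|S^{(t)}|\max_{j\in S^{(t)}} r_j^{(t)}=\sum_{i\in S^{(t)}} r_i^{(t)}+\sum_{i\in S^{(t)}}\Bigl(\max_{j\in S^{(t)}} r_j^{(t)}-r_i^{(t)}\Bigr).
\]
Summing the second piece over $t$ and multiplying by $\tau$ gives exactly $\tau E_{\mathcal A}(\mathcal I)$ by definition \eqref{eq:batching-externality}.

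The remaining step is to show that $\sum_t\sum_{i\in S^{(t)}} r_i^{(t)}=Q(\mathcal I)$, which is schedule-independent. I would exchange the order of summation and sum over requests instead of steps. Each request $i$ is in the active batch in exactly $o_i$ steps, because each step advances it by one token and it completes after $o_i$ tokens. In its $k$-th such step, it has already processed $a_i^{(t)}=k-1$ tokens, so $r_i^{(t)}=k$. Hence request $i$ contributes $\sum_{k=1}^{o_i}k$, and summing over $i$ gives \eqref{eq:intrinsic-resource-cost}.

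This is the only point that needs care. I would state explicitly that a request is scheduled in exactly $o_i$ steps with progress values $0,1,\dots,o_i-1$, each appearing once. This holds under the model's token-step abstraction, since a request cannot receive two tokens in one step and never loses progress; it does not require the schedule to be nonpreemptive. Combining the three pieces gives $C_{\mathcal A}(\mathcal I)=\tau Q+cT_{\mathcal A}+\tau E_{\mathcal A}$, and subtracting from $R(\mathcal I)$ in \eqref{eq:total-cost-profit} yields \eqref{eq:profit-decomposition}.

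For the externality bound \eqref{eq:externality-alpha-bound}, use the observation already made above that fairness \eqref{eq:fairness} transfers to the shifted indices $r_i^{(t)}$. Then each summand satisfies
\[
\max_{j\in S^{(t)}} r_j^{(t)}-r_i^{(t)}\le \max_j r_j^{(t)}-\min_j r_j^{(t)}\le\alpha .
\]
So $E_{\mathcal A}(\mathcal I)\le\alpha\sum_t|S^{(t)}|$. By the same counting as before, $\sum_t|S^{(t)}|=\sum_i o_i=O(\mathcal I)$, since each request occupies a slot in exactly $o_i$ steps. This gives the bound. I do not expect a real obstacle beyond the double-counting step.
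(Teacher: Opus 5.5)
Your proposal is correct and follows the paper's proof essentially step for step. It uses the same per-step identity splitting $|S^{(t)}|\max_{j} r_j^{(t)}$ into $\sum_{i\in S^{(t)}} r_i^{(t)}$ plus the externality summand, the same exchange of summation in which each request contributes token indices $1,\dots,o_i$ once to give $Q(\mathcal I)$, and the same bound $E_{\mathcal A}(\mathcal I)\le\alpha\sum_t|S^{(t)}|=\alpha O(\mathcal I)$. Like the paper, you implicitly treat all $T_{\mathcal A}(\mathcal I)$ steps as nonempty, so that the fixed cost sums to exactly $cT_{\mathcal A}(\mathcal I)$.
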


The proof of Theorem~\ref{thm:profit-decomposition} can be found in Appendix \ref{append:profit}.
Theorem~\ref{thm:profit-decomposition} separates profit into four terms. The
revenue term $R(\mathcal I)$ and the intrinsic resource term $\tau Q(\mathcal I)$
are fixed once the accepted workload is fixed. The schedule affects profit only
through two terms: the time cost $cT_{\mathcal A}(\mathcal I)$ and the batching
externality cost $\tau E_{\mathcal A}(\mathcal I)$. The throughput analysis developed earlier controls the first schedule-dependent
term, while the fairness constraint controls the second. Thus the throughput theory is not
separate from the economic objective; it controls one component of profit, and
resource fairness controls the other.

\begin{theorem}[Additive profit guarantee for ISJL]
\label{thm:isjl-profit-guarantee}
Let $\mathcal F_\alpha(\mathcal I)$ be the set of schedules satisfying the
fairness constraint~\eqref{eq:fairness} on instance $\mathcal I$. Define
\[
\Pi_\alpha^\star(\mathcal I)
:=
\max_{\mathcal A\in\mathcal F_\alpha(\mathcal I)}
\Pi_{\mathcal A}(\mathcal I),
\]
and let $\mathcal A_\pi^\star\in\mathcal F_\alpha(\mathcal I)$ be a
profit-maximizing fair schedule with makespan
$T_\pi^\star(\mathcal I):=T_{\mathcal A_\pi^\star}(\mathcal I)$. Suppose a fair
algorithm $\mathcal A$ has throughput competitive ratio at least $\rho\in(0,1]$,
or equivalently,
\begin{equation}
\label{eq:rho-makespan-bound}
T_{\mathcal A}(\mathcal I)
\le
\frac{1}{\rho}T_{\mathcal OPT}(\mathcal I),
\end{equation}
where $T_{\mathcal OPT}(\mathcal I)=\min_{\mathcal B\in\mathcal F_\alpha(\mathcal I)}
T_{\mathcal B}(\mathcal I)$. Then
\begin{equation}
\label{eq:generic-profit-guarantee}
\Pi_{\mathcal A}(\mathcal I)
\ge
\Pi_\alpha^\star(\mathcal I)
-
c\left(\frac{1}{\rho}-1\right)T_\pi^\star(\mathcal I)
-
\tau\alpha O(\mathcal I).
\end{equation}
In particular, applying Theorem~\ref{thm:cr-isj-lim-gen} to ISJL gives
\begin{equation}
\label{eq:isjl-profit-guarantee}
\Pi_{\mathcal ISJL}(\mathcal I)
\ge
\Pi_\alpha^\star(\mathcal I)
-
\frac{c}{3}T_\pi^\star(\mathcal I)
-
\tau\alpha O(\mathcal I).
\end{equation}
\end{theorem}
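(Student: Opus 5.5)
The plan is to compare the profit decomposition of Theorem~\ref{thm:profit-decomposition} for $\mathcal A$ and for $\mathcal A_\pi^\star$ term by term. Both schedules serve the same accepted workload $\mathcal I$, so the revenue $R(\mathcal I)$ and the intrinsic term $\tau Q(\mathcal I)$ coincide. Subtracting \eqref{eq:profit-decomposition} for $\mathcal A_\pi^\star$ from the same identity for $\mathcal A$ gives the exact identity
\[
\Pi_{\mathcal A}(\mathcal I)-\Pi_\alpha^\star(\mathcal I)
=
-c\bigl(T_{\mathcal A}(\mathcal I)-T_\pi^\star(\mathcal I)\bigr)
-\tau\bigl(E_{\mathcal A}(\mathcal I)-E_{\mathcal A_\pi^\star}(\mathcal I)\bigr).
\]
It then suffices to upper bound the two differences separately.

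\textbf{Time term.} Since $\mathcal A_\pi^\star\in\mathcal F_\alpha(\mathcal I)$, we have $T_{\mathcal OPT}(\mathcal I)\le T_\pi^\star(\mathcal I)$ by the definition of $T_{\mathcal OPT}$ as a minimum over fair schedules. Combining this with \eqref{eq:rho-makespan-bound} gives $T_{\mathcal A}(\mathcal I)\le \frac1\rho T_\pi^\star(\mathcal I)$. Hence $T_{\mathcal A}-T_\pi^\star\le(\frac1\rho-1)T_\pi^\star$, and because $c\ge0$ the time contribution is at least $-c(\frac1\rho-1)T_\pi^\star(\mathcal I)$.

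\textbf{Externality term.} Each summand in \eqref{eq:batching-externality} is nonnegative, so $E_{\mathcal A_\pi^\star}(\mathcal I)\ge0$. Since $\mathcal A$ is fair, \eqref{eq:externality-alpha-bound} gives $E_{\mathcal A}(\mathcal I)\le\alpha O(\mathcal I)$. Therefore $E_{\mathcal A}-E_{\mathcal A_\pi^\star}\le\alpha O(\mathcal I)$, and with $\tau>0$ this yields \eqref{eq:generic-profit-guarantee}.

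\textbf{Specialization to ISJL.} ISJL is fair: the paper states that the $\alpha$-prefix rule and the packing budgets enforce \eqref{eq:fairness}, in parallel with Theorem~\ref{thm:robust-fairness} for exact sizes. By Theorem~\ref{thm:cr-isj-lim-gen} we may take $\rho=3/4$, and then $\frac1\rho-1=\frac13$, which gives \eqref{eq:isjl-profit-guarantee}.

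The argument is elementary. The one point requiring care is that the makespan guarantee is stated relative to $T_{\mathcal OPT}$, not relative to the profit-optimal schedule's makespan. It must therefore be transferred through the inequality $T_{\mathcal OPT}\le T_\pi^\star$, which holds only because the profit optimum is itself restricted to $\mathcal F_\alpha(\mathcal I)$.

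The other point I would check is whether ISJL's fairness is established somewhere I can cite. If it is not, I would give a brief direct argument: Phase ONE starts each worker's pack within $\alpha$ of $o_1$'s progress, and Phase TWO admits a new job only when $A_{\max}\le\alpha$. This second point is the one I expect to be the main obstacle.
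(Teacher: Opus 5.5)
Your proposal is correct and follows the paper's proof essentially line for line. You subtract the decompositions of Theorem~\ref{thm:profit-decomposition} for $\mathcal A$ and $\mathcal A_\pi^\star$, transfer the makespan bound through $T_{\mathcal OPT}(\mathcal I)\le T_\pi^\star(\mathcal I)$, bound the externality difference using $E_{\mathcal A_\pi^\star}(\mathcal I)\ge 0$ and $E_{\mathcal A}(\mathcal I)\le\alpha O(\mathcal I)$, and then set $\rho=3/4$. The fairness of ISJL, which the paper uses without separate comment, is not a real obstacle: Robust-ISJL with $u_i=o_i$ is exactly ISJL, so Theorem~\ref{thm:robust-fairness} specialized to perfect predictions supplies it.
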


The proof of Theorem~\ref{thm:isjl-profit-guarantee} can be found in Appendix \ref{append:profit}.
A multiplicative competitive ratio for profit is generally not the right object,
because profit may be close to zero or even negative depending on the price and
cost parameters. Theorem~\ref{thm:isjl-profit-guarantee} instead gives an
additive guarantee. ISJL can lose profit relative to the best fair profit
schedule only through two channels: an overhead term caused by imperfect
makespan optimality, and an externality term bounded by $\tau\alpha O(\mathcal I)$.
When $\alpha$ is small, ISJL is close to the cost alignment of LJF; when $c$ is
large, ISJL's throughput guarantee becomes economically valuable.

\begin{proposition}[Pairwise profit comparison]
\label{prop:pairwise-profit-comparison}
For any two feasible schedules $\mathcal A$ and $\mathcal B$ serving the same
accepted instance $\mathcal I$,
\begin{equation}
\label{eq:pairwise-profit-identity}
\Pi_{\mathcal A}(\mathcal I)-\Pi_{\mathcal B}(\mathcal I)
=
 c\left(T_{\mathcal B}(\mathcal I)-T_{\mathcal A}(\mathcal I)\right)
+
\tau\left(E_{\mathcal B}(\mathcal I)-E_{\mathcal A}(\mathcal I)\right).
\end{equation}
Consequently, since LJF has zero batching externality,
\begin{equation}
\label{eq:isjl-ljf-profit-comparison}
\Pi_{\mathcal ISJL}(\mathcal I)-\Pi_{\mathcal LJF}(\mathcal I)
\ge
c\left(T_{\mathcal LJF}(\mathcal I)-T_{\mathcal ISJL}(\mathcal I)\right)
-
\tau\alpha O(\mathcal I).
\end{equation}
Similarly, for FCFS,
\begin{equation}
\label{eq:isjl-fcfs-profit-comparison}
\Pi_{\mathcal ISJL}(\mathcal I)-\Pi_{\mathcal FCFS}(\mathcal I)
\ge
c\left(T_{\mathcal FCFS}(\mathcal I)-T_{\mathcal ISJL}(\mathcal I)\right)
+
\tau\left(E_{\mathcal FCFS}(\mathcal I)-\alpha O(\mathcal I)\right).
\end{equation}
\end{proposition}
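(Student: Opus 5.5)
The identity follows by subtracting two instances of the profit decomposition in Theorem~\ref{thm:profit-decomposition}; the two inequalities then follow by plugging in the known externality values and the bound~\eqref{eq:externality-alpha-bound}.

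\textbf{Step 1: the pairwise identity.} For schedules $\mathcal A$ and $\mathcal B$ serving the same accepted instance $\mathcal I$, write~\eqref{eq:profit-decomposition} for each:
\[
\Pi_{\mathcal A}(\mathcal I)=R(\mathcal I)-\tau Q(\mathcal I)-cT_{\mathcal A}(\mathcal I)-\tau E_{\mathcal A}(\mathcal I),
\]
and the same expression with $\mathcal B$ in place of $\mathcal A$. Revenue $R(\mathcal I)=p_0O(\mathcal I)$ depends only on the instance by~\eqref{eq:linear-revenue-fixed-workload}, and $Q(\mathcal I)$ is defined purely from the $o_i$. Both therefore cancel in the difference, which leaves exactly~\eqref{eq:pairwise-profit-identity}. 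This step is pure algebra, and its only prerequisite is that both schedules are feasible, so the decomposition applies to each.

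\textbf{Step 2: the LJF comparison.} Take $\mathcal A=\mathrm{ISJL}$ and $\mathcal B=\mathrm{LJF}$, and show $E_{\mathrm{LJF}}(\mathcal I)=0$. In LJF every job in a batch starts at the same time, so at any step all active jobs share the same progress $a_i^{(t)}$, hence the same $r_i^{(t)}$. This is the argument behind Proposition~\ref{prop:1}. Completed jobs leave the batch and do not enter $S^{(t)}$, so every summand $\max_j r_j^{(t)}-r_i^{(t)}$ in~\eqref{eq:batching-externality} is zero.

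ISJL is fair by construction, so~\eqref{eq:externality-alpha-bound} gives $E_{\mathrm{ISJL}}(\mathcal I)\le\alpha O(\mathcal I)$. Substituting both facts into the identity yields~\eqref{eq:isjl-ljf-profit-comparison}.

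\textbf{Step 3: the FCFS comparison.} Take $\mathcal B=\mathrm{FCFS}$. FCFS need not be fair, but it is still capacity-feasible, and the decomposition in Theorem~\ref{thm:profit-decomposition} does not use fairness. Fairness is needed only for the externality bound, which we apply to ISJL alone. So the identity holds with $E_{\mathrm{FCFS}}$ kept as is, and bounding $-\tau E_{\mathrm{ISJL}}\ge-\tau\alpha O(\mathcal I)$ gives~\eqref{eq:isjl-fcfs-profit-comparison}.

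\textbf{Main obstacle.} The one point needing care is that the decomposition in Theorem~\ref{thm:profit-decomposition} really holds for an arbitrary feasible (possibly unfair) schedule such as FCFS. The check is to recall why it holds. Each step's cost $\tau|S^{(t)}|\max_j r_j^{(t)}$ splits as $\tau\sum_{i\in S^{(t)}}r_i^{(t)}$ plus $\tau$ times that step's externality term. Summed over all steps, the first part counts each token index $k=1,\dots,o_i$ of each request exactly once, which gives $\tau Q(\mathcal I)$. The fixed costs $c$ sum to $cT$. None of this uses fairness.

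One last fairness check is that ISJL, as the paper uses it, is fair, so that~\eqref{eq:externality-alpha-bound} applies to it. This is the same standing assumption used in Theorem~\ref{thm:isjl-profit-guarantee}.
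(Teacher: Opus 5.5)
Your proposal is correct and matches the paper's proof. You subtract the two decompositions so that $R(\mathcal I)$ and $\tau Q(\mathcal I)$ cancel, use the equal-progress argument to get $E_{\mathrm{LJF}}(\mathcal I)=0$, and bound $E_{\mathrm{ISJL}}(\mathcal I)\le\alpha O(\mathcal I)$ via fairness. Your explicit check that the decomposition in Theorem~\ref{thm:profit-decomposition} does not use fairness, so that it also applies to FCFS, is a useful clarification the paper leaves implicit.
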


The proof of Proposition~\ref{prop:pairwise-profit-comparison} can be found in Appendix \ref{append:profit}.
Proposition~\ref{prop:pairwise-profit-comparison} makes explicit the tradeoff
among the three main policies. LJF is the cost-alignment benchmark because it has
$E_{\mathcal LJF}=0$, but it may have a large makespan. FCFS may have small
makespan, but it can have a large externality term. ISJL lies between them: it
allows limited heterogeneity to recover throughput, while bounding the resulting
externality by $\alpha O(\mathcal I)$. In particular, ISJL is guaranteed to be
more profitable than LJF whenever
\[
c\left(T_{\mathcal LJF}(\mathcal I)-T_{\mathcal ISJL}(\mathcal I)\right)
\ge
\tau\alpha O(\mathcal I),
\]
and it is guaranteed to be more profitable than FCFS whenever the right-hand
side of~\eqref{eq:isjl-fcfs-profit-comparison} is nonnegative.

\begin{proposition}[FCFS can have quadratic batching externality]
\label{prop:fcfs-quadratic-externality}
Consider an instance with batch size $B=2$ consisting of one long request of
length $L$ and $L$ short requests of length $1$. The FCFS order places the long
request first, followed by the $L$ short requests. Then
\begin{equation}
\label{eq:fcfs-quadratic-externality}
T_{\mathcal FCFS}(\mathcal I)=L,
\qquad
E_{\mathcal FCFS}(\mathcal I)=\frac{L(L-1)}{2}.
\end{equation}
In contrast, LJF has zero batching externality and makespan
\begin{equation}
\label{eq:ljf-adversarial-makespan}
T_{\mathcal LJF}(\mathcal I)=L+\left\lceil\frac{L-1}{2}\right\rceil,
\qquad
E_{\mathcal LJF}(\mathcal I)=0.
\end{equation}
Therefore
\begin{equation}
\label{eq:ljf-fcfs-profit-adversarial}
\Pi_{\mathcal LJF}(\mathcal I)-\Pi_{\mathcal FCFS}(\mathcal I)
=
\tau\frac{L(L-1)}{2}
-
c\left\lceil\frac{L-1}{2}\right\rceil.
\end{equation}
In particular, for any fixed $c\ge 0$ and $\tau>0$, LJF is more profitable than
FCFS on this instance for all sufficiently large $L$.
\end{proposition}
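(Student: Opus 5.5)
The plan is to compute both schedules directly from the definitions, using the cost model of \eqref{eq:batch-cost-progress} and the externality definition \eqref{eq:batching-externality}. I would then obtain the profit gap from the pairwise identity \eqref{eq:pairwise-profit-identity} in Proposition~\ref{prop:pairwise-profit-comparison}. That identity holds for any two schedules serving the same instance, fair or not, so it applies here even though FCFS violates \eqref{eq:fairness} on this instance.

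\emph{FCFS.} With $B=2$ and continuous batching, the long request occupies one slot from step $1$ through step $L$. The second slot admits the short requests one at a time in arrival order. Each short request needs a single token, so it completes in one step and its slot is refilled immediately. Hence at every step $t=1,\dots,L$ the active batch is the long request, with progress index $r^{(t)}=t$, together with a fresh short request, with $r^{(t)}=1$. All $L$ short requests finish by step $L$, which gives $T_{\mathcal FCFS}(\mathcal I)=L$. At step $t$ the long request contributes $0$ to \eqref{eq:batching-externality} and the short request contributes $t-1$. Summing gives $E_{\mathcal FCFS}(\mathcal I)=\sum_{t=1}^{L}(t-1)=L(L-1)/2$.

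\emph{LJF.} Algorithm~\ref{alg:ljf} sorts the requests as $L,1,\dots,1$. The first static batch is $\{L,1\}$ and lasts $L$ steps. At step $1$ both members have $r=1$, so the externality is zero. After step $1$ the batch contains only the long request, whose contribution is identically zero. The remaining $L-1$ unit requests are then processed in $\lceil (L-1)/2\rceil$ batches of at most two. Every member of these batches has $r=1$, so the externality is again zero. This yields $T_{\mathcal LJF}(\mathcal I)=L+\lceil (L-1)/2\rceil$ and $E_{\mathcal LJF}(\mathcal I)=0$. The zero externality is consistent with the remark in Proposition~\ref{prop:pairwise-profit-comparison} that LJF never incurs externality.

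Substituting these quantities into \eqref{eq:pairwise-profit-identity} with $\mathcal A=\text{LJF}$ and $\mathcal B=\text{FCFS}$ gives $\Pi_{\mathcal LJF}-\Pi_{\mathcal FCFS}=c\,(L-L-\lceil (L-1)/2\rceil)+\tau\,L(L-1)/2$, which is exactly \eqref{eq:ljf-fcfs-profit-adversarial}. For the final claim, the right-hand side is at least $\tau L(L-1)/2-cL/2=\tfrac{L}{2}\bigl(\tau(L-1)-c\bigr)$. This is strictly positive once $L>1+c/\tau$, so LJF is more profitable for all sufficiently large $L$.

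The only step that needs care is the FCFS simulation. It must be consistent with the continuous-batching convention of Remark~\ref{rem:cont-batch-examples}: a short request is admitted in the same step that the previous one frees its slot. It must also use the shifted index $r=a+1$, so that the long request at step $t$ has $r=t$ and not $t-1$. An off-by-one error in either convention would change $L(L-1)/2$ to a neighboring expression. Everything else is routine arithmetic.
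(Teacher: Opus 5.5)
Your proposal is correct and follows the same route as the paper. Both simulate FCFS and LJF step by step with the shifted index $r=a+1$ to obtain $T$ and $E$, then invoke the pairwise identity \eqref{eq:pairwise-profit-identity}; your explicit threshold $L>1+c/\tau$ (via $\lceil (L-1)/2\rceil\le L/2$) simply makes the paper's quadratic-versus-linear growth argument precise.
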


The proof of Proposition~\ref{prop:fcfs-quadratic-externality} can be found in Appendix \ref{append:profit}.
Proposition~\ref{prop:fcfs-quadratic-externality} shows why a pure throughput
criterion can be economically misleading. FCFS finishes the adversarial instance
faster than LJF, but it creates a quadratic batching externality by repeatedly
co-batching a fresh short request with the increasingly advanced long request.
When the fixed per-step cost $c$ is small relative to the max-driven cost
coefficient $\tau$, the externality term dominates the additional steps required
by LJF. This explains why LJF can be the lowest-cost policy in numerical
experiments with very small $c$, while still being unattractive as a production
scheduler because of its inferior throughput and latency.

\begin{corollary}[Makespan minimization as certified profit maximization]
\label{cor:makespan-certified-profit}
Fix an accepted workload $\mathcal I$ and a fairness parameter $\alpha$.
For any $\alpha$-fair schedule $\mathcal A$,
\[
\Pi_{\mathcal A}(\mathcal I)
\ge
R(\mathcal I)-\tau Q(\mathcal I)-\tau \alpha O(\mathcal I)
-cT_{\mathcal A}(\mathcal I).
\]
Consequently, among schedules satisfying the same $\alpha$-fairness constraint,
minimizing makespan maximizes this certified lower bound on profit.
\end{corollary}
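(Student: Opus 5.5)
The plan is to derive the corollary directly from Theorem~\ref{thm:profit-decomposition}, so no new scheduling argument is needed. Fix an $\alpha$-fair schedule $\mathcal A$ on the accepted workload $\mathcal I$. I first invoke the exact decomposition \eqref{eq:profit-decomposition}:
\[
\Pi_{\mathcal A}(\mathcal I)=R(\mathcal I)-\tau Q(\mathcal I)-cT_{\mathcal A}(\mathcal I)-\tau E_{\mathcal A}(\mathcal I).
\]
Because $\mathcal A$ satisfies \eqref{eq:fairness}, the externality bound \eqref{eq:externality-alpha-bound} gives $E_{\mathcal A}(\mathcal I)\le \alpha O(\mathcal I)$. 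Since $\tau>0$, this yields $-\tau E_{\mathcal A}(\mathcal I)\ge -\tau\alpha O(\mathcal I)$. Substituting into the decomposition gives the displayed inequality immediately.

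For the ``consequently'' part, I define the certified bound
\[
\underline{\Pi}_{\mathcal A}(\mathcal I):=R(\mathcal I)-\tau Q(\mathcal I)-\tau\alpha O(\mathcal I)-cT_{\mathcal A}(\mathcal I).
\]
I then observe that $R(\mathcal I)=p_0O(\mathcal I)$ by \eqref{eq:linear-revenue-fixed-workload}. Also, $Q(\mathcal I)$ from \eqref{eq:intrinsic-resource-cost} and $\alpha O(\mathcal I)$ depend only on $\mathcal I$ and $\alpha$, not on the schedule. So across the class $\mathcal F_\alpha(\mathcal I)$, the quantity $\underline{\Pi}_{\mathcal A}(\mathcal I)$ is a constant minus $cT_{\mathcal A}(\mathcal I)$. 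With $c\ge 0$, it is nonincreasing in the makespan. Hence any makespan-minimizing fair schedule, i.e., one attaining $T_{\mathcal OPT}(\mathcal I)$, maximizes the certified lower bound. When $c>0$ the converse also holds, and by Lemma~\ref{lem:1} this is the same as maximizing throughput.

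There is essentially no obstacle here. The only care points are two sign conventions: $\tau>0$ when the externality bound is passed through, and $c\ge0$ for the monotonicity claim. In the degenerate case $c=0$, the bound is constant and every fair schedule trivially maximizes it. The substantive work, namely the decomposition and the $\alpha O(\mathcal I)$ bound, is already supplied by Theorem~\ref{thm:profit-decomposition}.
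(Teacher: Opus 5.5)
Your proposal is correct and matches the paper's proof: you substitute the bound $E_{\mathcal A}(\mathcal I)\le\alpha O(\mathcal I)$ from Theorem~\ref{thm:profit-decomposition} into the exact decomposition, then observe that $R(\mathcal I)$, $\tau Q(\mathcal I)$, and $\tau\alpha O(\mathcal I)$ do not depend on the schedule. Your explicit treatment of the case $c=0$ is a small improvement, since the paper's claim that maximizing the bound is \emph{equivalent} to minimizing makespan tacitly requires $c>0$.
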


Corollary~\ref{cor:makespan-certified-profit} clarifies the role of the
throughput objective in Section~\ref{sec:model}. The proof can be found in Appendix \ref{append:profit}. We do not claim that the
operator only cares about makespan, nor that ISJL is the pure cost minimizer for
a fixed workload. Instead, $\alpha$ specifies the operator's acceptable
resource-misalignment budget, and the scheduling problem optimizes efficiency
within that budget. LJF corresponds to the strict endpoint $\alpha=0$ and is
therefore a natural cost-alignment benchmark. ISJL targets the interior of the
frontier: it permits a controlled amount of resource heterogeneity in order to
recover the throughput and latency benefits of continuous batching.

\subsection{Practical Calibration of the Fairness Parameter $\alpha$}
\label{subsec:alpha-calibration}

The parameter $\alpha$ controls how much resource
heterogeneity the scheduler is allowed to pool within the same active batch.
A smaller $\alpha$ enforces stricter homogeneity and reduces the batching
externality, but may restrict continuous batching and lower throughput. A
larger $\alpha$ increases batching flexibility, but can increase the
schedule-induced externality. Thus, $\alpha$ is a tuning parameter for the
operator's cost--service trade-off.

The theoretical results above provide two complementary calibration
principles. First, the feasible set expands with $\alpha$, so a larger
$\alpha$ weakly relaxes the scheduling constraint. Let $\mathcal F_\alpha$
denote the set of schedules satisfying the $\alpha$-fairness constraint, and
let
\[
T_\alpha^\star(\mathcal I)
=
\min_{\mathcal A\in\mathcal F_\alpha} T_{\mathcal A}(\mathcal I),
\qquad
\Theta_\alpha^\star(\mathcal I)
=
\frac{O(\mathcal I)}{T_\alpha^\star(\mathcal I)}
\]
denote the optimal makespan and optimal throughput under fairness budget
$\alpha$, where $O(\mathcal I)=\sum_i o_i$ is the total token workload.

\begin{proposition}[Nested throughput frontier]
\label{prop:nested-alpha-frontier}
If $0\le \alpha_1\le \alpha_2$, then
\[
\mathcal F_{\alpha_1}\subseteq \mathcal F_{\alpha_2},
\qquad
T_{\alpha_2}^\star(\mathcal I)\le T_{\alpha_1}^\star(\mathcal I),
\qquad
\Theta_{\alpha_2}^\star(\mathcal I)\ge \Theta_{\alpha_1}^\star(\mathcal I).
\]
\end{proposition}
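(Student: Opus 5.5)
The plan is to argue directly from the definition of the feasible set $\mathcal F_\alpha$ and then pass the set inclusion through the two monotone operations $\min$ and $O(\mathcal I)/(\cdot)$.

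\emph{Inclusion of feasible sets.} Take any schedule $\mathcal A\in\mathcal F_{\alpha_1}$. By definition, at every decision epoch $t$ its active batch satisfies $|S^{(t)}|\le B$ and
\[
\max_{i\in S^{(t)}} a_i^{(t)}-\min_{i\in S^{(t)}} a_i^{(t)}\le \alpha_1 .
\]
The capacity constraint does not involve $\alpha$. The progress trajectories $a_i^{(t)}$ depend only on the schedule and not on the budget, so the same inequality gives a gap of at most $\alpha_2$ because $\alpha_1\le\alpha_2$. Hence constraint~\eqref{eq:fairness} holds with parameter $\alpha_2$, and $\mathcal A\in\mathcal F_{\alpha_2}$.

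\emph{Monotonicity of the optimal makespan.} We have $T_\alpha^\star(\mathcal I)=\min_{\mathcal A\in\mathcal F_\alpha}T_{\mathcal A}(\mathcal I)$. A minimum over a larger set is weakly smaller, which gives $T_{\alpha_2}^\star(\mathcal I)\le T_{\alpha_1}^\star(\mathcal I)$. The minimum is attained because $\mathcal F_\alpha$ is never empty: LJF belongs to every $\mathcal F_\alpha$ by Proposition~\ref{prop:1}. The same inclusion $\mathcal F_0\subseteq\mathcal F_\alpha$ covers $\alpha=0$, which Proposition~\ref{prop:1} states only for $\alpha>0$; its argument applies verbatim to $\alpha=0$, since within an LJF batch all requests start together. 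Makespans are positive integers, so the minimum exists.

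\emph{Throughput.} Since $O(\mathcal I)$ is fixed for the instance and $x\mapsto O(\mathcal I)/x$ is decreasing on $x>0$, Lemma~\ref{lem:1} converts the makespan inequality into $\Theta_{\alpha_2}^\star(\mathcal I)\ge\Theta_{\alpha_1}^\star(\mathcal I)$.

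No step here is a genuine obstacle. The only point needing care is that fairness is a pointwise constraint on the schedule's own trajectory, with no other dependence on $\alpha$, so the inclusion is immediate.
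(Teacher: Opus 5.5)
Your proposal is correct and follows the same route as the paper's proof: the pointwise inclusion $\mathcal F_{\alpha_1}\subseteq\mathcal F_{\alpha_2}$, then the fact that a minimum over a larger set is no larger, then the fixed numerator $O(\mathcal I)$ to turn the makespan inequality into the throughput inequality. Your remarks on nonemptiness via LJF (including the $\alpha=0$ case) and on attainment of the minimum are minor additions that the paper leaves implicit.
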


The proof can be found in Appendix \ref{append:profit}. Second, the externality bound in Theorem~\ref{thm:profit-decomposition}
gives a direct cost-based way to choose an upper range for $\alpha$. Suppose
the operator wants the batching-externality cost to be no more than a fraction
$\delta$ of the intrinsic workload cost:
\[
\tau E_{\mathcal A}(\mathcal I)
\le
\delta \tau Q(\mathcal I).
\]
Since every $\alpha$-fair schedule satisfies
$E_{\mathcal A}(\mathcal I)\le \alpha O(\mathcal I)$, it is sufficient to
choose
\[
\alpha
\le
\alpha_{\mathrm{cost}}(\delta)
:=
\delta \frac{Q(\mathcal I)}{O(\mathcal I)}.
\]
Under the token-level abstraction with $r_i^{(t)}=a_i^{(t)}+1$,
\[
Q(\mathcal I)
=
\sum_{i=1}^n \sum_{k=1}^{o_i} k
=
\sum_{i=1}^n \frac{o_i(o_i+1)}{2},
\qquad
O(\mathcal I)=\sum_{i=1}^n o_i.
\]
Thus,
\[
\frac{Q(\mathcal I)}{O(\mathcal I)}
=
\frac{\sum_i o_i(o_i+1)}{2\sum_i o_i},
\]
which can be estimated directly from historical request logs. In large
samples, if $L$ denotes the random request length, this ratio is approximately
\[
\frac{\mathbb E[L^2]+\mathbb E[L]}{2\mathbb E[L]}.
\]
Therefore, the natural scale of $\alpha$ depends on the workload length
distribution; the same numerical value of $\alpha$ need not have the same
operational meaning across workloads.

In practice, we recommend the following calibration procedure. The operator
first estimates the request-length distribution from recent logs and constructs
a candidate grid $\mathcal G$ of $\alpha$ values, scaled by either
$Q(\mathcal I)/O(\mathcal I)$ or by empirical length quantiles. The operator
then replays historical workloads under ISJL for each $\alpha\in\mathcal G$
and records throughput, latency, and the cost decomposition
\[
C_{\mathrm{ISJL}}(\mathcal I;\alpha)
=
\tau Q(\mathcal I)
+
cT_{\mathrm{ISJL}}(\mathcal I;\alpha)
+
\tau E_{\mathrm{ISJL}}(\mathcal I;\alpha).
\]
Because $Q(\mathcal I)$ and revenue are fixed for the accepted workload, the
profit-relevant tuning problem is
\[
\alpha^\star
\in
\arg\min_{\alpha\in\mathcal G}
\left\{
cT_{\mathrm{ISJL}}(\mathcal I;\alpha)
+
\tau E_{\mathrm{ISJL}}(\mathcal I;\alpha)
\right\},
\]
possibly subject to service-level constraints such as
\[
\mathrm{Throughput}_{\mathrm{ISJL}}(\alpha)\ge \Theta^{\mathrm{SLA}},
\qquad
\mathrm{Latency}_{q,\mathrm{ISJL}}(\alpha)\le L_q^{\mathrm{SLA}},
\]
where $L_q^{\mathrm{SLA}}$ is a target quantile of end-to-end latency. Equivalently,
when a throughput or latency SLA is binding, the operator may choose the
smallest $\alpha$ that satisfies the service target, because smaller values
provide tighter cost alignment.

\subsection{Numerical Evidence}
\label{sec:profit-cost-numerics}

The preceding results show that the economic effect of scheduling can be
separated into a makespan component and a batching-externality component. We now
quantify these two components in simulation. The numerical design differs from
our earlier pricing-centered experiment in one important respect: the main text
uses only linear token pricing, and the accepted workload is fixed before
schedulers are compared. Thus, within each replication, FCFS, LJF, and ISJL face
exactly the same set of accepted requests and therefore generate exactly the
same token-metered revenue. Any difference in profit is consequently a
cost-side effect of scheduling.

\xhdr{Experimental design.}
We generate $n=400$ potential requests. Each request length is drawn
independently as
\[
        o_i\sim \mathrm{Uniform}\{1,2,\ldots,1000\}.
\]
Consistent with the linear-pricing model in
\eqref{eq:linear-revenue-fixed-workload}, request $i$ is quoted price
$p_i=p_0o_i$, where $p_0=p/500$ and $p=1$. We retain the same length-dependent
acceptance model used in the original numerical study. Specifically, requests
are divided into ten length buckets $[1,100],[101,200],\ldots,[901,1000]$, and a
request in bucket $b\in\{1,\ldots,10\}$ accepts with probability
\[
        \Pr(\text{accept})
        =\max\{0,1-k_b p_i\},
        \qquad
        k_b=\frac{1}{p_0(2b-1)100}.
\]
After acceptance is realized, the accepted workload is held fixed and served by
all scheduling policies. We compare FCFS, LJF, and ISJL with
$\alpha\in\{50,100,150,250, 300, 400, 500\}$ under batch size $B=50$. The cost parameters
are $c=0.0005$ and $\tau=10^{-6}$. For each schedule, we compute cost using
\eqref{eq:batch-cost-progress} and report the decomposition
\[
        C_{\mathcal A}(\mathcal I)
        =\tau Q(\mathcal I)
        +cT_{\mathcal A}(\mathcal I)
        +\tau E_{\mathcal A}(\mathcal I).
\]
The reported values are averages over five independent replications.

\begin{table}[t]
\centering
\caption{Profit and cost decomposition under linear pricing, \(n=400\), \(B=50\)}
\label{tab:cost-decomposition-n400-b50}
\begin{tabular}{lrrrrrrrr}
\toprule
Policy & Profit & Cost & \(\tau Q\) & \(cT\) & \(\tau E\) & Throughput & Gain vs. LJF & Avg. extent \\
\midrule
FCFS & 128.73 & 68.41 & 32.07 & 1.24 & 35.10 & 39.78 & -0.12\% & 566.26 \\
LJF & 163.82 & 33.31 & 32.07 & 1.24 & 0.00 & 39.83 & 0.00\% & 0.00 \\
ISJL(\(\alpha=50\)) & 162.53 & 34.60 & 32.07 & 1.20 & 1.33 & 41.23 & 3.50\% & 36.62 \\
ISJL(\(\alpha=100\)) & 161.10 & 36.03 & 32.07 & 1.16 & 2.79 & 42.30 & 6.21\% & 74.83 \\
ISJL(\(\alpha=150\)) & 159.59 & 37.55 & 32.07 & 1.12 & 4.35 & 44.00 & 10.48\% & 111.86 \\
ISJL(\(\alpha=250\)) & 156.30 & 40.83 & 32.07 & 1.05 & 7.71 & 46.99 & 17.97\% & 187.29 \\
ISJL(\(\alpha=300\)) & 154.71 & 42.43 & 32.07 & 1.03 & 9.32 & 47.92 & 20.31\% & 222.57 \\
ISJL(\(\alpha=400\)) & 151.31 & 45.83 & 32.07 & 1.02 & 12.74 & 48.36 & 21.42\% & 290.71 \\
ISJL(\(\alpha=500\)) & 147.82 & 49.32 & 32.07 & 1.01 & 16.23 & 48.61 & 22.03\% & 361.91 \\
\bottomrule
\end{tabular}
\end{table}

Table~\ref{tab:cost-decomposition-n400-b50} shows the three-way tradeoff among
FCFS, LJF, and ISJL. First, the intrinsic workload term \(\tau Q\) is identical
across all policies, as predicted by Theorem~\ref{thm:profit-decomposition}. Thus
the profit differences come only from the schedule-dependent terms \(cT\) and
\(\tau E\). Second, FCFS has essentially the same throughput as LJF in this
configuration, but it incurs a large batching-externality cost. Its average
within-batch extent is 566.26, and its externality cost is 35.10. This explains
why FCFS has substantially lower profit than the fair schedules under the same
linear token-pricing rule. Third, LJF eliminates the batching externality, but it
achieves this by static batching. It is therefore cost-aligned but less flexible.

ISJL traces a managerial frontier between these two extremes. For small
\(\alpha\), ISJL is close to LJF in cost alignment: at \(\alpha=50\), its profit is
within 0.8\% of LJF while increasing throughput by 3.5\%. As \(\alpha\) increases,
ISJL admits more jobs through continuous batching and obtains larger throughput
improvements at the cost of a controlled increase in batching externality. At
\(\alpha=300\), ISJL improves throughput by 20.31\% relative to LJF and still
improves profit by 20.19\% relative to FCFS; its profit remains within 5.56\% of
LJF. Larger values of \(\alpha\) continue to improve throughput slightly, but the
additional batching externality erodes profit. This pattern is consistent with
the theory: the fairness constraint bounds the externality term, while relaxing
\(\alpha\) increases scheduling flexibility and reduces the makespan term.

\begin{figure}[htbp]
\centering
\includegraphics[width=0.48\textwidth]{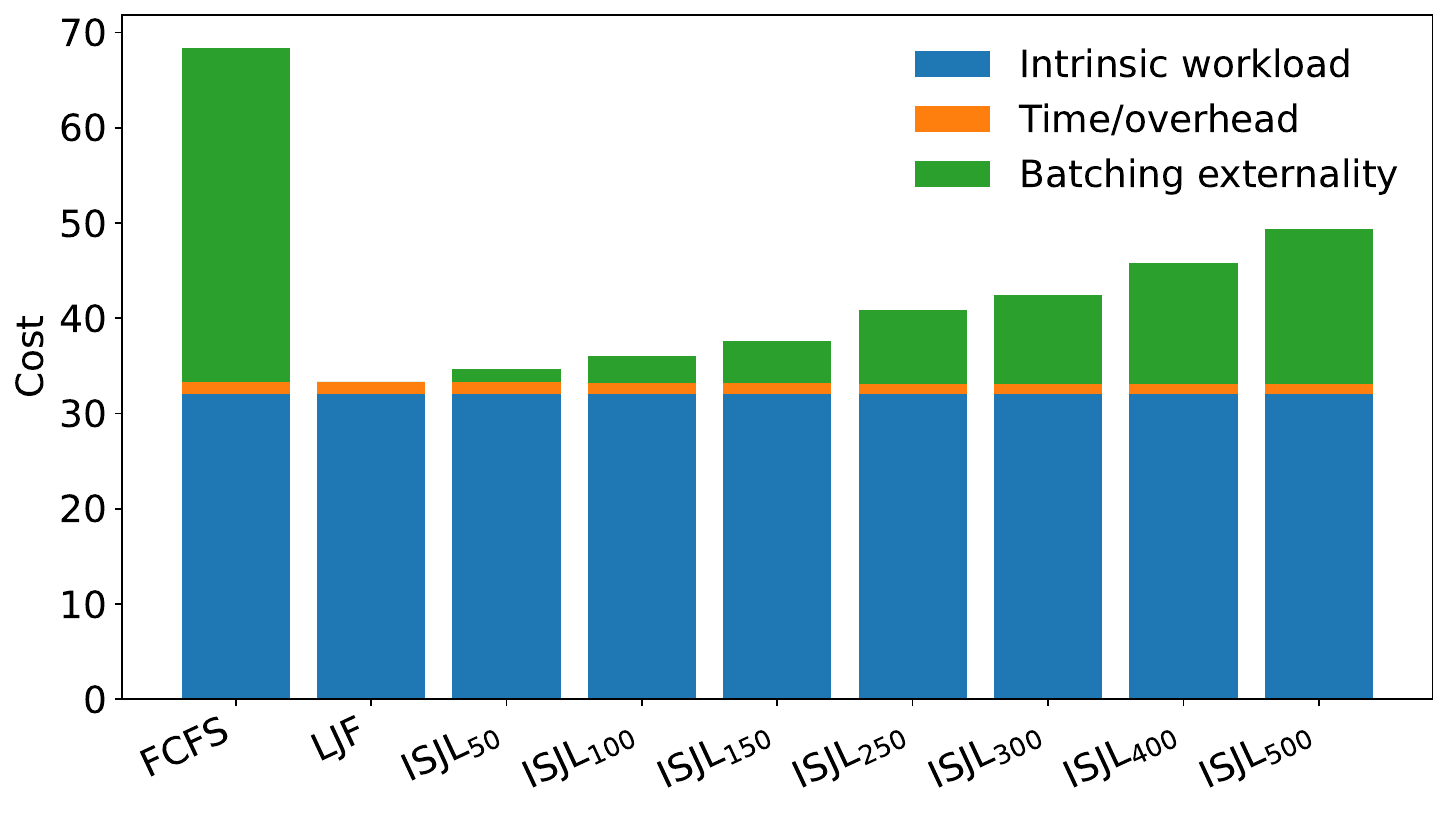}
\includegraphics[width=0.48\textwidth]{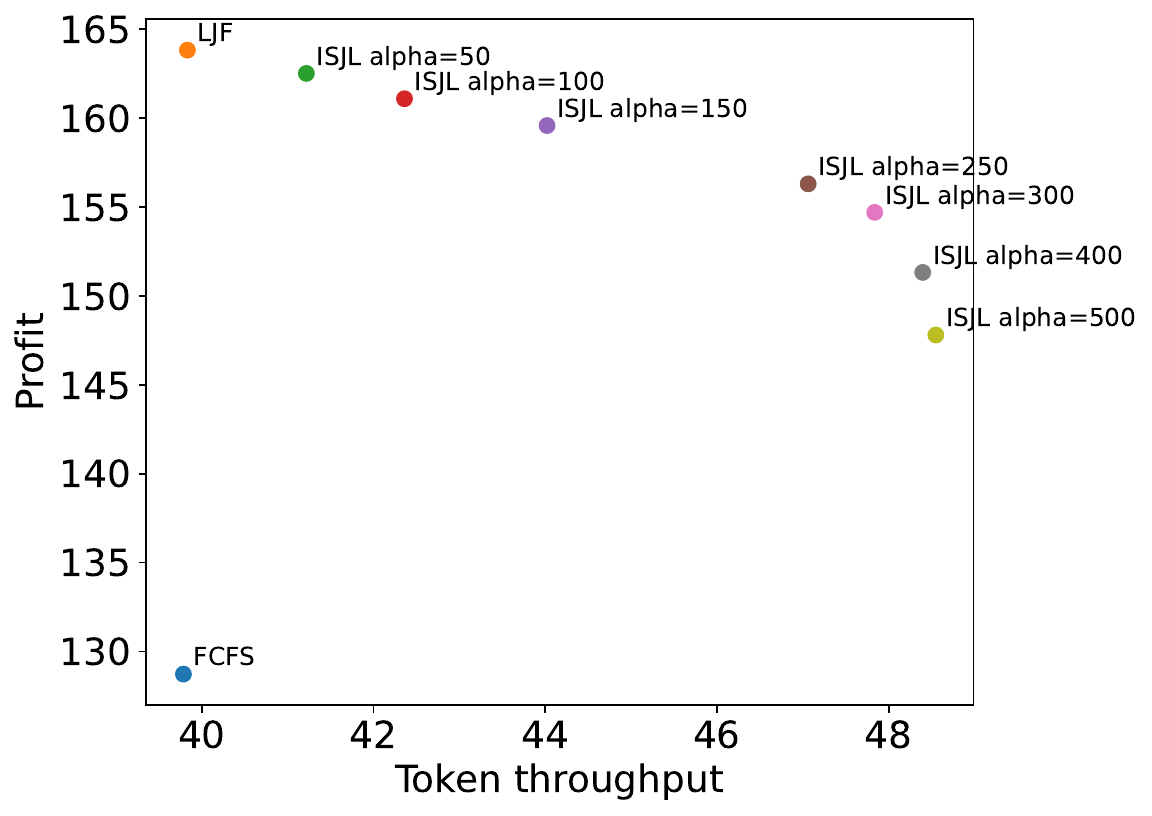}
\caption{Cost decomposition and profit--throughput frontier under linear pricing.}
\label{fig:cost-frontier-numerics}
\end{figure}

Figure~\ref{fig:cost-frontier-numerics} visualizes the same tradeoff. LJF is the
cost-alignment benchmark because its batching externality is zero. FCFS is the
opposite benchmark: it admits requests continuously but allows large progress
heterogeneity, generating high externality cost. ISJL traces an intermediate
frontier. A smaller $\alpha$ keeps ISJL close to LJF in profit, while a larger
$\alpha$ gives more batching flexibility and higher throughput at the cost of a
larger externality term.

\xhdr{When does ISJL dominate LJF in profit?}
The pairwise comparison in Proposition~\ref{prop:pairwise-profit-comparison}
provides a simple way to interpret the fact that LJF has the highest profit in
Table~\ref{tab:cost-decomposition-n400-b50}. Since
$E_{\mathrm{LJF}}(\mathcal I)=0$, ISJL with parameter $\alpha$ has higher profit
than LJF exactly when its time-cost savings exceed its additional externality
cost:
\[
        c\left(T_{\mathrm{LJF}}(\mathcal I)
        -T_{\mathrm{ISJL}(\alpha)}(\mathcal I)\right)
        \ge
        \tau E_{\mathrm{ISJL}(\alpha)}(\mathcal I).
\]
Equivalently, the break-even fixed cost is
\[
        c_\alpha^\star
        =
        \frac{\tau E_{\mathrm{ISJL}(\alpha)}(\mathcal I)}
        {T_{\mathrm{LJF}}(\mathcal I)-T_{\mathrm{ISJL}(\alpha)}(\mathcal I)}.
\]
Table~\ref{tab:break-even-numerics} reports the corresponding empirical
break-even values.

\begin{table}[htbp]
\centering
\caption{Break-even overhead for ISJL relative to LJF}
\label{tab:break-even-numerics}
\begin{tabular}{rrrr}
\toprule
$\alpha$ & Time saving vs. LJF & $\tau E_{\mathrm{ISJL}(\alpha)}$ & $c_\alpha^\star$ \\
\midrule
50 & 83.2 & 1.35 & 0.0162 \\
100 & 147.6 & 2.81 & 0.0190 \\
150 & 235.2 & 4.36 & 0.0185 \\
250 & 380.0 & 7.71 & 0.0203 \\
300 & 414.0 & 9.33 & 0.0225 \\
400 & 437.8 & 12.72 & 0.0291 \\
500 & 444.2 & 16.25 & 0.0366 \\
\bottomrule
\end{tabular}
\end{table}

The baseline value $c=0.0005$ is far below these break-even values. Thus LJF's
profit advantage in Table~\ref{tab:cost-decomposition-n400-b50} is not
surprising: the experiment intentionally places little weight on the number of
service steps. When the fixed per-step overhead becomes economically material,
ISJL's makespan advantage can dominate its controlled externality loss. For
example, ISJL with $\alpha=50$ becomes more profitable than LJF once
$c>0.0162$. This sensitivity result is consistent with the bi-criterion
interpretation of ISJL: it sacrifices a bounded amount of cost alignment to
recover throughput.

\begin{figure}[htbp]
\centering
\includegraphics[width=0.62\textwidth]{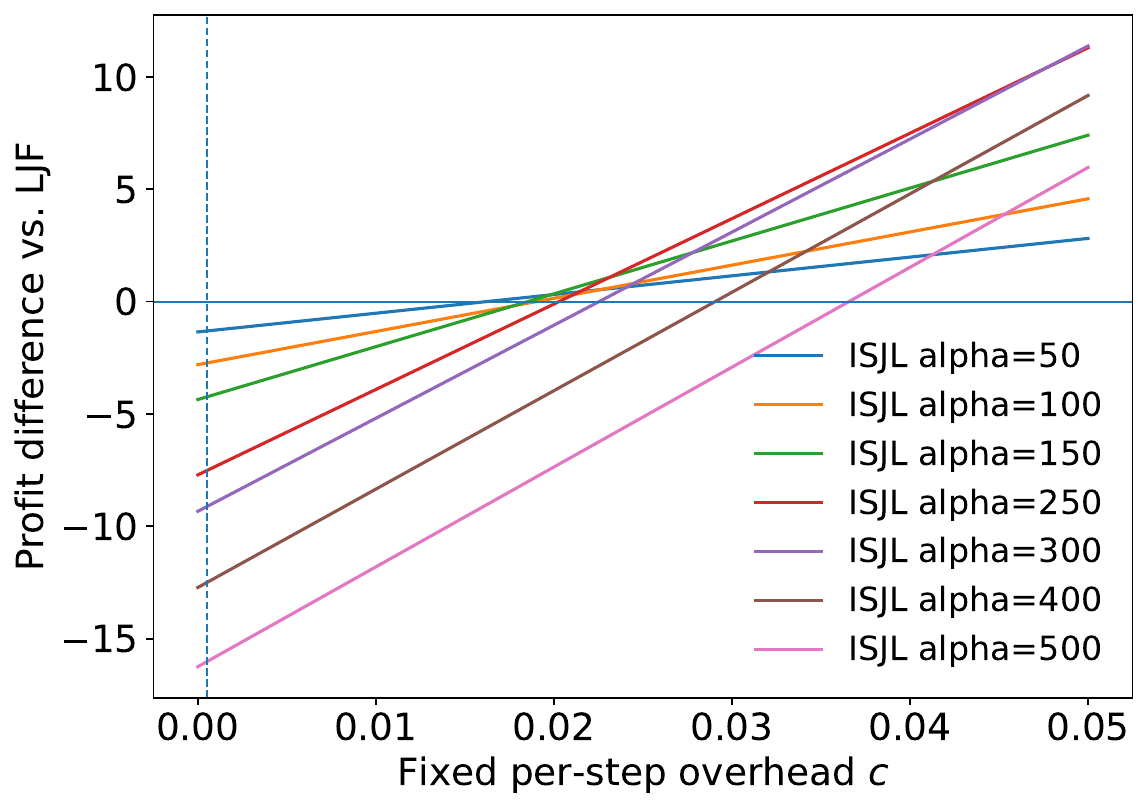}
\caption{Profit difference between ISJL and LJF as the fixed per-step overhead $c$ varies. The dashed vertical line marks the baseline value $c=0.0005$.}
\label{fig:overhead-sensitivity-numerics}
\end{figure}

\subsection{Scope of the Demand Specification and Relation to the Efficiency Benchmark}
\label{subsec:demand-scope}

The profit experiment above should be interpreted as a fixed-workload evaluation
of scheduling policies, not as a model of scheduling-dependent demand. We first
condition on an accepted workload $\mathcal I$ and then hold this workload fixed
across all scheduling policies. Under exogenous token-metered pricing, total
revenue is therefore schedule-independent:
\[
R(\mathcal I)=\rho O(\mathcal I),
\qquad
O(\mathcal I)=\sum_{i\in\mathcal I}o_i .
\]
Consequently, profit differences across FCFS, LJF, and ISJL arise from inference
cost rather than from demand-side revenue changes. In particular, for any two
schedulers $\mathcal A$ and $\mathcal B$ serving the same accepted workload,
\[
\Pi_{\mathcal A}(\mathcal I)-\Pi_{\mathcal B}(\mathcal I)
=
C_{\mathcal B}(\mathcal I)-C_{\mathcal A}(\mathcal I).
\]
This design isolates the operational mechanism analyzed in this paper: how a
scheduler changes the composition of active batches and thereby changes the
schedule-dependent cost components in Theorem~\ref{thm:profit-decomposition},
namely the time/overhead term $cT_{\mathcal A}(\mathcal I)$ and the
batching-externality term $\tau E_{\mathcal A}(\mathcal I)$.

The current analysis therefore does not model a latency-sensitive demand
channel. In practice, scheduling can affect future demand through service
quality: lower waiting time or end-to-end latency may increase user retention,
repeat usage, or willingness to pay, especially for short and interactive
requests. A reduced-form latency-sensitive acceptance model might take the form
\[
v_i-p_i-h_i\mathbb E[W_i(\mathcal A)]\ge 0,
\]
where $v_i$ is the customer's value, $p_i$ is the token-metered price, $h_i$ is
the customer's delay sensitivity, and $W_i(\mathcal A)$ is the waiting time
induced by scheduler $\mathcal A$. Such a model would create a feedback loop
between scheduling, expected waiting time, demand composition, and revenue. It
would also require specifying the arrival process, customer beliefs about
service quality, and the timing of pricing and scheduling decisions. We view
this joint demand--scheduling problem as an important extension, but it is
outside the scope of the present paper.

\section{System Efficiency Under Resource-Fair Scheduling} \label{sec:num}

The previous section quantified cost alignment and profit under exogenous token pricing. This section isolates the operational efficiency of the scheduling policies on a fixed workload from a public dataset, using throughput and average end-to-end latency.

\xhdr{Dataset Overview.}
The experimental data for this study is derived from the LMSYS-Chat-1M dataset \cite{zheng2023lmsys}, a publicly available corpus containing human-chatbot interactions. This dataset was compiled from the Vicuna demo and Chatbot Arena, aggregating conversations from more than 210,000 distinct IP addresses. For the purpose of our simulations, a statistically representative subset of 2,000 conversations was randomly extracted from this larger population.

To model computational demand, we characterize each conversation by its total token count. The sampled subset exhibits considerable diversity in these dimensions. The distribution of total token counts is characterized by a mean of 126, a median of 90, a max of 905, and a variance of 14938.  Figure~\ref{fig:distribution} in Appendix \ref{append:num} illustrates the heterogeneous nature of the workload used in our analysis.

\xhdr{Performance Metrics and Benchmark Algorithms.}
We evaluate the performance of the following scheduling algorithms under batch sizes of $B=16$ and $B=32$:
\begin{itemize}
    \item First-Come, First-Served (FCFS).
    \item Shortest Job First (SJF), which prioritizes requests with the smallest total token count.
    \item Longest Job First (LJF).
    \item The proposed ISJL algorithm with fairness parameters $\alpha=50, 100, 150$.
\end{itemize}

The evaluation is based on two key metrics: throughput and Average End-to-end Latency (AEL). Throughput, defined as the total number of tokens processed per unit time, is calculated as:
\begin{equation} \label{eq:throughput}
    \text{Throughput} = \frac{\text{Total number of tokens processed in }[0, T]}{T},
\end{equation}
where $T$ is the makespan, i.e., the time when all requests have been completed. The AEL is the averaged completion time across all requests:
\begin{equation} \label{eq:ael}
    \text{AEL} = \frac{1}{n} \sum_{i=1}^{n} c_i,
\end{equation}
where $c_i$ is the completion time of request $i$ and $n=2000$ is the total number of requests.

\xhdr{Results.}
The performance results for all algorithms under batch sizes $B=16$ and $B=32$ are summarized in Table~\ref{tab:performanceefficiency}. Our analysis yields several key findings.

{
\xhdr{ISJL dominates all baselines on both metrics 
simultaneously.}
The proposed ISJL algorithm achieves strictly higher 
throughput \emph{and} strictly lower AEL than every 
baseline for all tested values of $\alpha$, across 
both batch sizes. At $B=16$, ISJL with $\alpha=150$ 
achieves the highest throughput ($\mathbf{276}$), 
representing a $14\%$ improvement over LJF ($243$) 
and a $28\%$ improvement over SJF ($196$). 
Simultaneously, ISJL with $\alpha=50$ achieves the 
lowest AEL ($\mathbf{53}$), a $20\%$ improvement 
over LJF ($66$) and a $12\%$ improvement over SJF 
($60$). At $B=32$, the same pattern holds: 
ISJL with $\alpha=150$ achieves throughput of 
$\mathbf{378}$ ($+17\%$ over LJF, $+46\%$ over SJF), 
while ISJL with $\alpha=50$ achieves AEL of 
$\mathbf{62}$ ($-21\%$ over LJF, $-7\%$ over SJF). 
This simultaneous improvement should be interpreted as a comparison with the evaluated benchmark policies, not with the unconstrained optimal schedule. Operationally, ISJL performs well because it combines resource-homogeneous batching with event-driven admissions: it avoids the severe progress imbalance that can arise under unrestricted continuous batching, while also avoiding the idle capacity created by fully static batching.}

\begin{table}[htbp]
\centering
\caption{Throughput and AEL by Different Scheduling Algorithms}
\label{tab:performanceefficiency}
\begin{tabular}{lcccccccc}
\toprule
\multirow{3}{*}{Batch Size} & \multirow{3}{*}{Metrics} & \multicolumn{6}{c}{Scheduling Algorithms} \\
\cmidrule(lr){3-8}
 & & FCFS & SJF & LJF & \multicolumn{3}{c}{ISJL} \\
\cmidrule(lr){6-8}
 & & & & & $\alpha=50$ & $\alpha=100$ & $\alpha=150$ \\
\midrule
\multirow{2}{*}{$B=16$} & Throughput & 216 & 196 & 243 & 262 & 270 & \textbf{276} \\
 & AEL & 62 & 60 & 66 & \textbf{53} & 56 & 54 \\
\cmidrule(lr){1-8}
\multirow{2}{*}{$B=32$} & Throughput & 278 & 259 & 323 & 357 & 369 & \textbf{378} \\
 & AEL & 70 & 67 & 78 & \textbf{62} & 63 & 65 \\
\bottomrule
\end{tabular}
\end{table}

{
\xhdr{The parameter $\alpha$ provides a tunable 
efficiency--latency trade-off.}
Within the ISJL family, the parameter $\alpha$ allows 
operators to tune the balance between throughput and 
AEL. A lower value ($\alpha=50$) enforces stricter 
fairness, reducing the KV-cache spread within each 
batch and thereby minimizing the waiting time for 
individual requests: this configuration achieves 
the best AEL ($53$ at $B=16$, $62$ at $B=32$). 
A higher value ($\alpha=150$) permits greater 
batching flexibility, enabling more requests to be 
processed concurrently and maximizing throughput 
($276$ at $B=16$, $378$ at $B=32$).

}

\section{Conclusion}\label{sec:conclusion}

This paper studies resource-fair scheduling for batched LLM inference. In heterogeneous
continuous batching, short requests can be served in batches whose resource profile is
determined by much longer requests. We formalize this issue through a resource-homogeneity
constraint and design ISJL, a hybrid batching policy that preserves high throughput while
respecting a tunable fairness budget. The theoretical analysis gives a $3/4$ competitive-ratio
guarantee for backlogged instances, and the cost-decomposition analysis shows how the
fairness constraint bounds the schedule-induced batching externality under token-metered
pricing.

Empirically, ISJL occupies a favorable point on the cost--efficiency frontier. Relative to
FCFS, it substantially reduces batching-externality cost; relative to LJF, it recovers much of
the throughput and latency benefit of continuous batching. Thus ISJL should be interpreted
as a bi-criterion scheduling policy: it controls cost misalignment while retaining operational
efficiency.

\paragraph{Discussion: many-server scaling at fixed batch size.}
A natural question is whether the resource-homogeneity problem diminishes in a large-scale
many-server system. In such a regime, a centralized scheduler may
draw from a larger pending pool at each batching epoch, making it easier to form
length-homogeneous cohorts.

The following simple observation formalizes this pooling effect. Suppose that, at a decision
epoch, there are $N$ unstarted pending requests with total token lengths in
$[0,\bar o]$. Fix $\alpha>0$ and partition $[0,\bar o]$ into
$K=\lceil \bar o/\alpha\rceil$ intervals of width at most $\alpha$. Within each interval,
form as many full batches of size $B$ as possible. Since all requests in the same interval
have total lengths differing by at most $\alpha$, each such batch is length-homogeneous.
In interval $k$, at most $B-1$ requests are left over. Therefore, across all intervals, at most
$K(B-1)$ requests are left unmatched, and at least
\[
N-K(B-1)
\]
requests can be assigned to full length-homogeneous batches. Consequently, for fixed
$B$, fixed $\alpha$, and bounded length support, the unmatched fraction is at most
\[
\frac{K(B-1)}{N},
\]
which vanishes as the pending pool size $N$ grows.

This observation supports the intuition that large-scale pooling can mitigate the initial
batch-formation problem. We therefore do not claim that the performance gap between
resource-aware batching and simpler length-clustering policies must persist in every
many-server limit. Instead, our results should be interpreted as finite-instance and
finite-pool guarantees. They are especially relevant when the pending pool is not large
enough to make all batches nearly homogeneous, when workloads are heavy-tailed, when
latency constraints prevent the system from waiting to accumulate a large matching pool,
or when GPU routing and KV-cache locality limit global repacking.

Moreover, the large-pool matching observation concerns the formation of new cohorts of
unstarted requests. It does not by itself remove the dynamic issue created by continuous
batching. Once an active batch has progressed beyond $\alpha$, a newly arrived request
with progress zero cannot be inserted into that batch without violating the fairness
constraint, unless the system allows migration, preemption, or KV-cache repacking across
workers. Thus, large-scale pooling can reduce heterogeneity at admission, but dynamic
admission decisions and finite waiting-time requirements remain operationally important.
A full many-server stochastic analysis with fixed $B$ is an interesting direction for future
work.

\bibliographystyle{ACM-Reference-Format}
\bibliography{reference}

@article{lee2001machine,
  title={Machine scheduling with transportation considerations},
  author={Lee, Chung-Yee and Chen, Zhi-Long},
  journal={Journal of scheduling},
  volume={4},
  number={1},
  pages={3--24},
  year={2001},
  publisher={Wiley Online Library}
}

@article{graham1969bounds,
 ISSN = {00361399},
 URL = {http://www.jstor.org/stable/2099572},
 author = {R. L. Graham},
 journal = {SIAM Journal on Applied Mathematics},
 number = {2},
 pages = {416--429},
 publisher = {Society for Industrial and Applied Mathematics},
 title = {Bounds on Multiprocessing Timing Anomalies},
 urldate = {2025-08-29},
 volume = {17},
 year = {1969}
}

@inproceedings{sheng2024fairness,
  title={Fairness in serving large language models},
  author={Sheng, Ying and Cao, Shiyi and Li, Dacheng and Zhu, Banghua and Li, Zhuohan and Zhuo, Danyang and Gonzalez, Joseph E and Stoica, Ion},
  booktitle={18th USENIX Symposium on Operating Systems Design and Implementation (OSDI 24)},
  pages={965--988},
  year={2024}
}

@article{khan2024ensuring,
  title={Ensuring Fair LLM Serving Amid Diverse Applications},
  author={Khan, Redwan Ibne Seraj and Jain, Kunal and Shen, Haiying and Mallick, Ankur and Parayil, Anjaly and Kulkarni, Anoop and Kofsky, Steve and Choudhary, Pankhuri and Amant, Renee St and Wang, Rujia and others},
  journal={arXiv preprint arXiv:2411.15997},
  year={2024}
}

@article{wei2025equinox,
  title={Equinox: Holistic Fair Scheduling in Serving Large Language Models},
  author={Wei, Zhixiang and Yen, James and Chen, Jingyi and Zhang, Ziyang and Huang, Zhibai and Chen, Chen and Yu, Xingzi and Gu, Yicheng and Wu, Chenggang and Wang, Yun and others},
  journal={arXiv preprint arXiv:2508.16646},
  year={2025}
}

@article{huang2025orlm,
  title={Orlm: A customizable framework in training large models for automated optimization modeling},
  author={Huang, Chenyu and Tang, Zhengyang and Hu, Shixi and Jiang, Ruoqing and Zheng, Xin and Ge, Dongdong and Wang, Benyou and Wang, Zizhuo},
  journal={Operations Research},
  year={2025},
  publisher={INFORMS}
}

@article{wang2025llm,
  title={LLM Serving Optimization with Variable Prefill and Decode Lengths},
  author={Wang, Meixuan and Ye, Yinyu and Zhou, Zijie},
  journal={arXiv preprint arXiv:2508.06133},
  year={2025}
}

@article{chen2025adaptively,
  title={Adaptively Robust LLM Inference Optimization under Prediction Uncertainty},
  author={Chen, Zixi and Ye, Yinyu and Zhou, Zijie},
  journal={arXiv preprint arXiv:2508.14544},
  year={2025}
}

@article{jaillet2025online,
  title={Online Scheduling for LLM Inference with KV Cache Constraints},
  author={Jaillet, Patrick and Jiang, Jiashuo and Mellou, Konstantina and Molinaro, Marco and Podimata, Chara and Zhou, Zijie},
  journal={arXiv preprint arXiv:2502.07115},
  year={2025}
}

@article{ao2025optimizing,
  title={Optimizing LLM Inference: Fluid-Guided Online Scheduling with Memory Constraints},
  author={Ao, Ruicheng and Luo, Gan and Simchi-Levi, David and Wang, Xinshang},
  journal={arXiv preprint arXiv:2504.11320},
  year={2025}
}

@article{shahout2024don,
  title={Don't Stop Me Now: Embedding Based Scheduling for LLMs},
  author={Shahout, Rana and Malach, Eran and Liu, Chunwei and Jiang, Weifan and Yu, Minlan and Mitzenmacher, Michael},
  journal={arXiv preprint arXiv:2410.01035},
  year={2024}
}

@article{mak2015appointment,
  title={Appointment scheduling with limited distributional information},
  author={Mak, Ho-Yin and Rong, Ying and Zhang, Jiawei},
  journal={Management Science},
  volume={61},
  number={2},
  pages={316--334},
  year={2015},
  publisher={INFORMS}
}

@article{kong2013scheduling,
  title={Scheduling arrivals to a stochastic service delivery system using copositive cones},
  author={Kong, Qingxia and Lee, Chung-Yee and Teo, Chung-Piaw and Zheng, Zhichao},
  journal={Operations research},
  volume={61},
  number={3},
  pages={711--726},
  year={2013},
  publisher={INFORMS}
}

@article{allahverdi2008survey,
  title={A survey of scheduling problems with setup times or costs},
  author={Allahverdi, Ali and Ng, Chi To and Cheng, TC Edwin and Kovalyov, Mikhail Y},
  journal={European journal of operational research},
  volume={187},
  number={3},
  pages={985--1032},
  year={2008},
  publisher={Elsevier}
}

@article{chen1998review,
  title={A review of machine scheduling: Complexity, algorithms and approximability},
  author={Chen, Bo and Potts, Chris N and Woeginger, Gerhard J},
  journal={Handbook of Combinatorial Optimization: Volume1--3},
  pages={1493--1641},
  year={1998},
  publisher={Springer}
}

@article{zheng2023lmsys,
  title={{LMSYS-Chat-1M}: A large-scale real-world {LLM} conversation dataset},
  author={Zheng, Lianmin and Chiang, Wei-Lin and Sheng, Ying and Li, Tianle and Zhuang, Siyuan and Wu, Zhanghao and Zhuang, Yonghao and Li, Zhuohan and Lin, Zi and Xing, Eric and others},
  journal={arXiv preprint arXiv:2309.11998},
  year={2023}
}

@article{agrawal2024taming,
  title={Taming throughput-latency tradeoff in {LLM} inference with {Sarathi-Serve}},
  author={Agrawal, Amey and Kedia, Nitin and Panwar, Ashish and Mohan, Jayashree and Kwatra, Nipun and Gulavani, Bhargav S and Tumanov, Alexey and Ramjee, Ramachandran},
  journal={arXiv preprint arXiv:2403.02310},
  year={2024}
}

@article{agrawal2023sarathi,
  title={Sarathi: Efficient {LLM} inference by piggybacking decodes with chunked prefills},
  author={Agrawal, Amey and Panwar, Ashish and Mohan, Jayashree and Kwatra, Nipun and Gulavani, Bhargav S and Ramjee, Ramachandran},
  journal={arXiv preprint arXiv:2308.16369},
  year={2023}
}

@article{patel2023splitwise,
  title={Splitwise: Efficient generative {LLM} inference using phase splitting},
  author={Patel, Pratyush and Choukse, Esha and Zhang, Chaojie and Shah, Aashaka and Goiri, {\'I}{\~n}igo and Maleki, Saeed and Bianchini, Ricardo},
  journal={Power},
  volume={400},
  number={700W},
  pages={1--75},
  year={2023}
}

@article{brown2020language,
  title={Language models are few-shot learners},
  author={Brown, Tom and Mann, Benjamin and Ryder, Nick and Subbiah, Melanie and Kaplan, Jared D and Dhariwal, Prafulla and Neelakantan, Arvind and Shyam, Pranav and Sastry, Girish and Askell, Amanda and others},
  journal={Advances in neural information processing systems},
  volume={33},
  pages={1877--1901},
  year={2020}
}

@article{chowdhery2023palm,
  title={Palm: Scaling language modeling with pathways},
  author={Chowdhery, Aakanksha and Narang, Sharan and Devlin, Jacob and Bosma, Maarten and Mishra, Gaurav and Roberts, Adam and Barham, Paul and Chung, Hyung Won and Sutton, Charles and Gehrmann, Sebastian and others},
  journal={Journal of Machine Learning Research},
  volume={24},
  number={240},
  pages={1--113},
  year={2023}
}

@article{openai2023gpt,
  title={{GPT}-4 technical report. arxiv 2303.08774},
  author={OpenAI},
  journal={View in Article},
  volume={2},
  number={5},
  year={2023}
}

@misc{anthropic2023claude,
  author = {Anthropic},
  title = {Claude},
  howpublished = {\url{https://claude.ai}},
  year = {2023}
}

@misc{chatgpt2023,
  author = {OpenAI},
  title = {{ChatGPT}},
  howpublished = {\url{https://chat.openai.com}},
  year = {2019}
}

@misc{codewhisperer2023,
  author = {{Amazon}},
  title = {Amazon {CodeWhisperer}},
  howpublished = {\url{https://aws.amazon.com/codewhisperer/}},
  year = {2023}
}

@misc{githubcopilot2023,
  author = {{GitHub}},
  title = {{GitHub} Copilot},
  howpublished = {\url{https://github.com/features/copilot}},
  year = {2021}
}

@article{cascella2023evaluating,
  title={Evaluating the feasibility of {ChatGPT} in healthcare: an analysis of multiple clinical and research scenarios},
  author={Cascella, Marco and Montomoli, Jonathan and Bellini, Valentina and Bignami, Elena},
  journal={Journal of medical systems},
  volume={47},
  number={1},
  pages={33},
  year={2023},
  publisher={Springer}
}

@article{sallam2023utility,
  title={The utility of {ChatGPT} as an example of large language models in healthcare education, research and practice: Systematic review on the future perspectives and potential limitations},
  author={Sallam, Malik},
  journal={MedRxiv},
  pages={2023--02},
  year={2023},
  publisher={Cold Spring Harbor Laboratory Press}
}

@inproceedings{kwon2023efficient,
  title={Efficient memory management for large language model serving with {PagedAttention}},
  author={Kwon, Woosuk and Li, Zhuohan and Zhuang, Siyuan and Sheng, Ying and Zheng, Lianmin and Yu, Cody Hao and Gonzalez, Joseph and Zhang, Hao and Stoica, Ion},
  booktitle={Proceedings of the 29th Symposium on Operating Systems Principles},
  pages={611--626},
  year={2023}
}

@inproceedings{yu2022orca,
  title={Orca: A distributed serving system for $\{$Transformer-Based$\}$ generative models},
  author={Yu, Gyeong-In and Jeong, Joo Seong and Kim, Geon-Woo and Kim, Soojeong and Chun, Byung-Gon},
  booktitle={16th USENIX Symposium on Operating Systems Design and Implementation (OSDI 22)},
  pages={521--538},
  year={2022}
}

@article{zhong2024distserve,
  title={Distserve: Disaggregating prefill and decoding for goodput-optimized large language model serving},
  author={Zhong, Yinmin and Liu, Shengyu and Chen, Junda and Hu, Jianbo and Zhu, Yibo and Liu, Xuanzhe and Jin, Xin and Zhang, Hao},
  journal={arXiv preprint arXiv:2401.09670},
  year={2024}
}

@incollection{albers2009online,
  title={Online scheduling},
  author={Albers, Susanne},
  booktitle={Introduction to scheduling},
  pages={71--98},
  year={2009},
  publisher={CRC Press}
}

@inproceedings{precSchedSchabanel,
  author       = {Julien Robert and
                  Nicolas Schabanel},
  editor       = {Shang{-}Hua Teng},
  title        = {Non-clairvoyant scheduling with precedence constraints},
  booktitle    = {Proceedings of the Nineteenth Annual {ACM-SIAM} Symposium on Discrete
                  Algorithms, {SODA} 2008, San Francisco, California, USA, January 20-22,
                  2008},
  pages        = {491--500},
  publisher    = {{SIAM}},
  year         = {2008},
  url          = {http://dl.acm.org/citation.cfm?id=1347082.1347136},
  bibsource    = {dblp computer science bibliography, https://dblp.org}
}

@article{schedPrecedence,
title = {On-line scheduling with precedence constraints},
journal = {Discrete Applied Mathematics},
volume = {119},
number = {1},
pages = {169-180},
year = {2002},
note = {Special Issue devoted to Foundation of Heuristics in Combinatoria l Optimization},
issn = {0166-218X},
doi = {https://doi.org/10.1016/S0166-218X(01)00272-4},
url = {https://www.sciencedirect.com/science/article/pii/S0166218X01002724},
author = {Yossi Azar and Leah Epstein}
}

@inproceedings{precSchedAnupam,
  author       = {Naveen Garg and
                  Anupam Gupta and
                  Amit Kumar and
                  Sahil Singla},
  editor       = {Christel Baier and
                  Ioannis Chatzigiannakis and
                  Paola Flocchini and
                  Stefano Leonardi},
  title        = {Non-Clairvoyant Precedence Constrained Scheduling},
  booktitle    = {46th International Colloquium on Automata, Languages, and Programming,
                  {ICALP} 2019, July 9-12, 2019, Patras, Greece},
  series       = {LIPIcs},
  volume       = {132},
  pages        = {63:1--63:14},
  publisher    = {Schloss Dagstuhl - Leibniz-Zentrum f{\"{u}}r Informatik},
  year         = {2019},
  url          = {https://doi.org/10.4230/LIPIcs.ICALP.2019.63},
  doi          = {10.4230/LIPICS.ICALP.2019.63},
  bibsource    = {dblp computer science bibliography, https://dblp.org}
}

@article{liu2015online,
  title={Online unbounded batch scheduling on parallel machines with delivery times},
  author={Liu, Peihai and Lu, Xiwen},
  journal={Journal of Combinatorial Optimization},
  volume={29},
  pages={228--236},
  year={2015},
  publisher={Springer}
}

@article{li2020online,
  title={Online batch scheduling of simple linear deteriorating jobs with incompatible families},
  author={Li, Wenhua and Wang, Libo and Chai, Xing and Yuan, Hang},
  journal={Mathematics},
  volume={8},
  number={2},
  pages={170},
  year={2020},
  publisher={MDPI}
}

@article{chen2008logistics,
  title={Logistics scheduling with batching and transportation},
  author={Chen, Bo and Lee, Chung-Yee},
  journal={European journal of operational research},
  volume={189},
  number={3},
  pages={871--876},
  year={2008},
  publisher={Elsevier}
}

@online{le2023dissecting,
  author = {Chen, Lequn},
  title = {Dissecting Batching Effects in GPT Inference},
  date = {2023-05-13},
  url = {https://le.qun.ch/en/blog/2023/05/13/transformer-batching/},
  organization = {le.qun.ch}
}

@misc{openai_api_pricing,
  author       = {{OpenAI}},
  title        = {{API Pricing}},
  year         = {2026},
  howpublished = {\url{https://openai.com/api/pricing/}},
  note         = {Accessed: 2026-06-10}
}

@misc{anthropic_api_pricing,
  author       = {{Anthropic}},
  title        = {{Pricing --- Claude API Docs}},
  year         = {2026},
  howpublished = {\url{https://platform.claude.com/docs/en/about-claude/pricing}},
  note         = {Accessed: 2026-06-10}
}

@misc{deepseek_api_pricing,
  author       = {{DeepSeek}},
  title        = {{Models \& Pricing --- DeepSeek API Docs}},
  year         = {2026},
  howpublished = {\url{https://api-docs.deepseek.com/quick_start/pricing}},
  note         = {Accessed: 2026-06-10}
}

@misc{google_gemini_api_pricing,
  author       = {{Google}},
  title        = {{Gemini Developer API Pricing}},
  year         = {2026},
  howpublished = {\url{https://ai.google.dev/gemini-api/docs/pricing}},
  note         = {Accessed: 2026-06-10}
}

@article{golrezaei2023online,
  title={Online resource allocation with convex-set machine-learned advice},
  author={Golrezaei, Negin and Jaillet, Patrick and Zhou, Zijie},
  journal={arXiv preprint arXiv:2306.12282},
  year={2023}
}

@article{lykouris2021competitive,
  title={Competitive caching with machine learned advice},
  author={Lykouris, Thodoris and Vassilvitskii, Sergei},
  journal={Journal of the ACM (JACM)},
  volume={68},
  number={4},
  pages={1--25},
  year={2021},
  publisher={ACM New York, NY}
}

\ECSwitch

\ECHead{Online Appendix}

\section{Supplementary Materials for Section \ref{sec:alg}} \label{append:model}

\subsection{Optimal Scheduling with Integer Programming (IP)}

We index discrete time by a finite horizon $H$ large enough to contain any feasible schedule (e.g., $H=\sum_{i=1}^n o_i$). Let $M$ be a big constant (e.g., $M=\max_i o_i$). Tasks $i\in\{1,\dots,n\}$; time steps $t\in\{1,\dots,H\}$.
Processing requirement $o_i\in\mathbb{Z}_{+}$; parallel capacity $B\in\mathbb{Z}_{+}$;
tolerance $\alpha\in\mathbb{Z}_{+}$.

Decision variables are defined as follows: $u_{i,t}\in\{0,1\}$: task $i$ starts at time $t$; $x_{i,t}\in\{0,1\}$: task $i$ is executing during time $t$; $s_{i,t}\in\mathbb{Z}_{+}$: cumulative progress of task $i$ at the start of time $t$; $T\in\mathbb{Z}_{+}$: makespan (last completion time step). And our objective is defined as:
\[
\min \; T
\]

Constraints are defined as follows: (1) Each task starts exactly once:
\[
\sum_{t=1}^{H-o_i+1} u_{i,t}=1,\quad \forall i.
\]

(2) Start-to-occupancy linkage (contiguous processing for $o_i$ steps once started):
\[
x_{i,t}=\sum_{\tau:\,\tau\le t\le \tau+o_i-1} u_{i,\tau},
\quad \forall i,\ \forall t=1,\dots,H.
\]

(3) Parallel capacity:
\[
\sum_{i=1}^{n} x_{i,t}\le B,\quad \forall t=1,\dots,H.
\]

(4) Progress definition:
\[
s_{i,t}=\sum_{h=1}^{t-1} x_{i,h},\quad \forall i,\ \forall t=1,\dots,H.
\]

(5) Makespan definition:
\[
T \;\ge\; \sum_{t=1}^{H-o_i+1} (t+o_i-1)\,u_{i,t},\quad \forall i.
\]

(6) Fairness Constraint: for any time step, any two executing tasks must satisfy $|s_i-s_j|\le \alpha$, activated only when both tasks execute:
\[
\begin{aligned}
s_{i,t}-s_{j,t} &\le \alpha + M\bigl(2-x_{i,t}-x_{j,t}\bigr),\\
s_{j,t}-s_{i,t} &\le \alpha + M\bigl(2-x_{i,t}-x_{j,t}\bigr),
\end{aligned}
\quad \forall t=1,\dots,H,\ \forall i<j.
\]

\section{Supplementary Materials for Section \ref{sec:alg}} \label{append:alg}

\begin{proof}{Proof of Theorem \ref{thm:robust-fairness}}
The fairness constraint concerns only the progress values $\{a_i(t)\}$ of \emph{currently active} jobs.
Robust-ISJL admits a new job of progress $0$ only when the Phase TWO admission condition is met, i.e.,
when the current maximum progress among active jobs satisfies $A_{\max}(t)\le \alpha$; hence the new
batch has range at most $\alpha$ at the admission time.
Between admissions, all jobs in the active batch advance in lockstep, so the max--min gap is invariant.
When a job completes, it is removed from the active set, and removing elements cannot increase the
max--min range. Therefore the fairness constraint holds at all times.
\end{proof}

\begin{proof}{Proof of Proposition~\ref{prop:online-isjl-fair}.}
New arrivals enter the pending queue and do not affect the active set until they
are admitted. Whenever Online-ISJL invokes Phase ONE, it applies the same packing
rule as Algorithm~\ref{alg:isj-lim-gen} to the currently released queue, so the
active batch constructed by Phase ONE satisfies the $\alpha$-fairness constraint.
Whenever Online-ISJL admits a request during Phase TWO, the request has progress
zero and is admitted only if the current maximum progress among active requests
satisfies $A_{\max}(t)\le \alpha$; hence the progress range after admission is
at most $\alpha$. Between admission epochs, all active requests advance in
lockstep, so the max--min progress gap is invariant. When a request completes,
it is removed from the active set, and removing an element cannot increase the
max--min range. Therefore the fairness constraint holds for all time steps.
\Halmos
\end{proof}

\begin{proof}{Proof of Corollary~\ref{cor:robust-scale-normalized}.}
By Theorem~\ref{thm:robust-cr},
\[
\frac{T_{\mathrm{OPT}}(\mathcal I)}
{T_{\mathrm{Robust\text{-}ISJL}}(\mathcal I)}
\ge
\frac{3}{4}\cdot
\frac{1}{1+\Delta/T_{\mathrm{OPT}}(\mathcal I)}.
\]
Any feasible schedule processes at most $B$ tokens per time step. Therefore,
\[
T_{\mathrm{OPT}}(\mathcal I)\ge \frac{O(\mathcal I)}{B}.
\]
It follows that
\[
\frac{\Delta}{T_{\mathrm{OPT}}(\mathcal I)}
\le
B\frac{\Delta}{O(\mathcal I)}
=
B\bar\varepsilon(\mathcal I).
\]
Substitution gives the first claim. If $u_i\le (1+\varepsilon)o_i$ for all $i$,
then $u_i-o_i\le \varepsilon o_i$ for all $i$, and hence
$\bar\varepsilon(\mathcal I)\le \varepsilon$. The second claim follows.
\Halmos
\end{proof}

\section{Supplementary Materials for Section \ref{sec:extension}} \label{append:extend}

\proof{Proof of Lemma~\ref{lem:gamma-over}}
Let $\alpha\in(0,o_1)$, set $\gamma=\alpha/o_1\in(0,1)$, and normalize $x_i=o_i/o_1$. 
In the Over-Inserting regime, the feasible region is
\[
1\ge x_2\ge x_3\ge x_4>0,\quad 
x_3<1-x_2+\gamma,\quad 
x_3<2\gamma,\quad 
1>\min\{2\gamma,\,x_3+\gamma\},\quad 
x_2+x_3>1,
\]
and the competitive ratio equals
\[
\mathrm{CR}_{\mathrm{over}}=\frac{1+x_3}{x_2+x_3+x_4}.
\]
For fixed $\gamma$, minimizing $\mathrm{CR}$ amounts to maximizing the denominator $x_2+x_3+x_4$ for each fixed $x_3$.

\medskip
\noindent\emph{Maximizing the denominator for fixed $x_3$.}
Since $x_2\ge x_3\ge x_4>0$, the optimal choice is $x_4^\star=x_3$. From
\[
x_2\le 1,\qquad x_3<1-x_2+\gamma\ \Longleftrightarrow\ x_2<1+\gamma-x_3,
\]
together with $x_2\ge x_3$ and $x_2+x_3>1$, the largest admissible $x_2$ is
\[
x_2^\star=\min\{1,\ 1+\gamma-x_3\}.
\]
Hence the maximized denominator for fixed $x_3$ is
\[
D(x_3)=x_2^\star+x_3+x_4^\star=\min\{1,\ 1+\gamma-x_3\}+2x_3=
\begin{cases}
1+2x_3, & x_3\le \gamma,\\
1+\gamma+x_3, & x_3\ge \gamma.
\end{cases}
\]
Define $\Phi(x_3):=(1+x_3)/D(x_3)$; then
\[
\Phi(x_3)=
\begin{cases}
\dfrac{1+x_3}{1+2x_3}, & x_3\le \gamma,\\[6pt]
\dfrac{1+x_3}{1+\gamma+x_3}, & x_3\ge \gamma.
\end{cases}
\]

\medskip
\noindent\emph{Feasible values of $x_3$.}
Besides $x_3>0$, feasibility requires $x_3<2\gamma$ and $1>\min\{2\gamma,x_3+\gamma\}$. This yields:
\[
x_3\in
\begin{cases}
(0,\gamma)\ \cup\ \bigl(\gamma,\ \min\{2\gamma,(1+\gamma)/2\}\bigr), & 0<\gamma<\tfrac12,\\[2pt]
(0,\tfrac12), & \gamma=\tfrac12,\\[2pt]
(0,1-\gamma), & \tfrac12<\gamma<1.
\end{cases}
\]

\medskip
\noindent\emph{One-dimensional minimization.}
\emph{Case $0<\gamma\le\tfrac12$.} On $x_3\le\gamma$,
\[
\Phi'(x_3)=\frac{(1+2x_3)-2(1+x_3)}{(1+2x_3)^2}=-\frac{1}{(1+2x_3)^2}<0,
\]
so $\Phi$ decreases; on $x_3\ge\gamma$,
\[
\Phi'(x_3)=\frac{\gamma}{(1+\gamma+x_3)^2}>0,
\]
so $\Phi$ increases. Therefore the infimum occurs at the junction $x_3=\gamma$:
\[
\inf \mathrm{CR}_{\mathrm{over}}=\frac{1+\gamma}{1+2\gamma}\qquad (0<\gamma\le\tfrac12).
\]

\emph{Case $\tfrac12\le\gamma<1$.} Only the branch $x_3\in(0,1-\gamma)\subseteq(0,\gamma]$ is feasible, where
\[
\Phi(x_3)=\frac{1+x_3}{1+2x_3},\qquad \Phi'(x_3)<0.
\]
Thus the infimum is attained as $x_3\uparrow(1-\gamma)$:
\[
\inf \mathrm{CR}_{\mathrm{over}}=\frac{1+(1-\gamma)}{1+2(1-\gamma)}=\frac{2-\gamma}{3-2\gamma}\qquad (\tfrac12\le\gamma<1).
\]

\medskip
\noindent\emph{Dependence on $\gamma$ and global minimum.}
Let $f_1(\gamma)=(1+\gamma)/(1+2\gamma)$ for $\gamma\in(0,\tfrac12]$ and
$f_2(\gamma)=(2-\gamma)/(3-2\gamma)$ for $\gamma\in[\tfrac12,1)$. Then
\[
f_1'(\gamma)=-\frac{1}{(1+2\gamma)^2}<0,\qquad
f_2'(\gamma)=\frac{1}{(3-2\gamma)^2}>0,
\]
so $f_1$ is strictly decreasing and $f_2$ strictly increasing. At $\gamma=\tfrac12$,
\[
f_1\!\left(\tfrac12\right)=\frac34=f_2\!\left(\tfrac12\right),
\]
hence $\mathrm{CR}_{\mathrm{over}}$ is continuous with a unique global minimum $3/4$. Moreover,
\[
\lim_{\gamma\to0^+}\frac{1+\gamma}{1+2\gamma}=1,\qquad
\lim_{\gamma\to1^-}\frac{2-\gamma}{3-2\gamma}=1.
\]

\medskip
\noindent\emph{Attainability by feasible sequences.}
Because the constraints are strict, the minimizers lie on the boundary but are approached by feasible sequences:
\begin{itemize}
\item If $0<\gamma\le\tfrac12$, take $x_2=1$, $x_4=x_3$, and let $x_3\uparrow\gamma$ with $x_3<\gamma$ to get
$\mathrm{CR}_{\mathrm{over}}=\frac{1+x_3}{1+2x_3}\downarrow \frac{1+\gamma}{1+2\gamma}$.
\item If $\tfrac12\le\gamma<1$, take $x_2=1$, $x_4=x_3$, and let $x_3\uparrow(1-\gamma)$ with $x_3<1-\gamma$ to get
$\mathrm{CR}_{\mathrm{over}}=\frac{1+x_3}{1+2x_3}\downarrow \frac{2-\gamma}{3-2\gamma}$.
\end{itemize}
This completes the proof.
\Halmos 
\endproof

\proof{Proof of Lemma~\ref{lem:gamma-under}}
By scale invariance of $\mathrm{CR}_{\textnormal{under}}$ under a common scaling of $(o_1,o_2,o_3,\alpha)$, normalize $x_i=o_i/o_1$. The feasibility constraints then read
\[
1-x_2+\gamma<x_3\le 2\gamma,
\qquad
1\ge x_2\ge x_3>0,
\]
where the order $x_2\ge x_3$ entails no loss for an adversarial analysis. For
\[
f(x_2,x_3)\coloneqq \frac{x_2+x_3}{1+x_3},
\]
we first observe that for fixed $x_2$,
\[
\frac{\partial f}{\partial x_3}
=\frac{1-x_2}{(1+x_3)^2}\ge 0,
\]
with equality only when $x_2=1$. Hence, for $x_2<1$, the function is strictly \emph{decreasing} in $x_3$; therefore, minimizing $f$ pushes $x_3$ to its largest feasible value, namely
\[
x_3=\min\{\,2\gamma,\ x_2\,\}.
\]
This leads to two regimes, according to which the upper bound is active.

\medskip
\noindent\emph{Regime A: $2\gamma\le x_2$.} Feasibility also requires $1-x_2+\gamma<2\gamma$, i.e.\ $x_2>1-\gamma$. Thus $x_2\in\bigl(\max\{1-\gamma,2\gamma\},1\bigr]$, and
\[
f(x_2,2\gamma)=\frac{x_2+2\gamma}{1+2\gamma}
\]
is strictly increasing in $x_2$. The infimum is attained in the limit at the smallest feasible $x_2$:
\[
\inf f=
\begin{cases}
\dfrac{1+\gamma}{1+2\gamma}, & 0<\gamma\le \tfrac{1}{3}\quad\text{(here $1-\gamma\ge 2\gamma$)},\\[6pt]
\dfrac{4\gamma}{1+2\gamma}, & \tfrac{1}{3}\le \gamma<1\quad\text{(here $1-\gamma\le 2\gamma$)}.
\end{cases}
\]

\medskip
\noindent\emph{Regime B: $x_2\le 2\gamma$.} The lower bound $1-x_2+\gamma<x_3=x_2$ forces $x_2>(1+\gamma)/2$, so $x_2\in\bigl((1+\gamma)/2,\,2\gamma\bigr]$, which is nonempty exactly when $\gamma\ge \tfrac{1}{3}$. With $x_3=x_2$ we have
\[
f(x_2,x_2)=\frac{2x_2}{1+x_2},
\]
strictly increasing in $x_2$, hence minimized in the limit as $x_2\downarrow(1+\gamma)/2$:
\[
\inf f=\frac{2(1+\gamma)}{3+\gamma}.
\]

\medskip
When $0<\gamma\le \tfrac{1}{3}$, Regime~A is the only feasible one, yielding $\mathrm{CR}_{\mathrm{under}}(\gamma)=\dfrac{1+\gamma}{1+2\gamma}$. When $\tfrac{1}{3}\le \gamma<1$, both regimes are feasible; comparing the two values,
\[
\frac{2(1+\gamma)}{3+\gamma}\le \frac{4\gamma}{1+2\gamma}
\iff (1+\gamma)(1+2\gamma)\le 2\gamma(3+\gamma)
\iff \gamma\ge \tfrac{1}{3},
\]
so Regime~B dominates on $[\tfrac{1}{3},1)$. The two expressions match at $\gamma=\tfrac{1}{3}$ with common value $4/5$.
Tightness in each regime follows by taking feasible limits to the identified boundary points.

Finally, the piecewise expression is unimodal in $\gamma$: on $(0,\tfrac{1}{3}]$,
\[
\frac{\mathrm{d}}{\mathrm{d}\gamma}\frac{1+\gamma}{1+2\gamma}
=-\frac{1}{(1+2\gamma)^2}<0,
\]
and on $[\tfrac{1}{3},1)$,
\[
\frac{\mathrm{d}}{\mathrm{d}\gamma}\frac{2(1+\gamma)}{3+\gamma}
=\frac{4}{(3+\gamma)^2}>0.
\]
Thus $\mathrm{CR}_{\mathrm{under}}(\gamma)$ strictly decreases on $(0,\tfrac{1}{3}]$, strictly increases on $[\tfrac{1}{3},1)$, achieves its unique global minimum $4/5$ at $\gamma=\tfrac{1}{3}$, and satisfies
\[
\lim_{\gamma\to 0^+}\frac{1+\gamma}{1+2\gamma}=1,
\qquad
\lim_{\gamma\to 1^-}\frac{2(1+\gamma)}{3+\gamma}=1.
\]
\Halmos 
\endproof

\section{Supplementary Materials for Section \ref{sec:profit-cost-alignment}} \label{append:profit}

\proof{Proof of Theorem~\ref{thm:profit-decomposition}.}
For each service step $t$, write
\[
m^{(t)}:=\max_{j\in S^{(t)}} r_j^{(t)}.
\]
Then
\[
|S^{(t)}|m^{(t)}
=
\sum_{i\in S^{(t)}} m^{(t)}
=
\sum_{i\in S^{(t)}} r_i^{(t)}
+
\sum_{i\in S^{(t)}}\left(m^{(t)}-r_i^{(t)}\right).
\]
Substituting this identity into~\eqref{eq:batch-cost-progress} and summing over
$t=1,\ldots,T_{\mathcal A}(\mathcal I)$ yields
\begin{align*}
C_{\mathcal A}(\mathcal I)
&=
 cT_{\mathcal A}(\mathcal I)
 +\tau\sum_{t=1}^{T_{\mathcal A}(\mathcal I)}\sum_{i\in S^{(t)}} r_i^{(t)}
 +\tau\sum_{t=1}^{T_{\mathcal A}(\mathcal I)}\sum_{i\in S^{(t)}}\left(m^{(t)}-r_i^{(t)}\right).
\end{align*}
The last term is exactly $\tau E_{\mathcal A}(\mathcal I)$. The middle term is
schedule-independent: request $i$ must process token indices
$1,2,\ldots,o_i$ exactly once before completion, regardless of the other
requests with which it is batched. Therefore
\[
\sum_{t=1}^{T_{\mathcal A}(\mathcal I)}\sum_{i\in S^{(t)}} r_i^{(t)}
=
\sum_{i=1}^n\sum_{k=1}^{o_i} k
=
Q(\mathcal I).
\]
Thus
\[
C_{\mathcal A}(\mathcal I)
=
 cT_{\mathcal A}(\mathcal I)+\tau Q(\mathcal I)+\tau E_{\mathcal A}(\mathcal I).
\]
Combining this equality with
$\Pi_{\mathcal A}(\mathcal I)=R(\mathcal I)-C_{\mathcal A}(\mathcal I)$ gives
\eqref{eq:profit-decomposition}.

It remains to prove the externality bound. If $\mathcal A$ is fair, for every active $i\in S^{(t)}$,
\[
0\le m^{(t)}-r_i^{(t)}\le \alpha.
\]
Hence
\[
E_{\mathcal A}(\mathcal I)
\le
\sum_{t=1}^{T_{\mathcal A}(\mathcal I)}\sum_{i\in S^{(t)}}\alpha
=
\alpha\sum_{t=1}^{T_{\mathcal A}(\mathcal I)} |S^{(t)}|.
\]
At each service step, every active request processes exactly one token, so
\[
\sum_{t=1}^{T_{\mathcal A}(\mathcal I)} |S^{(t)}|
=
\sum_{i=1}^n o_i
=
O(\mathcal I).
\]
Therefore $E_{\mathcal A}(\mathcal I)\le \alpha O(\mathcal I)$.
\Halmos
\endproof

\proof{Proof of Theorem~\ref{thm:isjl-profit-guarantee}.}
By Theorem~\ref{thm:profit-decomposition}, revenue and intrinsic resource cost
cancel when comparing two schedules on the same accepted instance. Hence
\begin{align*}
\Pi_\alpha^\star(\mathcal I)-\Pi_{\mathcal A}(\mathcal I)
&=
 c\left(T_{\mathcal A}(\mathcal I)-T_\pi^\star(\mathcal I)\right)
 +\tau\left(E_{\mathcal A}(\mathcal I)-E_{\mathcal A_\pi^\star}(\mathcal I)\right).
\end{align*}
Since $T_{\mathcal OPT}(\mathcal I)\le T_\pi^\star(\mathcal I)$, the competitive
ratio condition~\eqref{eq:rho-makespan-bound} implies
\[
T_{\mathcal A}(\mathcal I)
\le
\frac{1}{\rho}T_{\mathcal OPT}(\mathcal I)
\le
\frac{1}{\rho}T_\pi^\star(\mathcal I),
\]
and therefore
\[
T_{\mathcal A}(\mathcal I)-T_\pi^\star(\mathcal I)
\le
\left(\frac{1}{\rho}-1\right)T_\pi^\star(\mathcal I).
\]
Moreover $E_{\mathcal A_\pi^\star}(\mathcal I)\ge 0$, and because
$\mathcal A$ is fair, Theorem~\ref{thm:profit-decomposition} gives
$E_{\mathcal A}(\mathcal I)\le \alpha O(\mathcal I)$. Thus
\[
\Pi_\alpha^\star(\mathcal I)-\Pi_{\mathcal A}(\mathcal I)
\le
c\left(\frac{1}{\rho}-1\right)T_\pi^\star(\mathcal I)
+
\tau\alpha O(\mathcal I),
\]
which proves~\eqref{eq:generic-profit-guarantee}. Setting
$\rho=3/4$ for ISJL by Theorem~\ref{thm:cr-isj-lim-gen} gives
\eqref{eq:isjl-profit-guarantee}.
\Halmos
\endproof

\proof{Proof of Proposition~\ref{prop:pairwise-profit-comparison}.}
Equation~\eqref{eq:pairwise-profit-identity} follows immediately by subtracting
the decompositions in~\eqref{eq:profit-decomposition} for $\mathcal A$ and
$\mathcal B$; the fixed terms $R(\mathcal I)$ and $\tau Q(\mathcal I)$ cancel.

Under LJF, each static batch starts with equal progress on all active requests,
and no new request is admitted into that batch before the batch is rebuilt. While
requests in a batch are active, their progress values are equal; when shorter
requests complete, removing them from the active set cannot create a positive
max--min gap among the remaining active requests. Thus
$E_{\mathcal LJF}(\mathcal I)=0$. Taking
$(\mathcal A,\mathcal B)=(\mathcal ISJL,\mathcal LJF)$ in
\eqref{eq:pairwise-profit-identity} and using
$E_{\mathcal ISJL}(\mathcal I)\le \alpha O(\mathcal I)$ gives
\eqref{eq:isjl-ljf-profit-comparison}. Taking
$(\mathcal A,\mathcal B)=(\mathcal ISJL,\mathcal FCFS)$ and again using
$E_{\mathcal ISJL}(\mathcal I)\le \alpha O(\mathcal I)$ gives
\eqref{eq:isjl-fcfs-profit-comparison}.
\Halmos
\endproof

\proof{Proof of Proposition~\ref{prop:fcfs-quadratic-externality}.}
Under FCFS, the long request occupies one server for $L$ service steps. Since
there are exactly $L$ short requests and each has length $1$, FCFS places one
short request on the other server in each of these $L$ steps. Hence all work is
completed when the long request completes, so $T_{\mathcal FCFS}(\mathcal I)=L$.
At the $k$th service step, $k=1,\ldots,L$, the long request processes token index
$k$, while the short request processes token index $1$. Therefore the
externality in step $k$ is
\[
(k-k)+(k-1)=k-1.
\]
Summing over $k=1,\ldots,L$ gives
\[
E_{\mathcal FCFS}(\mathcal I)=\sum_{k=1}^L (k-1)=\frac{L(L-1)}{2}.
\]

Under LJF, the long request and one short request are placed in the first static
batch. They have equal progress during the first step; after the short request
completes, the long request runs alone until completion. The remaining $L-1$
short requests are then processed in batches of size two, requiring
$\lceil(L-1)/2\rceil$ additional steps. Thus
$T_{\mathcal LJF}(\mathcal I)=L+\lceil(L-1)/2\rceil$. Since every LJF batch starts
with equal progress and admits no new request while the batch is active,
$E_{\mathcal LJF}(\mathcal I)=0$. Finally,
\eqref{eq:ljf-fcfs-profit-adversarial} follows from the pairwise comparison
identity~\eqref{eq:pairwise-profit-identity}. Because the positive term in
\eqref{eq:ljf-fcfs-profit-adversarial} grows quadratically in $L$, whereas the
negative term grows only linearly, LJF is more profitable than FCFS for all
sufficiently large $L$.
\Halmos
\endproof

\proof{Proof of Corollary \ref{cor:makespan-certified-profit}}
By Theorem~\ref{thm:profit-decomposition}, the profit of schedule
$\mathcal A$ can be written as
\[
\Pi_{\mathcal A}(\mathcal I)
=
R(\mathcal I)-\tau Q(\mathcal I)
-cT_{\mathcal A}(\mathcal I)
-\tau E_{\mathcal A}(\mathcal I).
\]
If $\mathcal A$ is $\alpha$-fair, then
\[
E_{\mathcal A}(\mathcal I)\le \alpha O(\mathcal I).
\]
Substituting this bound into the profit decomposition gives
\[
\Pi_{\mathcal A}(\mathcal I)
\ge
R(\mathcal I)-\tau Q(\mathcal I)-\tau \alpha O(\mathcal I)
-cT_{\mathcal A}(\mathcal I).
\]
For a fixed workload $\mathcal I$ and fixed $\alpha$, the first three terms on
the right-hand side are independent of the schedule. Therefore, maximizing the
certified profit lower bound over $\alpha$-fair schedules is equivalent to
minimizing $T_{\mathcal A}(\mathcal I)$.
\Halmos
\endproof

\proof{Proof of Proposition \ref{prop:nested-alpha-frontier}}
Any schedule satisfying the $\alpha_1$-fairness constraint has within-batch
progress gap at most $\alpha_1$ at every time step. Since
$\alpha_1\le \alpha_2$, the same schedule also satisfies the
$\alpha_2$-fairness constraint. Hence
$\mathcal F_{\alpha_1}\subseteq \mathcal F_{\alpha_2}$. Minimizing makespan
over the larger feasible set cannot yield a larger optimum, so
$T_{\alpha_2}^\star(\mathcal I)\le T_{\alpha_1}^\star(\mathcal I)$. Since
$O(\mathcal I)$ is fixed for the instance, throughput
$O(\mathcal I)/T_\alpha^\star(\mathcal I)$ is weakly increasing in $\alpha$.
\Halmos
\endproof

\section{Supplementary Materials for Section \ref{sec:num}}\label{append:num}

Figure \ref{fig:distribution} displays the distribution of the number of tokens.

\begin{figure}[!ht]
\centering
\includegraphics[width=0.8\textwidth]{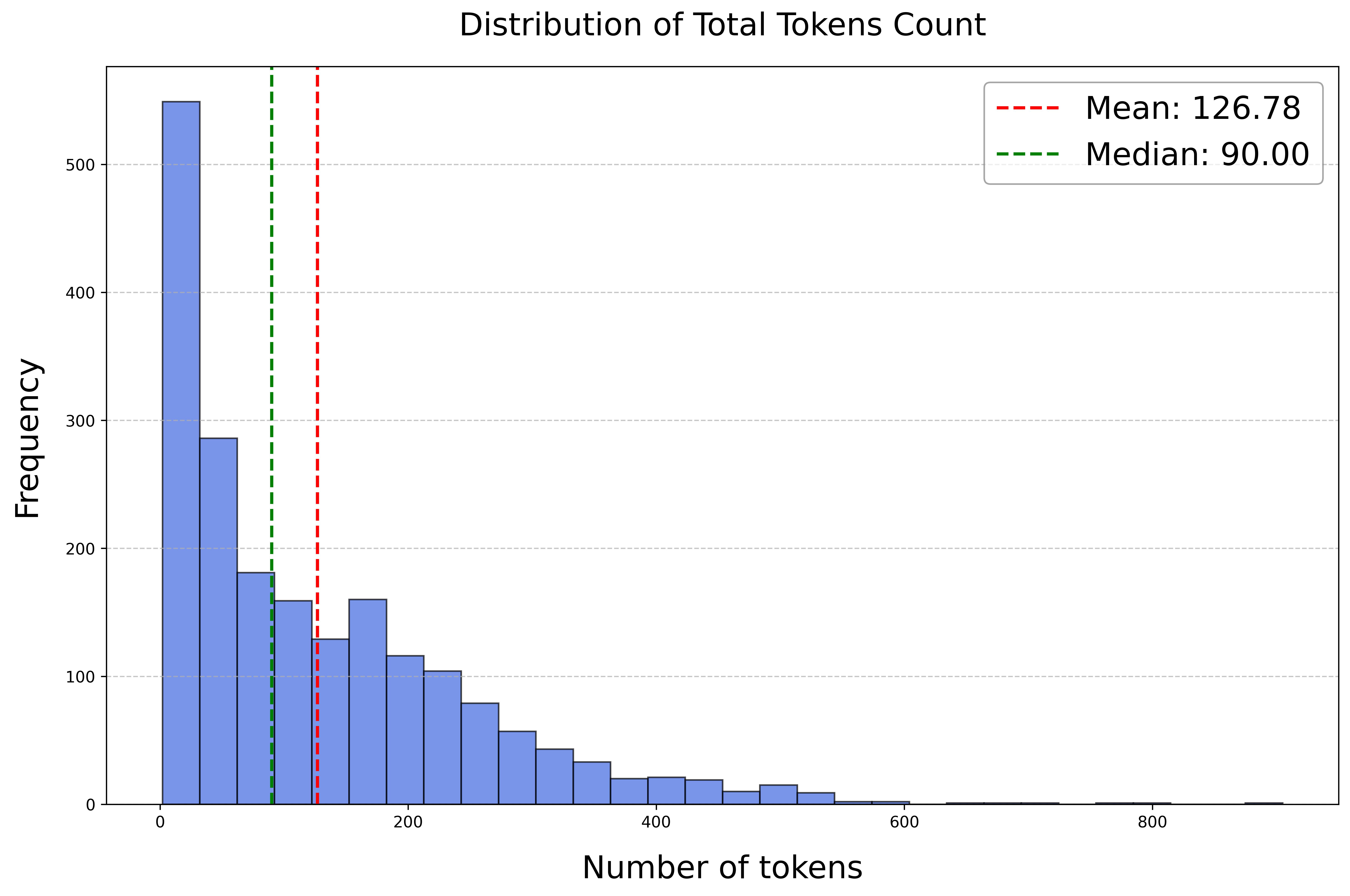}
\caption{Distribution of the Number of Tokens} 
\label{fig:distribution}
\end{figure}

%
%
%

\end{document}